\providecommand{\tabularnewline}{\\}
\numberwithin{equation}{section}
\numberwithin{figure}{section}
\theoremstyle{definition}
\newtheorem{defn}{\protect\definitionname}
\theoremstyle{remark}
\newtheorem{rem}{\protect\remarkname}
\theoremstyle{plain}
\newtheorem{prop}{\protect\propositionname}
\theoremstyle{plain}
\newtheorem{assumption}{\protect\assumptionname}
\theoremstyle{remark}
\newtheorem{claim}{\protect\claimname}
\theoremstyle{plain}
\newtheorem{cor}{\protect\corollaryname}
\theoremstyle{plain}
\newtheorem{fact}{\protect\factname}
\theoremstyle{definition}
 \newtheorem{example}{\protect\examplename}
\theoremstyle{plain}
\newtheorem{lem}{\protect\lemmaname}
\definecolor{green}{RGB}{00, 180, 00}
\definecolor{red}{RGB}{180, 00, 00}
\providecommand{\assumptionname}{Assumption}
\providecommand{\claimname}{Claim}
\providecommand{\corollaryname}{Corollary}
\providecommand{\definitionname}{Definition}
\providecommand{\examplename}{Example}
\providecommand{\factname}{Fact}
\providecommand{\lemmaname}{Lemma}
\providecommand{\propositionname}{Proposition}
\providecommand{\remarkname}{Remark}
\begin{document}
\title{Naive Analytics Equilibrium{\large{} (Draft)}}
\author{Ron Berman\thanks{The Wharton School, University of Pennsylvania. email: \protect\href{mailto:ronber@wharton.upenn.edu}{ronber@wharton.upenn.edu}}
\and Yuval Heller\thanks{Department of Economics, Bar-Ilan University. email: \protect\href{mailto:yuval.heller@biu.ac.il}{yuval.heller@biu.ac.il}. }
\thanks{We would like to express our deep gratitude to Eduardo Azevedo, Jorge
Al$\acute{\textrm{e}}$-Chilet, Arthur Fishman, Yossi Spiegel, Steve
Tadelis, Kanishka Misra, Kinshuk Jerath, Aviv Nevo, and seminar participants
at Bar-Ilan University, Tel Aviv University, the Hebrew University
of Jerusalem, University of California, Berkeley, the University of
Pennsylvania, the TSEInfo Conference and the Virtual Quant Marketing
Seminar for many useful comments. Yuval Heller is grateful to the
European Research Council for its financial support (Starting Grant
\#677057). }}
\maketitle
\begin{abstract}
\noindent We study interactions with uncertainty about demand sensitivity.
In our solution concept (1) firms choose seemingly-optimal strategies
given the level of sophistication of their data analytics, and (2)
the levels of sophistication form best responses to one another. Under
the ensuing equilibrium firms underestimate price elasticities and
overestimate advertising effectiveness, as observed empirically. The
misestimates cause firms to set prices too high and to over-advertise.
In games with strategic complements (substitutes), profits Pareto
dominate (are dominated by) those of the Nash equilibrium. Applying
the model to team production games explains the prevalence of overconfidence
among entrepreneurs and salespeople.\textbf{}\\
\textbf{Keywords}: Advertising, pricing, data analytics, strategic
distortion, strategic complements, indirect evolutionary approach.\textbf{}\\
\textbf{JEL Classification}: C73, D43, M37.
\end{abstract}

\section{Introduction}

Researchers often assume that better measurement and accurate estimations
of the sensitivity of demand allow firms to improve their advertising
and pricing decisions. Arriving at such accurate estimations requires
careful experimental techniques or sophisticated econometric methods
that correct for the endogeneity of decision variables in the empirically
observed data (see, e.g., \citealt{blake2015consumer,shapiro2019generalizable,gordon2019comparison,sinkinson2019ask}
who estimate advertising effectiveness, and \citealp{berry1994estimating,nevo2001measuring,ale2019trade}
who estimate price elasticities). 

Despite the emphasis on precision and unbiasedness by researchers,
many companies have been slow to adopt these techniques \citep{rao2019firms},
often questioning the benefit of causal inference and precise measurement.
This reluctance to measure effectiveness precisely is often attributed
to implementation difficulties, lack of knowledge and cognitive limitations
by decision makers, or moral hazard \citep{berman2018beyond,fredrik2020}.
Empirically, we often observe that firm advertising budget allocations
are consistent with over-estimation of advertising effectiveness (see,
e.g., \citealp{blake2015consumer,lewis2015unfavorable,golden2020effects}),
while pricing decisions are consistent with under-estimation of price
elasticities (see, e.g., \citealt{besanko1998logit,chintagunta2005beyond,villas1999endogeneity};
see also \citealp{hansen2020algorithmic} who demonstrate that common
AI pricing techniques induce ``too-high'' prices ).

In this paper we challenge the assumption that better estimates are
always beneficial for firms. Our results show that in many cases firms
are better off with biased, less precise, measurements because of
strategic considerations in oligopolistic markets. In equilibrium
firms will converge to biased measurements because their profits are
maximized when they act on these measurements. Moreover, the directions
of the biases, as predicted in our model, fit the empirically observed
behavior of firms well.

\paragraph{Highlights of the Model}

We present a model where the payoffs of competing players (firms)
each depend on her actions and on her demand, where the demand depends
on the actions of all players. The players do not know the demand
function, but can select actions and observe the realized demand.
The game has two stages. In stage 1 each player hires a (possibly
biased) analyst to estimate the sensitivity of demand. An analyst
may under- or over-estimate the sensitivity of demand. In stage 2
each player chooses an action taking the estimate as the true value. 

Our solution concept, called a Naive Analytics Equilibrium, is a
profile of analysts' biases and a profile of actions, such that (1)
each action is a perceived best-reply to the opponents' actions, given
the biased estimation, and (2) each bias is a best-reply to the opponents'
biases in the sense that if a player deviates to another bias this
leads to a new second-stage equilibrium, in which the deviator's (real)
profit is weakly lower than the original equilibrium payoff. The first-stage
best-replying is interpreted as the result of a gradual process in
which firms hire and fire analysts from a heterogeneous pool, and
each firm is more likely to fire its analyst if its profit is low.

\paragraph{Summary of Results}

Our model is general enough to capture price competition with differentiated
goods (where the goods can be either substitutes or complements),
as well as advertising competition (where the advertising budget of
one firm has externalities on the competitors' demand). Our results
show that firms hire biased analysts in any naive analytics equilibrium,
and that the direction of the bias is consistent with the empirically
observed behavior of firms: in price competition firms hire analysts
that under-estimate price elasticities, and in advertising competition
firms hire analysts who over-estimate the effectiveness of advertising.\footnote{Our general result on the direction of the bias is related to \citeauthor{heifetz2007dynamic}'s
\citeyearpar{heifetz2007dynamic} result in the setup of evolution
of subjective preferences. We discuss this relation in Remark \ref{rem:Heifetz-et-al}
in Section \ref{subsec:Direction-of-Analytics}.} 

We also show that there is a Pareto-domination relation between the
naive analytics equilibrium and the Nash equilibrium (of the game
without biases), where its direction depends on the type of strategic
complementarity. In a game with strategic complements (i.e., price
competition with differentiated goods) the naive analytics equilibrium
Pareto dominates the Nash equilibrium, while the opposite holds in
a game with strategic substitutes (i.e., advertising competition with
negative externalities). The intuition is that in a game with strategic
complements (resp., substitutes), each player hires a naive analyst
that induces a biased best reply in the direction that benefits (resp.,
harms) the opponents. This is so because these biases have a strategic
advantage of inducing the opponents to change their strategies in
the same (resp., opposite) direction, which is beneficial to the player. 

Next, we analyze a standard functional form in each type of competition,
and we explicitly characterize the unique naive analytics equilibrium
in price competitions and in advertising competitions. We also present
implications for market structure analysis when firms compete in prices
in Section \ref{sec:Implication-for-Merges}. Finally, we demonstrate
that our model can be applied in more general settings. Specifically,
in Section \ref{sec:Team-Production} we apply the model to a game
of team production with strategic complementarity (see, e.g., \citealp{holmstrom1982moral,cooper1988coordinating}).
We present two testable predictions in this setup: (1) players are
overconfident in the sense of overestimating their ability to contribute
to the team's output, and (2) players contribute more than in the
(unbiased) Nash equilibrium. These predictions are consistent with
the observable behavior of entrepreneurs and salespeople, who often
exhibit overconfidence.

\paragraph{Related Literature and Contribution}

From a theoretical aspect, our methodology of studying a two-stage
auxiliary game where each firm is first endowed with a biased analyst
and then chooses her pricing/advertising level given results of the
analysis is closely related to the literature on delegation (e.g.,
\citealp{fershtman1987equilibrium,fershtman1997unobserved,Dufwenberg_Guth_1999_EvoandDel,fershtman2001strategic})\footnote{See also the related literature on the ``indirect evolutionary approach''
(e.g., \citealp{GuethYaari1992Explaining,heifetz2004evolutionary,DekelElyEtAl2007Evolution,heifetz2007maximize,Herold_Kuzmics_2009,heller2019biased}).}. The delegation literature shows that in price competition, firm
owners would design incentives for managers that encourage the managers
to maximize profits as if the marginal costs are higher than their
true value (see, in particular, \citealp[p. 938]{fershtman1987equilibrium}). 

Our model contributes to this literature but also differs from it
in a few important aspects. First, in our setup the incentives of
all agents are aligned and are based solely on the firm's profit.
A deviation of the firm from profit-maximizing behavior is due to
(non-intentional) naive analytics, rather than due to explicitly distorting
the compensation of the firm's manager. Our novel mechanism is qualitatively
different (as it relies on biased estimations rather than different
incentives), and it induces testable predictions and policy implications
which are different than those induced by delegation (as further discussed
in Remark \ref{rem:An-alternative-interpretation}). Second an important
merit of our model is its generalizability to a wide variety of phenomena
and its applicability to wide class of games. The concept of biased
estimation of sensitivity of demand can be applied in many seemingly-unrelated
setups (e.g., price competition, advertising competition, and team
production), while yielding sharp results about the direction of the
observed biases as well as their magnitudes.

Our research is also related to solution concepts that represent agents
with misconceptions (e.g., conjectural equilibrium (\citealp{battigalli1997conjectural};
\citealp{esponda2013rationalizable}), self-confirming equilibrium
(\citealp{fudenberg1993self}), analogy-based expectation equilibrium
(\citealp{Jehiel2005}), cursed equilibrium (\citealp{eyster2005cursed,antler2019searching}),
coarse reasoning and categorization (\citealp{azrieli2009categorizing,azrieli2010categorization,steiner2015price,heller2016rule}),
Berk-Nash equilibrium (\citealp{esponda2016berk}), rational inattention
(\citealp{steiner2017rational}), causal misconceptions (\citealp{spiegler2017data,spiegler2019behavioral}),
and noisy belief equilibrium (\citealp{friedman2018stochastic}).
These equilibrium notions have been helpful in understanding strategic
behavior in various setups, and yet they pose a conceptual challenge:
why do players not eventually learn to correct their misconceptions?
Much of the literature presenting such models points to cognitive
limitations as the source of this rigidity. Our model and analysis
offer an additional perspective to this issue by suggesting that misperceptions,
such as naive analytics, may yield a strategic advantage and are likely
to emerge in equilibrium. In this sense our approach can be viewed
as providing a tool to explain why some misconceptions persist while
others do not.

\paragraph{Structure}

Section \ref{sec:motivation} presents a motivating example. In Section
\ref{sec:Model-and-Solution} we describe our model and solution concept.
Our main results are presented in Section \ref{sec:General-Results}.
Section \ref{sec:Applications} analyzes three applications: price
competition, advertising competition, and team-production game. 
Section \ref{sec:Implication-for-Merges} demonstrates the implications
of our model on analysis of market structures in oligopolies. The
main text contains proof sketches and formal proofs are relegated
to the appendix. 

\section{\label{sec:motivation}Motivating Example}

Consider two firms $i\in\left\{ 1,2\right\} $ each selling a product
with price $x_{i}\in\mathbb{R}_{+}$. The products are substitute
goods. The demand of firm $i\in\left\{ 1,2\right\} $ at day $t$
is given by:
\[
q_{it}\left(x_{i},x_{-i}\right)=20-x_{i}+0.8x_{-i}+z_{it},\,\,\,\,\,\,\textrm{with }z_{it}\sim\begin{cases}
\epsilon & 0.5\\
-\epsilon & 0.5\text{,}
\end{cases}
\]
where $-i$ denotes the other firm. That is, the expected demand follows
Bertrand competition with differentiated goods, and the daily demand
of each firm has a random i.i.d demand shock, with value either $\epsilon$
or $-\epsilon$ with equal probability. We assume that the marginal
costs are zero, which implies that the profit of each firm is given
by its revenue $\pi_{it}\left(x_{i},q_{it}\right)=x_{i}\cdot q_{it}$. 

The firm managers do not know their demand functions, and they hire
analysts to estimate the sensitivity of demand to price, in order
to find the optimal price. The analyst at each firm asks the firm's
employees to experiment for a couple of weeks with offering a discount
of $\Delta x$ in some of the days, and uses the average change in
demand $\Delta q$ between days with and without the discount to estimate
the elasticity of demand.

Importantly, the firm's employees do not choose the days with discounts
uniformly. The employees observe in each morning a signal that reveals
the demand shock (say, the daily weather), and they implement discounts
on days of low demand, possibly due to the employees having more free
time in these days to deal with posting the discounted price. Setting
discounts in this way creates price endogeneity through a correlation
between prices $x_{i}$ and demand shocks $z_{it}$. 

There are two types of analysts: naive and sophisticated. A naive
analyst does not monitor in which days the employees choose to give
a discount, and implicitly assumes in his econometric analysis that
the environment is the same in the days with discounts as in those
without discounts. In contrast, sophisticated analysts either monitor
the discount decisions to enforce a uniform distribution of discounts,
or correct the correlation between the weather and discounts in their
econometric analysis (e.g., by controlling for the weather).

A sophisticated analyst correctly estimates the mean change in demand
\[
\Delta q_{i}=\left(20-x_{i}+0.8x_{-i}\right)-\left(20-\left(x_{i}-\Delta x\right)+0.8x_{-i}\right)=-\Delta x,
\]
 and thus he accurately estimates the elasticity of demand 
\[
\eta_{i}=-\frac{x_{i}}{q_{i}}\frac{\Delta q}{\Delta x}=-\frac{x_{i}}{q_{i}}\frac{\left(-\Delta x\right)}{\Delta x}=\frac{x_{i}}{q_{i}}.
\]
By contrast, a naive analyst underestimates the mean change in demand
to be:
\[
\Delta q_{i,\textrm{naive}}=\left(20-x_{i}+0.8x_{-i}+\epsilon\right)-\left(20-\left(x_{i}-\Delta x\right)+0.8x_{-i}-\epsilon\right)=-\Delta x+2\epsilon,
\]
 and thus underestimates the elasticity of demand to be
\[
\eta_{i,\textrm{naive}}=\frac{x_{i}}{q_{i}}\frac{\left(\Delta x-2\epsilon\right)}{\Delta x}\equiv\frac{x_{i}}{q_{i}}\alpha_{i},
\]
where we denote $\alpha_{i}=\frac{\left(\Delta x-2\epsilon\right)}{\Delta x}$.

Assume, for example, that $\Delta x$ and $\epsilon$ are such that
$\alpha_{i}=0.6$ (which is the optimal level of naivete as analyzed
in Section \ref{subsec:Price-Competition}). If each firm adjusts
prices according to the estimated elasticity (i.e., slightly increasing
the price if the estimated elasticity is more than 1, and slightly
decreasing the price if less than 1), then the prices converge to
a unique equilibrium in which the estimated elasticity of each firm
is equal to one. Table \ref{tab:Equilibrium-Prices,-Demands} presents
the prices, demands, and profits as a function of the type of analyst
hired by each firm (the calculations are a special case of the analysis
of Section \ref{subsec:Price-Competition}).

\begin{table}[h]
\caption{\label{tab:Equilibrium-Prices,-Demands}Equilibrium prices, demands
and profits as a function of the analysts' types }

\begin{tabular}{|c|c|c|}
\hline 
\multicolumn{3}{|c|}{Prices $\left(x_{1},x_{2}\right)$}\tabularnewline
\hline 
\textcolor{blue}{$\alpha_{1}$} \textbackslash\textcolor{red}{{} $\alpha_{2}$} & \textcolor{red}{1} & \textcolor{red}{0.6}\tabularnewline
\hline 
\textcolor{blue}{1} & \textcolor{blue}{17},~\textcolor{red}{17} & \textcolor{blue}{19},~\textcolor{red}{22}\tabularnewline
\hline 
\textcolor{blue}{0.6} & \textcolor{blue}{22},~\textcolor{red}{19} & \textcolor{blue}{25},\textcolor{red}{25}\tabularnewline
\hline 
\end{tabular}~~~%
\begin{tabular}{|c|c|c|}
\hline 
\multicolumn{3}{|c|}{Demands $\left(q_{1},q_{2}\right)$}\tabularnewline
\hline 
\textcolor{blue}{$\alpha_{1}$} \textbackslash\textcolor{red}{{} $\alpha_{2}$} & \textcolor{red}{1} & \textcolor{red}{0.6}\tabularnewline
\hline 
\textcolor{blue}{1} & \textcolor{blue}{17},~\textcolor{red}{17} & \textcolor{blue}{19},~\textcolor{red}{13}\tabularnewline
\hline 
\textcolor{blue}{0.6} & \textcolor{blue}{13},~\textcolor{red}{19} & \textcolor{blue}{15},\textcolor{red}{15}\tabularnewline
\hline 
\end{tabular}~~~%
\begin{tabular}{|c|c|c|}
\hline 
\multicolumn{3}{|c|}{Profits $\left(\pi_{1},\pi_{2}\right)$}\tabularnewline
\hline 
\textcolor{blue}{$\alpha_{1}$} \textbackslash\textcolor{red}{{} $\alpha_{2}$} & \textcolor{red}{1} & \textcolor{red}{0.6}\tabularnewline
\hline 
\textcolor{blue}{1} & \textcolor{blue}{277},~\textcolor{red}{277} & \textcolor{blue}{351},~\textcolor{red}{287}\tabularnewline
\hline 
\textcolor{blue}{0.6} & \textcolor{blue}{287},~\textcolor{red}{351} & \textcolor{blue}{375},~\textcolor{red}{375}\tabularnewline
\hline 
\end{tabular}
\end{table}

Observe that each firm's profit increases when the firm hires a naive
analyst, and decreases when it hires a sophisticated analyst, regardless
of the type of analyst hired by the competing firm. The intuition
is that a naive analyst induces a firm to underestimate the elasticity
of demand, and as a result, to raise prices. This has a beneficial
indirect strategic effect of inducing the competitor to increase prices
as well. It turns out that the positive indirect effect outweighs
the negative direct effect. Thus, if firms occasionally replace their
analysts based on their annual profits (i.e., they are more likely
to fire an analyst the lower the profit is), then the firms are likely
to converge to both hiring naive analysts. This would induce both
firms to choose higher prices and have higher profits relative to
the Nash equilibrium prices arising with sophisticated analysts.

Our formal results show that these insights hold in a general model.
Specifically, we show that in a large class of strategic interactions
(incorporating both price competition and advertising competition,
as well as both strategic complements and strategic substitutes) players
(i.e., firms) choose to hire naive analysts. These naive analysts
underestimate elasticity of demand in price competition, while they
overestimate the effectiveness of advertising in advertising competition.
Finally, we show that the equilibrium induced by naive analysts Pareto
dominates the Nash equilibrium with sophisticated analysts in a game
with strategic complements, while it is Pareto dominated by the Nash
equilibrium in a game with strategic substitutes.

\section{Model and Solution Concept\label{sec:Model-and-Solution}}

We introduce an analytics game in which competing firms hire analysts
to estimate the sensitivity of demand, which is then used to determine
the strategic choices of the firm. Next we present our solution concept
of a naive analytics equilibrium.

\subsection{Underlying Game\label{subsec:Underlying-Game}}

A\emph{ naive-analytics game} $\Gamma=\left(G,A,f\right)$ is a two
stage game in which each of $N=\left\{ 1,2,...,n\right\} $ players
(firms) hires an analyst who estimates the sensitivity of demand in
the first stage (with a bias from a set of feasible biases $A$, which
corresponds to biased estimation functions in $f$, as described in
Section \ref{subsec:RIDE-and--Equilibrium}) and then makes a strategic
choice that affects demand in the second stage. We first describe
the structure of the second stage, which we call the \textit{underlying
game} and denote by $G=\left(N,X,q,\pi\right)$. In the underlying
game each firm $i\in N$ makes a strategic choice $x_{i}\in X_{i}$
that affects the demands and the profits of all firms, where $X_{i}\subseteq\mathbb{R}$
is a (possibly unbounded) interval of feasible choices of firm $i$.
The interpretation of $x_{i}$ depends on the application, e.g.,:
(1) it is equal to the price set by the firm in the motivating example,
and (2) it is equal to the advertising budget in the application of
Section \ref{subsec:Advertising-Competition}. 

We define the following notation: $X=\prod_{i\in N}X_{i}$ is the
Cartesian product of these intervals. $Int\left(X_{i}\right)$ (resp.,
$Int\left(X\right))$ denotes the interior of $X_{i}$ (Resp., $X$).
$-i\equiv N\backslash\left\{ i\right\} $ denotes all firms except
firm $i$ and $-ij\equiv N\backslash\left\{ i,j\right\} $ denotes
all firms except $i$ and $j$. In two-player games $-i$ denotes
the opponent of firm $i$. $\left(x'_{i},x_{-i}\right)$ denotes the
strategy profile in which player $i$ plays strategy $x'_{i}$, while
all other players play according to the profile $x_{-i}$ (and we
apply a similar notation for $x_{-ij}$). Finally, $q_{i}\left(x\right)$
denotes the demand of firm $i$. 

The (true) payoff, or profit, of each firm $i\in N$, denoted by $\pi_{i}\left(x_{i},q_{i}\left(x\right)\right)$,
depends on the firm's demand $q_{i}\left(x\right)$ and its strategic
choice $x_{i}$. We assume that the demand functions $q_{i}\left(x\right)$
and payoff functions $\pi_{i}\left(x_{i},q_{i}\right)$ of all firms
are continuously twice differentiable in all parameters in $Int\left(X\right)$. 

A necessary condition for a strategy $x_{i}\in Int\left(X_{i}\right)$
to be a best reply to the opponents' strategy profile is that it satisfies
the following first-order condition:

\begin{equation}
\frac{d\pi_{i}}{dx_{i}}=\underbrace{\frac{\partial\pi_{i}}{\partial x_{i}}}_{(i)}+\underbrace{\frac{\partial\pi_{i}}{\partial q_{i}}}_{(ii)}\cdot\underbrace{\frac{\partial q_{i}}{\partial x_{i}}}_{(iii)}=0.\label{eq:foc}
\end{equation}
Strategy $x_{i}\in Int\left(X_{i}\right)$ is a \emph{local best reply}
to an opponents' profile $x_{-i}$, if there is an open interval around
$x_{i}$ such that $\pi_{i}\left(x_{i},x_{-i}\right)>\pi_{i}\left(x'_{i},x_{-i}\right)$
for each $x'_{i}$ in this interval. A sufficient condition for a
strategy satisfying the FOC \eqref{eq:foc} to be a local best reply
is the second-order condition $\frac{d^{2}\pi_{i}\left(x\right)}{dx_{i}}<0$.

\subsection{Choice of Analysts and $\alpha$-Equilibrium\label{subsec:RIDE-and--Equilibrium}}

In order to maximize their profits when choosing $x_{i}$, firms need
to know or estimate the impact of their actions on their profits.
Corresponding to the numbering of terms in \eqref{eq:foc}, we assume
that each firm knows (or correctly estimates): (i) the direct effect
of its strategy on its profit $\frac{\partial\pi_{i}}{\partial x_{i}}$;
and (ii) the effect of the firm's demand on its profit $\frac{\partial\pi_{i}}{\partial q_{i}}$.
By contrast, we assume it is difficult for the firm to estimate (iii)
the response of its demand to marginal changes in its strategy, i.e.,
to estimate $\frac{\partial q_{i}}{\partial x_{i}}$. For example,
during price competition firms know how their product's prices affect
their profit margins and how demand affects profit, but might not
know how sensitive consumers might be to price changes. Similarly,
in advertising competition (Section \ref{subsec:Advertising-Competition})
firms know how increasing advertising spending affects their bottom
line costs, but might not know the impact of their advertising on
demand. Each firm therefore hires an analyst in the first stage who
is tasked with estimating $\frac{\partial q_{i}}{\partial x_{i}}$.

Let $A\subseteq\mathbb{R}_{++}$ denote the interval of feasible biases
of analysts, and we assume that $A$ includes an open interval around
$1$. Each analyst is characterized by a bias $\alpha_{i}\in A\subseteq\mathbb{R}_{++}$
that causes him to estimate the marginal effect of the strategy $x_{i}$
on demand $q_{i}$ as $f\left(\frac{\partial q_{i}}{\partial x_{i}},\alpha_{i}\right)$
instead of $\frac{\partial q_{i}}{\partial x_{i}}$, where the bias
function $f$ is continuously differentiable in both parameters and
is strictly increasing in $\frac{\partial q_{i}}{\partial x_{i}}$.
We are interested in situations in which it is clear what the sign
of $\frac{\partial q_{i}}{\partial x_{i}}$ is. Thus, we assume that
$f\left(\frac{\partial q_{i}}{\partial x_{i}},\alpha_{i}\right)$
has the same sign as $\frac{\partial q_{i}}{\partial x_{i}}$, and
that $\left|f\left(\frac{\partial q_{i}}{\partial x_{i}},\alpha_{i}\right)\right|$
is increasing in $\alpha_{i}$.\footnote{The definitions and results can be adapted to a setup in which the
analyst's bias is the opposite sign of the sensitivity of demand,
i.e., a negative $\alpha_{i}$ for which the sign of $f\left(\frac{\partial q_{i}}{\partial x_{i}},\alpha_{i}\right)$
is the opposite of the sign of $\frac{\partial q_{i}}{\partial x_{i}}$.
The outcome would be that in any naive analytics equilibrium firms
hire analysts who are biased only with respect to the magnitude of
$\frac{\partial q_{i}}{\partial x_{i}}$ but not with respect to its
sign (i.e., firms hire only analysts with positive $\alpha_{i}$-s).} We normalize $\alpha_{i}=1$ to represent a calibrated (non-biased)
analyst, i.e., $f\left(\frac{\partial q_{i}}{\partial x_{i}},1\right)=\frac{\partial q_{i}}{\partial x_{i}}$.
In the applications presented later in the paper, we will assume that
the bias is multiplicative: $f\left(\frac{\partial q_{i}}{\partial x_{i}},\alpha_{i}\right)=\alpha_{i}\cdot\frac{\partial q_{i}}{\partial x_{i}}$.
We denote this multiplicative bias by $f_{m}$.

Let $\frac{\partial q_{i}^{\alpha_{i}}}{\partial x_{i}}\equiv f\left(\frac{\partial q_{i}}{\partial x_{i}},\alpha_{i}\right)$
denote the biased estimation of analyst with bias $\alpha_{i}$. We
denote the bias profile of all analysts by $\alpha=(\alpha_{1},\ldots,\alpha_{n})$.
An $\alpha_{i}$-analyst induces the firm to choose $x_{i}$ that
solves the biased first-order and second-order conditions
\begin{equation}
\frac{d\pi_{i}^{\alpha_{i}}\left(x\right)}{dx_{i}}\equiv\frac{\partial\pi_{i}}{\partial x_{i}}+\frac{\partial\pi_{i}}{\partial q_{i}}\cdot\frac{\partial q_{i}^{\alpha_{i}}}{\partial x_{i}}=0,\,\,\,\,\,\,\,\,\,\frac{d^{2}\pi_{i}^{\alpha_{i}}\left(x\right)}{dx_{i}}\equiv\frac{d}{dx_{i}}\left(\frac{d\pi_{i}^{\alpha_{i}}\left(x\right)}{dx_{i}^{2}}\right)<0,\label{eq:biased-FOC-SOC}
\end{equation}
instead of the unbiased conditions in \eqref{eq:foc}.

There are many reasons why analysts might be biased. One example is
inadvertently creating endogenous correlation between the firm's strategy
and demand without taking this correlation into account in the analysis.
If a firm sets lower ``sale'' prices on days of low demand and
higher regular prices on days of high demand, estimating price elasticities
using the resulting data will show that consumers are less price sensitive
than they actually are (as in Section \ref{sec:motivation}). Another
example is when firms set their advertising budgets differently in
specific times such as before holidays, or weekends. This would create
correlation in the levels of advertising with those of competitors.
Ignoring this correlation during analysis may lead to a biased estimate
of advertising effectiveness. We present micro-foundations for biased
analytics towards the end of Section \ref{subsec:Price-Competition}
(price competition) and Section \ref{subsec:Advertising-Competition}
(advertising).

Next we define how the firms play in the second-stage, given the analysts'
biased profile $\alpha$. A strategy profile is an $\alpha$-equilibrium
of the game $\Gamma$ if each player believes (based on her analyst's
estimated sensitivity of demand) that any small change to its strategy
would decrease her perceived payoff. Formally
\begin{defn}[\textbf{$\mathrm{\alpha}$}-Equilibrium]
\label{def:alpha-eq}Fix a biasedness profile $\alpha\in A^{n}$.
A strategy profile $x$ is an $\alpha$-\emph{equilibrium} if (1)
$\frac{d\pi_{i}^{\alpha}\left(x\right)}{dx_{i}}=0$, and (2) $\frac{d^{2}\pi_{i}^{\alpha}\left(x\right)}{dx_{i}^{2}}<0$. 
\end{defn}

\subsection{Naive Analytics Equilibrium (NAE)\label{subsec:Naive-Analytics-Equilibrium-(NAE}}

In what follows we define our main solution concept. A naive analytics
equilibrium (NAE) is a bias profile and a strategy profile, where
(1) the strategy profile is an $\alpha$-equilibrium, and (2) each
bias is a best-reply to the opponents' bias profile (i.e., any unilateral
deviation to another bias would induce a new $\alpha$-equilibrium
with a weakly lower payoff to the deviator). 
\begin{defn}
\label{def:sloppy-analytics-eq}A \emph{naive analytics equilibrium
}of $\Gamma=\left(G,A\right)$ is a pair $\left(\alpha^{*},x^{*}\right)$,
where:
\begin{enumerate}
\item $x^{*}\in X$ is an $\alpha^{*}$-equilibrium of the underlying game
$G$.
\item $\pi_{i}\left(x^{*}\right)\geq\pi_{i}\left(x'\right)$ for each $i\in N$,
each $\alpha'_{i}\in A$, and each $\left(\alpha'_{i},\alpha_{-i}^{*}\right)$-equilibrium
$x'$.
\end{enumerate}
\end{defn}
We do not interpret the equilibrium behavior in the first-stage as
the result of an explicit maximization of sophisticated firms who
know the demand function and calculate the optimal $\alpha$-s for
their analysts. Conversely, we think of the firms as having substantial
uncertainty about the demand function and its dependence on the behavior
of various competitors. Due to this uncertainty, the firms hire analysts
to estimate the sensitivity of demand. Occasionally (say, at the end
of each year) firms consider replacing the current analyst with a
new one (say, with a new random value of $\alpha_{i}$), and a firm
is more likely to do so the lower its profit is. If after several
months it turns out that the new analyst decreases the firm's profit,
then the firm is likely to re-hire the old analyst. Gradually, such
a process would induce the firms to converge to hiring most of the
time analysts with values of $\alpha$ that are best replies to each
other, and thus constitute a naive analytics equilibrium $\left(\alpha,x\right)$.
Various existing static solution concepts are motivated by assuming
an analogous dynamic convergence process, see, e.g., \citet{huck1999indirect,DekelElyEtAl2007Evolution,winter2017mental,frenkel2018endowment}.

It is important to note that the observed data does not contradict
the optimality of the strategic choices of the firms or the correctness
of the estimations of the sensitivity of demand of the analysts. Consider,
for example, a naive analytics equilibrium $\left(\alpha,x\right)$
in the price competition described in Section \ref{sec:motivation}.
A firm that wishes to confirm that its price is indeed optimal (i.e.,
that it maximizes its profit given the demand) is likely to experiment
with temporary changes in prices to see its influence on demand. Under
the arguably plausible assumption that the analysis of such an experiment
will be done with the same level of sophistication as the one leading
to $\left(\alpha,x\right)$, the firm's conclusion from the experiment
would be that the sensitivity of demand is exactly as estimated by
the firm's naive analyst, and that the firm's equilibrium strategy
is optimal (e.g., it induces elasticity of $-1$ in the motivating
example, and thus maximizes the firm's profit). 

A firm in a naive analytics equilibrium that will deviate in the direction
that increases profit, will induce a short-run increase in its expected
profit; yet this temporary increase in profits might be difficult
to measure in noisy environments. Moreover, as soon as the competitors
observe the change in their demands (due to the firm's deviation),
they are likely to adjust their strategies toward their new (perceived)
best replies. Following this adjustment, the deviating firm's profit
will decrease (which is formalized in Proposition \ref{pro:stackelberg-leader}
below).
\begin{rem}[Delegation interpretation]
\label{rem:An-alternative-interpretation}An alternative interpretation
of our model (which we do not use in the paper) is of delegation.
Let $\pi_{i}^{\alpha_{i}}:X\rightarrow\mathbb{R}$ be a subjective
payoff function such that maximizing $\pi_{i}^{\alpha_{i}}$ with
an unbiased estimation is equivalent to maximizing $\pi_{i}$ with
a biased estimation of $\alpha_{i}$, i.e., for any strategy profile
$x\in X$ 
\[
x\textrm{ is an \ensuremath{\alpha}-equilibrium }\,\,\textrm{iff}\,\,\pi_{i}^{\alpha_{i}}\left(x\right)=\textrm{argmax}_{x'_{i}\in X_{i}}\pi_{i}^{\alpha_{i}}\left(x'_{i},x_{-i}\right)\,\,\forall i\in N.
\]
 Let $\Pi_{i}^{A}=\left\{ \pi_{i}^{\alpha_{i}}|\alpha_{i}\in A\right\} $
be the set of all such subjective payoff function. One can reinterpret
a NAE as an equilibrium of a delegation game \citep{fershtman1987equilibrium}
where in the first stage each firm's owner simultaneously chooses
a payment scheme for its manager, which induces the manager with a
subjective payoff function in $\Pi_{i}^{A}$. In the second stage
the managers play a Nash equilibrium of the game induced by the subjective
payoffs. Although, both interpretations (namely, naive analytics and
delegation) yield the same prediction about the equilibrium strategy
profile, they differ in other testable predictions, as well as with
their insights and policy implications (as discussed in Section \ref{sec:Conclusion}).
For concreteness, we focus on the comparison for price competition
(as in the motivating example). The delegation interpretation predicts
firms will correctly estimate the elasticity of demand and pay managers
a payoff that increases in the firm profit and decreases in the firm's
sales (see, \citealp[p. 938]{fershtman1987equilibrium}). The naive
analytics interpretation predicts that firms will hire naive analysts
that overestimate elasticity of demand, with a manager's payoff scheme
that depends directly on the firm's profit. As mentioned elsewhere
in the paper, we believe this latter prediction is more consistent
with the empirically observed behavior of firms.
\end{rem}

\subsection{Robustness to Unilateral Sophistication\label{subsec:Robustness-to-Unilateral}}

In this section we show that a naive analytics equilibrium is robust
to any of the players unilaterally becoming patient and sophisticated,
in the sense that each player plays the same strategy she would have
played if she were an unbiased, sophisticated, and patient player,
who plays the strategy that will maximize her payoff, given the biased
responses of her opponents (i.e., her Stackelberg-leader strategy).
Such a sophisticated player might represent a scenario, for example,
where an E-commerce retailer sells goods to consumers, but also allows
other third-party sellers to sell competing goods through the retailer's
website. 

We begin by defining an $\left(x_{i},\alpha_{-i}\right)$-equilibrium
as a profile in which player $i$ plays $x_{i}$, all players except
for $i$ play a perceived best-reply given their biases, and $x_{i}$
is a perceived best reply to some feasible bias in $A$. Formally 
\begin{defn}
Fix player $i\in N$, strategy $x_{i}\in X_{i}$, and bias profile
$\alpha_{-i}\in A^{n-1}$. Strategy profile $\left(x_{i},x_{-i}\right)\in X$
is an \emph{$\left(x_{i},\alpha_{-i}\right)$-equilibrium} if (1)
$\frac{d\pi_{j}^{\alpha_{j}}\left(x\right)}{dx_{j}}=0$ and $\frac{d^{2}\pi_{j}^{\alpha_{j}}\left(x\right)}{dx_{j}^{2}}<0$
for each player $j\neq i$, and (2) there exists $\alpha_{i}\in A$
such that $\frac{d\pi_{i}^{\alpha_{i}}\left(x\right)}{dx_{i}}=0$
and $\frac{d^{2}\pi_{i}^{\alpha_{i}}\left(x\right)}{dx_{i}^{2}}<0$.
\end{defn}
Next we define $X_{i}^{SL}\left(\alpha_{-i}\right)$ as the set of
optimal strategies of an (unbiased) Stackelberg-leader player $i$
who faces opponents with bias profile $\alpha_{-i}$.
\begin{defn}
\label{def:SL-strategy}We say that $x_{i}^{*}\in X_{i}$ is a \emph{Stackelberg-leader
strategy} against bias profile $\alpha_{-i}\in A^{n-1}$ if there
exists an $\left(x_{i}^{*},\alpha_{-i}\right)$-equilibrium $x^{*}$,
such that $\pi_{i}\left(x^{*}\right)\geq\pi_{i}\left(x'\right)$ for
any strategy $x'_{i}\in X_{i}$ and any $\left(x'_{i},\alpha_{-i}\right)$-equilibrium
$x'$. Let $X_{i}^{SL}\left(\alpha_{-i}\right)\subseteq X_{i}$ be
the set of all such Stackelberg-leader strategies.
\end{defn}
Next we characterize a naive analytics equilibrium as an $\alpha$-equilibrium
in which everyone plays Stackelberg-leader strategies. 
\begin{prop}
\label{pro:stackelberg-leader}The pair $\left(\alpha^{*},x^{*}\right)$
is a naive analytics equilibrium iff (1) $x^{*}$ is an $\alpha^{*}$-equilibrium,
and (2) $x_{i}^{*}\in X_{i}^{SL}\left(\alpha_{-i}^{*}\right)$ for
each player $i\in N$.
\end{prop}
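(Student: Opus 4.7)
The plan is to reduce the equivalence to a single set-theoretic identity between the two families of second-stage equilibria that appear in the two definitions. For each player $i$ and fixed opponent bias profile $\alpha_{-i}^*\in A^{n-1}$, I would first establish that $\bigcup_{\alpha'_i\in A}\{x'\in X:x'\text{ is an }(\alpha'_i,\alpha_{-i}^*)\text{-equilibrium}\}$ coincides with $\bigcup_{x'_i\in X_i}\{x'\in X:x'\text{ is an }(x'_i,\alpha_{-i}^*)\text{-equilibrium}\}$. The $\subseteq$ direction uses the bias $\alpha'_i$ itself to realize the existential clause in the definition of an $(x'_i,\alpha_{-i}^*)$-equilibrium (with $x'_i$ the $i$-th coordinate of the given profile), while the $\supseteq$ direction simply re-reads the witness bias $\alpha_i$ granted by that existential as the label of an $(\alpha_i,\alpha_{-i}^*)$-equilibrium. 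Once this identification is in place, NAE condition~2 for player $i$ and the clause defining $x_i^*\in X_i^{SL}(\alpha_{-i}^*)$ become optimality statements over precisely the same feasible set.

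For the $\Rightarrow$ direction, suppose $(\alpha^*,x^*)$ is a NAE. Condition (1) of the proposition is literally the first clause of Definition~\ref{def:sloppy-analytics-eq}. For condition~(2), because $x^*$ is an $\alpha^*$-equilibrium it is also an $(x_i^*,\alpha_{-i}^*)$-equilibrium (with $\alpha_i^*$ serving as the required witness bias), and the NAE inequality $\pi_i(x^*)\geq \pi_i(x')$ over all $(\alpha'_i,\alpha_{-i}^*)$-equilibria $x'$ translates via the identification into the same inequality over all $(x'_i,\alpha_{-i}^*)$-equilibria; hence $x^*$ itself realizes the witness required by Definition~\ref{def:SL-strategy}, so $x_i^*\in X_i^{SL}(\alpha_{-i}^*)$.

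For the $\Leftarrow$ direction, assume $x^*$ is an $\alpha^*$-equilibrium and $x_i^*\in X_i^{SL}(\alpha_{-i}^*)$ for every $i$. Fix $i$ and let $\tilde{x}$ be the witness furnished by Definition~\ref{def:SL-strategy}, so $\tilde{x}$ is an $(x_i^*,\alpha_{-i}^*)$-equilibrium with $\pi_i(\tilde{x})\geq \pi_i(x')$ for every $(x'_i,\alpha_{-i}^*)$-equilibrium $x'$. Since $x^*$ is itself such an equilibrium, both $\tilde{x}_{-i}$ and $x_{-i}^*$ satisfy the $n-1$ biased first-order equations with strictly negative second-order conditions at the common $i$-th coordinate $x_i^*$ under biases $\alpha_{-i}^*$; appealing to single-valued best-reply regularity implied by the strict SOC yields $\tilde{x}=x^*$, so $\pi_i(x^*)\geq \pi_i(x')$ over all $(x'_i,\alpha_{-i}^*)$-equilibria, and re-applying the identification recovers NAE condition~2. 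The step I would be most careful about is exactly this last reconciliation: the existential phrasing of Definition~\ref{def:SL-strategy} versus the universal phrasing of NAE condition~2 means that one must rule out a second $(x_i^*,\alpha_{-i}^*)$-equilibrium in which the opponents respond differently and thereby yield player $i$ a strictly higher payoff, and this is where uniqueness of the perceived best reply does the work.
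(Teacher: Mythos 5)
Your set-theoretic identification of the two families of continuation equilibria, and your forward direction, are exactly the content of the paper's argument: an $(\alpha'_i,\alpha_{-i}^*)$-equilibrium is an $(x'_i,\alpha_{-i}^*)$-equilibrium with $\alpha'_i$ serving as the witness bias and conversely, and $x^*$ itself (being an $\alpha^*$-equilibrium) is the witness required by Definition \ref{def:SL-strategy}. Up to that point your proposal and the paper coincide.

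The problem is the closing step of your converse direction. You correctly isolate the real difficulty---the witness $\tilde x$ in Definition \ref{def:SL-strategy} need not be $x^*$---but the fix you propose does not work: the strict second-order conditions appearing in the definition of an $(x_i,\alpha_{-i})$-equilibrium are local conditions at a critical point and do not imply that the system of the opponents' perceived first-order conditions, with the $i$-th coordinate fixed at $x_i^*$, has a unique solution. Uniqueness of the $(x_i,\alpha_{-i})$-equilibrium is a separate, substantive assumption that the paper introduces only later (Assumption \ref{assu:well-behave-best-responses}) and verifies case by case in the applications; it is not available for Proposition \ref{pro:stackelberg-leader}, which precedes all of the assumptions. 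Hence your inference $\tilde x=x^*$ is unjustified: with multiple continuation equilibria one can have $\pi_i(\tilde x)>\pi_i(x^*)$, in which case $x_i^*\in X_i^{SL}(\alpha_{-i}^*)$ holds while NAE condition~2 fails for the profile $x^*$. For what it is worth, the paper's own proof of this direction elides the same point: it treats a profitable $(\alpha'_i,\alpha_{-i}^*)$-equilibrium $x'$ with $\pi_i(x')>\pi_i(x^*)$ as directly contradicting $x_i^*\in X_i^{SL}(\alpha_{-i}^*)$, which is valid only under the reading that $x^*$ itself is the Stackelberg-leader witness (or under uniqueness of the continuation equilibrium). Under that intended reading your uniqueness step is unnecessary and the rest of your argument goes through essentially verbatim; as literally written, however, that step is a genuine gap.
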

\begin{proof}[Sketch of proof; proof in Appx. \ref{subsec:Proof-of-Stackelberg}]
 If $x_{i}^{*}$ is (resp., is not) a Stackelberg-leader strategy
against bias profile $\alpha_{-i}^{*}$, then there does not (resp.,
does) exist a bias $\alpha'_{i}$ that induces an $\left(\alpha'_{i},\alpha_{-i}\right)$-equilibrium
where player $i$ plays a Stackelberg-leader strategy $x'_{i}\in X_{i}^{SL}\left(\alpha_{-i}\right)$
and gains a payoff higher than in $x^{*}$. 
\end{proof}

\section{General Results\label{sec:General-Results}}

We answer 3 main questions about firms in a naive analytics equilibrium:
(1) when are the players' perceived best replies above or below their
unbiased best replies (Section \ref{subsec:Perceived-Best-Replies}),
(2) when do they underestimate or overestimate the sensitivity of
demand through biased analytics (Section \ref{subsec:Mis-estimating-the-Sensitivity}),
and (3) when do they achieve payoffs that Pareto dominate the Nash
equilibrium (Section \ref{subsec:Pareto-Dominance}). A summary of
results is presented in Table \ref{tab:Summary-of-Results} in Section
\ref{subsec:Summary-of-Results}.

\subsection{\label{subsec:Monotone-Derivatives}Assumptions}

Our first pair of assumptions require that (1) the externalities of
any player's strategy on any opponent's payoff are all in the same
direction, and (2) the partial derivative of increasing a player's
strategy on her payoff is in the same direction for all players, i.e.,
(i) all players have the same direction for the direct effect \textbf{$\frac{\partial\pi_{i}}{\partial x_{i}}$},
and (ii) all players have the same direction for the indirect\textbf{
}effect $\frac{\partial\pi_{i}}{\partial q_{i}}\cdot\frac{\partial q_{i}^{\alpha_{i}}}{\partial x_{i}}$.
\begin{assumption}[\textbf{Monotone externalities}]
\textbf{\label{assu:Monotone-externalities}} The payoff externalities
$\frac{d\pi_{i}\left(x\right)}{dx_{j}}$ are either all positive or
all negative for every $i\neq j\in N$ and every $x\in X$. 
\end{assumption}
\begin{assumption}[\textbf{Monotone partial derivatives}]
\textbf{\label{assu:Monotone-partial-deriv}}\renewcommand{\labelenumi}{\roman{enumi})}
\begin{enumerate}
\item \textbf{$\frac{\partial\pi_{i}}{\partial x_{i}}$ }is either all positive
or all negative for every $i\in N$ and every $x\in X$. 
\item \textbf{$\frac{\partial\pi_{i}}{\partial q_{i}}\cdot\frac{\partial q_{i}}{\partial x_{i}}$}
is either all positive or all negative for every $i\in N$ and every
$x\in X$. 
\end{enumerate}
\end{assumption}
Assumptions \ref{assu:Strong-concavity}--\ref{assu:Strong-com-subs}
require the payoff function to be concave and satisfy either strategic
complements or strategic substitutes. Moreover, we require that these
properties hold with respect to any possible biased estimation of
demand sensitivity. Thus, we call these properties robust concavity
and robust strategic complementarity. 
\begin{assumption}[\textbf{Robust concavity}]
\textbf{\label{assu:Strong-concavity}} $\frac{d^{2}\pi_{i}^{\alpha_{i}}\left(x\right)}{dx_{i}^{2}}<0$
$\forall i\in N$, $x\in X$, and $\alpha_{i}\in A$. 
\end{assumption}
Robust concavity implies that each opponent's profile admits a unique
perceived best reply, which we denote by $BR_{i}^{\alpha_{i}}\left(x_{-i}\right)$,
and we omit the superscript when denoting with the unbiased best reply,
i.e., $BR_{i}\left(x_{-i}\right)\equiv BR_{i}^{1}\left(x_{-i}\right)$.
\begin{assumption}[\textbf{Robust strategic complementarity}]
\textbf{\label{assu:Strong-com-subs}} The cross derivatives $\frac{d^{2}\pi_{i}^{\alpha_{i}}\left(x\right)}{dx_{j}dx_{i}}$
are either all positive or all negative for every $i\neq j\in N$,
$x\in X$, and $\alpha_{i}\in A$. 
\end{assumption}
Assumptions \ref{assu:Strong-concavity}--\ref{assu:Strong-com-subs}
can be replaced with stronger assumptions that depend only on the
underlying game $G$ (and not on elements related to the biases: $A$
and $f$).

\renewcommand\theassumption{\arabic{assumption}'}
\addtocounter{assumption}{-2}
\begin{assumption}
\label{assu:Strong-concavity-prime}$\frac{d}{dx_{i}}\left(\frac{\partial\pi_{i}\left(x\right)}{\partial x_{i}}\right),\,\frac{d}{dx_{i}}\left(\frac{\partial\pi_{i}}{\partial q_{i}}\frac{\partial q_{i}\left(x\right)}{\partial x_{i}}\right)\leq0$
for every $i\in N$ and $x\in X$, where at least one of the two inequalities
is strict.
\end{assumption}
\begin{assumption}
\label{assu:Strong-com-subs-prime}Either $\frac{d}{dx_{j}}\left(\frac{\partial\pi_{i}\left(x\right)}{\partial x_{i}}\right),\,\frac{d}{dx_{j}}\left(\frac{\partial\pi_{i}}{\partial q_{i}}\frac{\partial q_{i}\left(x\right)}{\partial x_{i}}\right)\geq0$
or $\frac{d}{dx_{j}}\left(\frac{\partial\pi_{i}\left(x\right)}{\partial x_{i}}\right),\frac{d}{dx_{j}}\left(\frac{\partial\pi_{i}}{\partial q_{i}}\frac{\partial q_{i}\left(x\right)}{\partial x_{i}}\right)\leq0$,
for every $i\neq j\in N$ and $x\in X$, where at least one of the
two inequalities is strict.
\end{assumption}
It is immediate that Assumption $3'$ (resp., 4') implies Assumption
\ref{assu:Strong-concavity} (resp., \ref{assu:Strong-com-subs}).

\renewcommand\theassumption{\arabic{assumption}}

Next we assume that perceived best replies are bounded. Formally (where
we write that $x\leq x'\in X$ iff $x_{j}\leq x_{j}'$ for each $j\in N$)
\begin{assumption}[\textbf{Bounded Perceived Best Replies}]
\label{assu:bounded-percieved-BR}For each \textup{\emph{bias profile}}
$\alpha\in A^{n}$ there exist profiles $\underline{x}^{\alpha}\leq\overline{x}^{\alpha}\in X$
such that $BR_{i}^{\alpha_{i}}\left(\underline{x}_{-i}^{\alpha}\right),BR_{i}^{\alpha_{i}}\left(\overline{x}_{-i}^{\alpha}\right)\in\left(\underline{x}_{\alpha},\overline{x}_{\alpha}\right)$
for each $i\in N$.
\end{assumption}
Assumption \ref{assu:bounded-percieved-BR} implies that the signs
of the two partial derivatives must differ, i.e.,

\begin{equation}
\frac{\partial\pi_{i}}{\partial x_{i}}\cdot\left(\frac{\partial\pi_{i}}{\partial q_{i}}\cdot\frac{\partial q_{i}}{\partial x_{i}}\right)<0\label{eq:opposite-signs}
\end{equation}
because otherwise there could not be an interior best-reply $BR_{i}\left(\underline{x}_{-i}\right)$,
for which $\frac{d\pi_{i}}{dx_{i}}=\frac{\partial\pi_{i}}{\partial x_{i}}+\frac{\partial\pi_{i}}{\partial q_{i}}\cdot\frac{\partial q_{i}}{\partial x_{i}}=0$.
Further observe that Assumption \ref{assu:bounded-percieved-BR} implies
the existence of $\alpha$-equilibrium for each bias profile $\alpha$.
\begin{claim}
If $\Gamma$ satisfies Assumptions \ref{assu:Strong-concavity} and
\ref{assu:bounded-percieved-BR}, then there exists an $\alpha$-equilibrium
for any $\alpha\in A$.
\end{claim}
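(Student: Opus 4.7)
My plan is to construct an $\alpha$-equilibrium by a Brouwer fixed-point argument on the compact convex box supplied by Assumption~\ref{assu:bounded-percieved-BR}. Fix $\alpha\in A^{n}$ and set $K=\prod_{i\in N}[\underline{x}_{i}^{\alpha},\overline{x}_{i}^{\alpha}]$. By robust concavity, the biased marginal $\frac{d\pi_{i}^{\alpha_{i}}(x)}{dx_{i}}$ is continuous in $x$ (from $C^{2}$ smoothness of the primitives) and strictly decreasing in $x_{i}$, so for every $x_{-i}\in K_{-i}$ this derivative has at most one zero as $x_{i}$ ranges over $[\underline{x}_{i}^{\alpha},\overline{x}_{i}^{\alpha}]$.

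I would then define a truncated perceived best reply $\tilde{BR}_{i}^{\alpha_{i}}(x_{-i})\in[\underline{x}_{i}^{\alpha},\overline{x}_{i}^{\alpha}]$ by taking this unique zero whenever one exists, and otherwise the appropriate endpoint (upper endpoint if the biased marginal is positive throughout the interval, lower endpoint if negative throughout). Strict monotonicity in $x_{i}$ together with continuity in $x_{-i}$ make each $\tilde{BR}_{i}^{\alpha_{i}}$ continuous, so the product map $\tilde{BR}^{\alpha}\colon K\to K$ is a continuous self-map of a nonempty compact convex set, and Brouwer's theorem yields a fixed point $x^{*}\in K$.

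The remaining and main step is to show that $x^{*}$ lies strictly in the interior of $K$, so that the equality $\tilde{BR}_{i}^{\alpha_{i}}(x_{-i}^{*})=x_{i}^{*}$ delivers the genuine first-order condition $\frac{d\pi_{i}^{\alpha_{i}}(x^{*})}{dx_{i}}=0$ rather than a one-sided truncated version. This is precisely where Assumption~\ref{assu:bounded-percieved-BR} enters: the hypothesis that the true perceived best replies at $\underline{x}_{-i}^{\alpha}$ and $\overline{x}_{-i}^{\alpha}$ lie strictly inside $(\underline{x}_{i}^{\alpha},\overline{x}_{i}^{\alpha})$ translates, via the strict monotonicity from Assumption~\ref{assu:Strong-concavity}, into inward-pointing sign conditions $\frac{d\pi_{i}^{\alpha_{i}}(\underline{x}_{i}^{\alpha},\underline{x}_{-i}^{\alpha})}{dx_{i}}>0$ and $\frac{d\pi_{i}^{\alpha_{i}}(\overline{x}_{i}^{\alpha},\overline{x}_{-i}^{\alpha})}{dx_{i}}<0$ on the biased marginals at the two extreme vertices of $K$. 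The delicate part will be propagating these corner conditions across the whole box to rule out a boundary fixed point; the cleanest route is a Poincar\'e--Miranda / topological-degree argument applied to the vector field $x\mapsto\bigl(\tfrac{d\pi_{i}^{\alpha_{i}}(x)}{dx_{i}}\bigr)_{i\in N}$, leveraging strict monotonicity of each component in its own coordinate together with the inward-pointing signs at the corners to force a common zero in $\operatorname{int}(K)$.

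Once interiority of $x^{*}$ is established, the second-order condition $\frac{d^{2}\pi_{i}^{\alpha_{i}}(x^{*})}{dx_{i}^{2}}<0$ is immediate from Assumption~\ref{assu:Strong-concavity}, and so $x^{*}$ meets both requirements of Definition~\ref{def:alpha-eq} and is an $\alpha$-equilibrium.
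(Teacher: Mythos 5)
Your overall strategy is the same as the paper's: the paper's proof is a one-liner that applies Brouwer's fixed-point theorem to the perceived best-reply map $BR^{\alpha}$ viewed as a continuous self-map of the compact convex box $X_{\alpha}=\left\{x\in X\,\middle|\,\underline{x}_{\alpha}\leq x\leq\overline{x}_{\alpha}\right\}$ supplied by Assumption \ref{assu:bounded-percieved-BR}. You instead apply Brouwer to a truncated best-reply map and then must rule out boundary fixed points. The difficulty you correctly flag as ``the remaining and main step'' is, however, not actually resolved in your proposal: it is deferred to a sketched Poincar\'e--Miranda / degree argument, and as described that argument does not go through from the stated hypotheses. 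Poincar\'e--Miranda requires the sign conditions $\frac{d\pi_{i}^{\alpha_{i}}}{dx_{i}}\geq0$ on the \emph{entire} face $\{x_{i}=\underline{x}_{i}^{\alpha}\}$ and $\leq0$ on the entire opposite face, for every $x_{-i}$ in the box. What Assumption \ref{assu:bounded-percieved-BR} delivers (via strict monotonicity of the biased marginal in $x_{i}$ from Assumption \ref{assu:Strong-concavity}) are inward-pointing signs only at the two extreme profiles $\underline{x}^{\alpha}$ and $\overline{x}^{\alpha}$, i.e.\ at two corners. Monotonicity in one's \emph{own} coordinate does nothing to propagate these corner signs across a face; that propagation would require monotonicity of $\frac{d\pi_{i}^{\alpha_{i}}}{dx_{i}}$ in the opponents' coordinates, i.e.\ the strategic complementarity/substitutability of Assumption \ref{assu:Strong-com-subs}, which the claim does not invoke and which you do not use. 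So the crucial interiority step is a genuine gap in the proposal as written.

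It is worth noting that you have put your finger on exactly the point the paper glosses over: the paper's proof simply asserts that $BR^{\alpha}$ maps $X_{\alpha}$ into itself, which again needs the perceived best replies to \emph{interior} profiles of the box (not just to $\underline{x}^{\alpha}$ and $\overline{x}^{\alpha}$) to stay inside; in the paper's applications this follows because best replies are monotone in opponents' strategies, so $BR_{i}^{\alpha_{i}}(x_{-i})$ for $\underline{x}^{\alpha}\leq x\leq\overline{x}^{\alpha}$ is sandwiched between $BR_{i}^{\alpha_{i}}(\underline{x}_{-i}^{\alpha})$ and $BR_{i}^{\alpha_{i}}(\overline{x}_{-i}^{\alpha})$, both interior. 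If you add that monotonicity (Assumption \ref{assu:Strong-com-subs}) to your toolkit, the cleanest repair is the paper's own route: show the untruncated $BR^{\alpha}$ is a self-map of the box and apply Brouwer directly, which makes the truncation and the degree argument unnecessary. Without some such additional structure, neither your face-wise sign conditions nor the self-map property can be extracted from Assumptions \ref{assu:Strong-concavity} and \ref{assu:bounded-percieved-BR} alone.
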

\begin{proof}
Let $X_{\alpha}=\left\{ x\in X|\underline{x}_{\alpha}\leq x\leq\overline{x}_{\alpha}\right\} $
be the set of profiles bounded between $\underline{x}_{\alpha}$ and
$\overline{x}_{\alpha}$ (as defined in Assumption \ref{assu:bounded-percieved-BR}).
The claim is implied by applying Brouwer fixed-point theorem on the
convex and compact set $X_{\alpha}$ and the continuous function $BR^{\alpha}:X_{\alpha}\rightarrow X_{\alpha}$.
\end{proof}
Our final assumption is required only in games with strategic substitutes
with more than two players. Consider a situation in which some player
$i$ deviates from an $\alpha$-equilibrium and then never further
changes her play. Player $i$'s deviation induces the remaining players
to adapt their strategies to a perceived best reply. This adaptation,
in turn, induces these players to further adapt their strategies to
a new perceived best reply. Assumption \ref{assu:consistent-second-adaptation}
requires that the players' strategies keep the same direction of deviation
with respect to the original $\alpha$-equilibrium after the secondary
adaptation. 
\begin{assumption}[\textbf{Consistent secondary adaptation}]
\label{assu:consistent-second-adaptation} Fix any bias profile $\alpha\in A^{N}$,
any $\alpha$-equilibrium $x$, any player $i\in N$, and any strategy
$\hat{x}_{i}\in X_{i}$. Then for each player $j\neq i$
\[
BR_{j}^{\alpha_{j}}\left(\hat{x}_{i},x_{-i}\right)>x_{j}\,\,\Leftrightarrow\,\,BR_{j}^{\alpha_{j}}\left(\hat{x}_{i}\text{,}\left(BR_{k}^{\alpha_{k}}\left(\hat{x}_{i},x_{-i}\right)\right)_{k\neq i}\right)>x_{j}.
\]
\end{assumption}
It is immediate that Assumption \ref{assu:consistent-second-adaptation}
is vacuous (i.e., it always holds) if either (1) the game has two
players (in which case $BR^{\alpha_{-i},\hat{x}_{i}}\left(BR^{\alpha_{-i},\hat{x}_{i}}\left(x\right)\right)\equiv BR^{\alpha_{-i},\hat{x}_{i}}\left(x\right)$),
or (2) the game has strategic complements (which implies that the
original adaptation and the secondary adaptation are in the same direction).
Thus, Assumption \ref{assu:consistent-second-adaptation} is non-trivial
only in the remaining case of a game with $n\geq3$ players and strategic
substitutes. 

As can be seen later in the paper Assumptions \ref{assu:Monotone-externalities}--\ref{assu:consistent-second-adaptation}
are satisfied in various economic applications, such as price competition
(Section \ref{subsec:Price-Competition}), advertising competition
(Section \ref{subsec:Advertising-Competition}), team production (Section
\ref{sec:Team-Production}), and Cournot competition.

\subsection{Perceived Best Replies\label{subsec:Perceived-Best-Replies}}

Our first result characterizes whether the perceived best-replies
of the agents are above or below the unbiased best reply in naive
analytics equilibria. Specifically, it shows that the perceived best-reply
of each player is above her unbiased best reply (i.e., $x_{i}^{*}>BR_{i}\left(x_{-i}^{*}\right)$)
in any naive analytics equilibrium iff the strategic complementarity
has the same direction as the payoff externalities (i.e., $\frac{d\pi_{i}^{2}}{dx_{i}dx_{j}}\cdot\frac{d\pi_{i}}{dx_{j}}>0$).
Formally (where we write $\frac{d\pi_{i}^{2}}{dx_{i}dx_{j}}\cdot\frac{d\pi_{i}}{dx_{j}}>0$
without specifying the quantifiers on the profile $x$ and players
$i\neq j$ due to Assumption \ref{assu:Monotone-externalities} (resp.,
\ref{assu:Strong-com-subs}) that imply that the sign of $\frac{d\pi_{i}^{2}}{dx_{i}dx_{j}}$
(resp., $\frac{d\pi_{i}}{dx_{j}}$) is the same for all players and
profiles).
\begin{prop}
\label{prop:percieved-BR-vs-unbiased-BR}If $\Gamma$ satisfies Assumptions
\ref{assu:Monotone-externalities}--\ref{assu:consistent-second-adaptation}
and $\left(\alpha^{*},x^{*}\right)$ is a naive analytics equilibrium:
\end{prop}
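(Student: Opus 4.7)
The plan is to read the naive analytics equilibrium as a Stackelberg-leader optimization via Proposition~\ref{pro:stackelberg-leader}, compute its first-order condition, and translate a sign analysis of that FOC into the comparison between $x_i^*$ and $BR_i(x_{-i}^*)$ using the robust concavity of $\pi_i$ in the own action.

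First, I would fix $i\in N$ and let $R_{-i}(x_i)$ denote the profile of perceived best replies that the other players use in the unique $(x_i,\alpha_{-i}^*)$-equilibrium near $x^*$; existence and local uniqueness of $R_{-i}$ follow from Assumptions~\ref{assu:Strong-concavity} and~\ref{assu:bounded-percieved-BR} combined with the implicit function theorem applied to the joint opponents' FOC system. By Proposition~\ref{pro:stackelberg-leader}, $x_i^*$ maximizes $\pi_i(x_i,R_{-i}(x_i))$, so at an interior optimum the Stackelberg FOC reads
\[
\frac{d\pi_i}{dx_i}\bigg|_{x^{*}} + \sum_{j\neq i}\frac{d\pi_i}{dx_j}\bigg|_{x^{*}}\cdot\frac{dR_j}{dx_i}\bigg|_{x_i^{*}} = 0.
\]

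Second, I would pin down the common sign of the cross terms. Assumption~\ref{assu:Monotone-externalities} gives a single sign $s_E\in\{\pm 1\}$ for every $\frac{d\pi_i}{dx_j}$. For each $j\neq i$, implicit differentiation of $\frac{d\pi_j^{\alpha_j^{*}}}{dx_j}=0$ in the two-player case yields $\frac{dBR_j^{\alpha_j^{*}}}{dx_i}=-\bigl(\frac{d^{2}\pi_j^{\alpha_j^{*}}}{dx_j^{2}}\bigr)^{-1}\frac{d^{2}\pi_j^{\alpha_j^{*}}}{dx_j\,dx_i}$, whose sign equals the common sign $s_C\in\{\pm 1\}$ of the cross derivatives guaranteed by Assumption~\ref{assu:Strong-com-subs} (with the denominator negative by Assumption~\ref{assu:Strong-concavity}). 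For $n\geq 3$, $R_{-i}$ is defined implicitly through the simultaneous adaptation of all opponents; Assumption~\ref{assu:consistent-second-adaptation} ensures that $\mathrm{sign}(R_j(x_i)-x_j^{*})$ after the full adaptation matches the first-round sign, so $\frac{dR_j}{dx_i}$ also carries the sign $s_C$. Each summand $\frac{d\pi_i}{dx_j}\cdot\frac{dR_j}{dx_i}$ therefore has the sign $s_E\cdot s_C=\mathrm{sign}\bigl(\frac{d\pi_i}{dx_j}\cdot\frac{d^{2}\pi_i}{dx_i\,dx_j}\bigr)$, and the Stackelberg FOC collapses to
\[
\mathrm{sign}\!\left(\frac{d\pi_i}{dx_i}\bigg|_{x^{*}}\right) = -\,\mathrm{sign}\!\left(\frac{d\pi_i}{dx_j}\cdot\frac{d^{2}\pi_i}{dx_i\,dx_j}\right).
\]

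Finally, robust concavity makes $\pi_i(\cdot,x_{-i}^{*})$ strictly concave with unique maximizer $BR_i(x_{-i}^{*})$, so $x_i^{*}>BR_i(x_{-i}^{*})$ iff $\frac{d\pi_i}{dx_i}|_{x^{*}}<0$ iff $\frac{d\pi_i}{dx_j}\cdot\frac{d^{2}\pi_i}{dx_i\,dx_j}>0$, and symmetrically for the reverse inequality, which is the content of the proposition.

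The main obstacle is the multi-player, strategic-substitutes case: without monotone comparative statics, the sign of the joint reaction $\frac{dR_j}{dx_i}$ is not automatic from the sign of the single-player best-reply slope. Here Assumption~\ref{assu:consistent-second-adaptation} must be combined with an induction (or a limiting argument for iterated best-reply dynamics converging to the $\alpha_{-i}^{*}$-equilibrium) to conclude that the equilibrium reaction of each opponent has the same sign $s_C$ as the local cross derivative. A secondary technical point is a boundary $x_i^{*}$: the FOC then becomes an inequality, but the revealed-preference content of Proposition~\ref{pro:stackelberg-leader} applied to small admissible perturbations still pins down the required sign of $\frac{d\pi_i}{dx_i}|_{x^{*}}$ and hence the direction of comparison with $BR_i(x_{-i}^{*})$.
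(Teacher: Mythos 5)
Your overall strategy is the same as the paper's: invoke Proposition \ref{pro:stackelberg-leader}, determine the direction of the opponents' equilibrium response from the common sign of the cross derivatives (with Assumption \ref{assu:consistent-second-adaptation} handling $n\geq 3$ under substitutes), combine it with the common externality sign, and then use robust concavity to translate the sign of $\frac{\partial}{\partial x_i}$-type information at $x^*$ into the comparison with $BR_i(x_{-i}^*)$. The difference is that you make this a smooth first-order-condition argument along a differentiable selection $R_{-i}(x_i)$ of $(x_i,\alpha_{-i}^*)$-equilibria, and this is where the genuine gap lies. Existence, local uniqueness, and differentiability of $R_{-i}$ do \emph{not} follow from Assumptions \ref{assu:Strong-concavity} and \ref{assu:bounded-percieved-BR} plus the implicit function theorem: the IFT needs nonsingularity of the Jacobian of the joint opponents' FOC system, and robust concavity only signs its diagonal while boundedness of perceived best replies gives existence (via Brouwer), not local uniqueness or invertibility. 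The paper is careful on exactly this point: uniqueness of the $(x_i,\alpha_{-i})$-equilibrium and differentiability of $x_{-i}(x_i,\alpha_{-i})$ are introduced only later (Assumption \ref{assu:well-behave-best-responses} and the hypotheses of Claim \ref{claim:Stackelberg-derivative}) for the applications, and Proposition \ref{prop:percieved-BR-vs-unbiased-BR} is proved without them, by an ordinal revealed-preference comparison: a deviation of $x_i$ toward $BR_i(x_{-i}^*)$ would raise $\pi_i$ absent a reaction, so Stackelberg optimality forces the opponents' equilibrium reaction to be harmful, which pins down $\mathrm{sgn}\bigl(\frac{d^{2}\pi_i}{dx_i dx_j}\cdot\frac{d\pi_i}{dx_j}\bigr)$ without ever differentiating a reaction map.

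Two further points. First, the step you yourself flag as the main obstacle — that for $n\geq 3$ under strategic substitutes the \emph{equilibrium} reaction of each opponent keeps the sign $s_C$ of the first-round best-reply shift — is precisely what Assumption \ref{assu:consistent-second-adaptation} is designed to deliver, but your proposal only gestures at an induction or a limit of iterated best replies whose convergence to the new $\left(x_i,\alpha_{-i}^{*}\right)$-equilibrium is not guaranteed by Assumptions \ref{assu:Monotone-externalities}--\ref{assu:consistent-second-adaptation}; this step is asserted, not proved. Second, writing the exact Stackelberg FOC as an equality presumes that deviations of $x_i$ in \emph{both} directions along the branch are admissible, i.e., that nearby $x_i$ can be rationalized by some $\alpha_i\in A$ (part (2) of the definition of an $\left(x_i,\alpha_{-i}\right)$-equilibrium), which you never check; deviations toward the unbiased best reply are rationalized by intermediate biases between $\alpha_i^{*}$ and $1$, but the opposite direction may require biases outside $A$. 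A one-sided inequality version of your argument would suffice for the stated iff and would also be closer in spirit to the paper's proof, but as written the proposal needs regularity beyond the proposition's hypotheses.
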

\begin{enumerate}
\item $x_{i}^{*}>BR_{i}\left(x_{-i}^{*}\right)$ for any $i\in N$ iff $\frac{d\pi_{i}^{2}}{dx_{i}dx_{j}}\cdot\frac{d\pi_{i}}{dx_{j}}>0$.
\item $x_{i}^{*}<BR_{i}\left(x_{-i}^{*}\right)$ for any $i\in N$ iff $\frac{d\pi_{i}^{2}}{dx_{i}dx_{j}}\cdot\frac{d\pi_{i}}{dx_{j}}<0$.
\end{enumerate}
\begin{proof}
Assume that $x_{i}^{*}>BR_{i}\left(x_{-i}^{*}\right)$. If opponents
did not react to player $i$ unilaterally decreasing her strategy
towards $BR_{i}\left(x_{-i}^{*}\right)$, then player $i$ would have
gained from it, but the fact that $x_{i}^{*}$ is a Stackelberg-leader
strategy means that decreasing player $i$'s strategy has a negative
impact on player $i$'s payoff due to the opponents' reaction (``$\underset{(1)}{\Leftrightarrow}$''
below). This can only happen if either the game has robust strategic
complements and positive externalities, resulting in lower prices
and payoffs if player $i$ decreases $x_{i}^{*}$, or similarly if
it has robust strategic substitutes and negative externalities (``$\underset{(2)}{\Leftrightarrow}$''
below), both of which imply that the strategic complementarity and
the externalities have the same direction. The equilibrium response
remains in this initial direction due to Assumption \ref{assu:consistent-second-adaptation}.
The argument that $x_{i}^{*}<BR_{i}\left(x_{-i}^{*}\right)$ implies
that the strategic complementarity and the externalities are in opposite
directions is analogous. 

\[
\begin{array}{c}
x_{i}^{*}>BR_{i}\left(x_{-i}^{*}\right)\\
x_{i}^{*}<BR_{i}\left(x_{-i}^{*}\right)
\end{array}\,\underset{(1)}{\Leftrightarrow}\,\begin{array}{c}
\textrm{Decreasing \ensuremath{x_{i}}}\textrm{ induces a harmful opponents' reponse}\\
\textrm{Increasing \ensuremath{x_{i}}}\textrm{ induces a harmful opponents' reponse}
\end{array}\,\underset{(2)}{\Leftrightarrow}\,
\]

\[
\begin{array}{c}
\left(\frac{d^{2}\pi_{i}}{dx_{i}dx_{j}}>0\,\&\,\left(\frac{d\pi_{i}}{dx_{j}}>0\right)\right)\textrm{OR}\left(\frac{d^{2}\pi_{i}}{dx_{i}dx_{j}}<0\,\&\,\left(\frac{d\pi_{i}}{dx_{j}}<0\right)\right)\\
\left(\frac{d^{2}\pi_{i}}{dx_{i}dx_{j}}>0\,\&\,\left(\frac{d\pi_{i}}{dx_{j}}<0\right)\right)\textrm{OR}\left(\frac{d^{2}\pi_{i}}{dx_{i}dx_{j}}<0\,\&\,\left(\frac{d\pi_{i}}{dx_{j}}>0\right)\right)
\end{array}\,\underset{(4)}{\Leftrightarrow}\,\begin{array}{c}
\frac{d\pi_{i}^{2}}{dx_{i}dx_{j}}\cdot\frac{d\pi_{i}}{dx_{j}}>0\\
\frac{d\pi_{i}^{2}}{dx_{i}dx_{j}}\cdot\frac{d\pi_{i}}{dx_{j}}<0\text{.}
\end{array}\qedhere
\]

Appendix \ref{subsec:Example-why-Assumption-6-is-nessecary} presents
an example that illustrates why Assumption \ref{assu:consistent-second-adaptation}
(consistent\textbf{ }secondary adaptation) is necessary for Proposition
\ref{prop:percieved-BR-vs-unbiased-BR} (and for the remaining results
in this section). An immediate corollary of Proposition \ref{prop:percieved-BR-vs-unbiased-BR}
is that the difference between perceived best reply and the unbiased
best reply of all players is always in the direction that benefits
(harms) the opponents in games with strategic complements (substitutes).
Formally,
\end{proof}
\begin{cor}
\label{cor:sign-strategic-compl-supp}If $\Gamma$ satisfies Assumptions
\ref{assu:Monotone-externalities}--\ref{assu:consistent-second-adaptation}
and $\left(\alpha^{*},x^{*}\right)$ is a naive analytics equilibrium,
then:
\begin{enumerate}
\item $\textrm{sgn}\left(x_{i}^{*}-BR_{i}\left(x_{-i}^{*}\right)\right)=\textrm{sgn}\left(\frac{d\pi_{i}}{dx_{j}}\right)$
iff  $\frac{d\pi_{i}^{2}}{dx_{i}dx_{j}}>0$.
\item $\textrm{sgn}\left(x_{i}^{*}-BR_{i}\left(x_{-i}^{*}\right)\right)=-\textrm{sgn}\left(\frac{d\pi_{i}}{dx_{j}}\right)$
iff $\frac{d\pi_{i}^{2}}{dx_{i}dx_{j}}>0$.
\end{enumerate}
\end{cor}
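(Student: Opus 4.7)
The plan is to derive the corollary as an immediate algebraic reformulation of Proposition \ref{prop:percieved-BR-vs-unbiased-BR}. That proposition establishes that, at a naive analytics equilibrium $(\alpha^{*},x^{*})$, the strict inequality $x_{i}^{*}>BR_{i}(x_{-i}^{*})$ holds iff $\tfrac{d^{2}\pi_{i}}{dx_{i}dx_{j}}\cdot\tfrac{d\pi_{i}}{dx_{j}}>0$, and the reverse inequality holds iff the same product is negative. Since Assumptions \ref{assu:Monotone-externalities} and \ref{assu:Strong-com-subs} assign fixed, nonzero signs to both factors uniformly across players and profiles, I would rewrite the content of Proposition \ref{prop:percieved-BR-vs-unbiased-BR} as the single sign identity
\[
\textrm{sgn}\bigl(x_{i}^{*}-BR_{i}(x_{-i}^{*})\bigr)=\textrm{sgn}\!\left(\tfrac{d^{2}\pi_{i}}{dx_{i}dx_{j}}\right)\cdot\textrm{sgn}\!\left(\tfrac{d\pi_{i}}{dx_{j}}\right),
\]
using the multiplicativity of the sign function on nonzero reals.

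From this identity the two cases of the corollary fall out by inspection. Multiplying both sides by $\textrm{sgn}\bigl(\tfrac{d^{2}\pi_{i}}{dx_{i}dx_{j}}\bigr)$ (whose square is $1$) yields $\textrm{sgn}\bigl(\tfrac{d\pi_{i}}{dx_{j}}\bigr)=\textrm{sgn}\bigl(\tfrac{d^{2}\pi_{i}}{dx_{i}dx_{j}}\bigr)\cdot\textrm{sgn}\bigl(x_{i}^{*}-BR_{i}(x_{-i}^{*})\bigr)$. When $\tfrac{d^{2}\pi_{i}}{dx_{i}dx_{j}}>0$ (robust strategic complements), the first factor on the right is $+1$, so the sign of the perceived-best-reply deviation matches the sign of the externality, giving part (1). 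When $\tfrac{d^{2}\pi_{i}}{dx_{i}dx_{j}}<0$ (robust strategic substitutes), that factor equals $-1$, the sign flips, and part (2) follows. For each biconditional, the converse is equally easy: if the sign relation is observed for all players, one reads off the sign of $\tfrac{d^{2}\pi_{i}}{dx_{i}dx_{j}}$ from the identity, since the externality sign is fixed and nonzero by Assumption \ref{assu:Monotone-externalities}.

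The only subtlety worth highlighting is that passing from the iff statements of Proposition \ref{prop:percieved-BR-vs-unbiased-BR} to a genuine sign equality requires both $\tfrac{d^{2}\pi_{i}}{dx_{i}dx_{j}}$ and $\tfrac{d\pi_{i}}{dx_{j}}$ to be nonvanishing and of uniform sign across $x$ and across pairs $i\neq j$. This is exactly what Assumptions \ref{assu:Strong-com-subs} and \ref{assu:Monotone-externalities} deliver, so there is no hidden difficulty and no further game-theoretic input is needed beyond Proposition \ref{prop:percieved-BR-vs-unbiased-BR}. In short, the corollary is a pure rearrangement, and the ``hard part'' is simply noting that both terms in the product have strict uniform signs, which makes the sign-of-product rule applicable.
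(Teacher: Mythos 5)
Your proposal is correct and matches the paper's treatment: the paper derives this result as an immediate consequence of Proposition \ref{prop:percieved-BR-vs-unbiased-BR}, and your explicit sign-algebra (using that Assumptions \ref{assu:Monotone-externalities} and \ref{assu:Strong-com-subs} give both factors strict uniform signs) is just that argument written out. Note only that part (2) of the statement as printed contains a typo --- it should read $\frac{d^{2}\pi_{i}}{dx_{i}dx_{j}}<0$ --- and your proof correctly addresses the intended (substitutes) case.
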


\subsection{\label{subsec:Direction-of-Analytics}Direction of Analytics Bias
\label{subsec:Mis-estimating-the-Sensitivity}}

Our next result characterizes the condition for analysts to either
overestimate or underestimate the sensitivity of demand in any naive
analytics equilibrium. Specifically, we show that all players overestimate
the sensitivity of demand (i.e., $\alpha_{i}>1$ for each player $i$)
iff the number of positive derivatives is even among the three monotone
derivative: strategic complementarity $\frac{d\pi_{i}^{2}}{dx_{i}dx_{j}}$,
payoff externalities $\frac{d\pi_{j}}{dx_{i}}$, and partial derivative
$\frac{\partial\pi_{i}}{\partial x_{i}}$.
\begin{prop}
\label{pro:alpha}If $\Gamma$ satisfies Assumptions \ref{assu:Monotone-externalities}--\ref{assu:consistent-second-adaptation}
and $\left(\alpha^{*},x^{*}\right)$ is a naive analytics equilibrium:
\begin{enumerate}
\item $\alpha_{i}^{*}>1$ for any player $i\in N$ iff $\frac{\partial\pi_{i}}{\partial x_{i}}\cdot\frac{d\pi_{j}}{dx_{i}}\cdot\frac{d\pi_{i}^{2}}{dx_{i}dx_{j}}>0$.
\item $\alpha_{i}^{*}<1$ for any player $i\in N$ iff $\frac{\partial\pi_{i}}{\partial x_{i}}\cdot\frac{d\pi_{j}}{dx_{i}}\cdot\frac{d\pi_{i}^{2}}{dx_{i}dx_{j}}<0$.
\end{enumerate}
\end{prop}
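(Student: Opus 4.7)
My plan is to chain two equivalences that together pin down $\textrm{sgn}(\alpha_i^* - 1)$ in terms of the three monotone quantities in the statement. I will use Proposition~\ref{prop:percieved-BR-vs-unbiased-BR} (which identifies $\textrm{sgn}(x_i^* - BR_i(x_{-i}^*))$), the biased first-order condition, robust concavity (Assumption~\ref{assu:Strong-concavity}), and the sign restriction~\eqref{eq:opposite-signs}.

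The first step is to evaluate the \emph{unbiased} marginal profit at the NAE strategy $x^*$. Subtracting the biased FOC $\frac{\partial \pi_i}{\partial x_i} + \frac{\partial \pi_i}{\partial q_i} \cdot f\left(\tfrac{\partial q_i}{\partial x_i}, \alpha_i^*\right) = 0$ from the true total derivative at $x^*$ yields
\[
\frac{d\pi_i}{dx_i}\bigg|_{x^*} = \frac{\partial \pi_i}{\partial q_i}\left(\frac{\partial q_i}{\partial x_i} - f\left(\tfrac{\partial q_i}{\partial x_i}, \alpha_i^*\right)\right).
\]
Because $f$ preserves the sign of $\partial q_i/\partial x_i$ while $|f|$ is strictly monotone in $\alpha_i$, the parenthesized gap flips sign precisely as $\alpha_i^*$ crosses $1$; its sign is fully determined by the pair $(\textrm{sgn}(\partial q_i/\partial x_i), \textrm{sgn}(\alpha_i^* - 1))$. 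Multiplying through by $\partial \pi_i/\partial q_i$ and collapsing $\textrm{sgn}(\partial \pi_i/\partial q_i \cdot \partial q_i/\partial x_i)$ into $-\textrm{sgn}(\partial \pi_i/\partial x_i)$ via~\eqref{eq:opposite-signs} delivers the first key link: $\textrm{sgn}(\alpha_i^* - 1)$ is pinned down by the pair $\bigl(\textrm{sgn}(d\pi_i/dx_i|_{x^*}),\,\textrm{sgn}(\partial \pi_i/\partial x_i)\bigr)$.

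The second step converts $\textrm{sgn}(d\pi_i/dx_i|_{x^*})$ into the remaining factors. Robust concavity at $\alpha_i = 1$ makes $\pi_i(\cdot, x_{-i}^*)$ strictly concave with unique interior maximizer $BR_i(x_{-i}^*)$, so $\textrm{sgn}(d\pi_i/dx_i|_{x^*}) = -\textrm{sgn}(x_i^* - BR_i(x_{-i}^*))$. Proposition~\ref{prop:percieved-BR-vs-unbiased-BR} then identifies this latter sign with $\textrm{sgn}\bigl(\frac{d^2 \pi_i}{dx_i dx_j} \cdot \frac{d\pi_i}{dx_j}\bigr)$, and Assumption~\ref{assu:Monotone-externalities} lets me swap $\frac{d\pi_i}{dx_j}$ for $\frac{d\pi_j}{dx_i}$ since all payoff externalities share a sign. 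Stitching the two equivalences together produces the three-factor product $\frac{\partial \pi_i}{\partial x_i} \cdot \frac{d\pi_j}{dx_i} \cdot \frac{d^2 \pi_i}{dx_i dx_j}$ whose sign characterizes $\textrm{sgn}(\alpha_i^* - 1)$, and reading off the two cases yields parts~(1) and~(2).

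The main obstacle is disciplined sign bookkeeping: four intermediate quantities ($\partial \pi_i/\partial x_i$, $\partial q_i/\partial x_i$, $\partial \pi_i/\partial q_i$, and the bias gap) contribute signs to the chain, and~\eqref{eq:opposite-signs} must be invoked at exactly the right step so that $\partial \pi_i/\partial q_i$ and $\partial q_i/\partial x_i$ cancel from the final expression, leaving only the three factors named in the statement. A minor housekeeping point is that the sign of each of $\frac{\partial \pi_i}{\partial x_i}$, $\frac{d\pi_j}{dx_i}$, and $\frac{d^2 \pi_i}{dx_i dx_j}$ is constant across players and across $x \in X$---so that the quantifier ``for any player $i \in N$'' is well-posed---which follows from Assumptions~\ref{assu:Monotone-externalities}, \ref{assu:Monotone-partial-deriv}, and~\ref{assu:Strong-com-subs}.
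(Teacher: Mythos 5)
Your argument is, in substance, the paper's own proof written more compactly: the paper establishes this proposition through exactly the chain you describe (strict monotonicity of $\left|f\right|$ in $\alpha_{i}$, the sign restriction \eqref{eq:opposite-signs}, concavity of the unbiased payoff to convert $\textrm{sgn}\left(\left.\tfrac{d\pi_{i}}{dx_{i}}\right|_{x^{*}}\right)$ into $\textrm{sgn}\left(x_{i}^{*}-BR_{i}\left(x_{-i}^{*}\right)\right)$, and then Proposition \ref{prop:percieved-BR-vs-unbiased-BR} plus Assumption \ref{assu:Monotone-externalities} to swap $\tfrac{d\pi_{i}}{dx_{j}}$ for $\tfrac{d\pi_{j}}{dx_{i}}$), only presented there as a case-by-case string of equivalences rather than your algebraic identity. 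Both of your intermediate steps are correct as stated.

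The gap is precisely at the sentence ``reading off the two cases yields parts (1) and (2)'': it does not. Your step 1 gives $\textrm{sgn}\left(\left.\tfrac{d\pi_{i}}{dx_{i}}\right|_{x^{*}}\right)=\textrm{sgn}\left(\tfrac{\partial\pi_{i}}{\partial x_{i}}\right)\cdot\textrm{sgn}\left(\alpha_{i}^{*}-1\right)$, and your step 2 gives $\textrm{sgn}\left(\left.\tfrac{d\pi_{i}}{dx_{i}}\right|_{x^{*}}\right)=-\textrm{sgn}\left(x_{i}^{*}-BR_{i}\left(x_{-i}^{*}\right)\right)=-\textrm{sgn}\left(\tfrac{d^{2}\pi_{i}}{dx_{i}dx_{j}}\cdot\tfrac{d\pi_{j}}{dx_{i}}\right)$; combining them yields $\textrm{sgn}\left(\alpha_{i}^{*}-1\right)=-\textrm{sgn}\left(\tfrac{\partial\pi_{i}}{\partial x_{i}}\cdot\tfrac{d\pi_{j}}{dx_{i}}\cdot\tfrac{d^{2}\pi_{i}}{dx_{i}dx_{j}}\right)$, i.e.\ $\alpha_{i}^{*}>1$ iff the triple product is \emph{negative} --- the opposite pairing from the statement as printed. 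Your (honestly read-off) conclusion is in fact the one consistent with the rest of the paper: in price competition the signs of $\left(\tfrac{\partial\pi_{i}}{\partial x_{i}},\tfrac{d\pi_{j}}{dx_{i}},\tfrac{d^{2}\pi_{i}}{dx_{i}dx_{j}}\right)$ are $\left(+,+,+\right)$ and Proposition \ref{prop:price-satisfies-Assumptions} gives $\alpha_{i}^{*}<1$, while in advertising with positive externalities they are $\left(-,+,+\right)$ and Proposition \ref{prop:advertising-result} gives $\alpha_{i}^{*}>1$; it also matches the verbal version (``$\alpha_{i}>1$ iff the number of positive derivatives is even'') and Table \ref{tab:Summary-of-Results}. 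The printed inequalities in the proposition, and step $(3)$ of the paper's own chain --- which pairs $\tfrac{\partial\pi_{i}}{\partial x_{i}}<0$ and $\tfrac{d\pi_{i}^{\alpha_{i}}}{dx_{i}}>\tfrac{d\pi_{i}}{dx_{i}}$ with $x_{i}^{*}<BR_{i}\left(x_{-i}^{*}\right)$, whereas the biased FOC $\tfrac{d\pi_{i}^{\alpha_{i}}\left(x^{*}\right)}{dx_{i}}=0$ and concavity actually force $\left.\tfrac{d\pi_{i}}{dx_{i}}\right|_{x^{*}}<0$ and hence $x_{i}^{*}>BR_{i}\left(x_{-i}^{*}\right)$ --- carry a compensating sign slip. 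So your derivation is sound, but as written you either silently introduce a matching sign error in the unshown final assembly or you must flip the inequalities in parts (1)--(2) and flag the misprint; asserting without verification that the read-off matches the statement is the missing step, and it is exactly the ``disciplined sign bookkeeping'' you yourself identified as the main obstacle.
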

\begin{proof}
The result is derived as follows:
\[
\begin{array}{c}
\alpha_{i}>1\\
\alpha_{i}<1
\end{array}\,\underset{(1)}{\Leftrightarrow}\,\begin{array}{c}
\left|\frac{\partial q_{i}^{\alpha_{i}}}{\partial x_{i}}\right|>\left|\frac{\partial q_{i}}{\partial x_{i}}\right|\\
\left|\frac{\partial q_{i}^{\alpha_{i}}}{\partial x_{i}}\right|<\left|\frac{\partial q_{i}}{\partial x_{i}}\right|
\end{array}\,\underset{}{\Leftrightarrow}\,\begin{array}{c}
\left(\frac{\partial\pi_{i}}{\partial q_{i}}\cdot\frac{\partial q_{i}^{\alpha_{i}}}{\partial x_{i}}>\frac{\partial\pi_{i}}{\partial q_{i}}\cdot\frac{\partial q_{i}}{\partial x_{i}}>0\right)\textrm{OR}\left(\frac{\partial\pi_{i}}{\partial q_{i}}\cdot\frac{\partial q_{i}^{\alpha_{i}}}{\partial x_{i}}<\frac{\partial\pi_{i}}{\partial q_{i}}\cdot\frac{\partial q_{i}}{\partial x_{i}}<0\right)\\
\left(\frac{\partial\pi_{i}}{\partial q_{i}}\cdot\frac{\partial q_{i}}{\partial x_{i}}>\frac{\partial\pi_{i}}{\partial q_{i}}\cdot\frac{\partial q_{i}^{\alpha_{i}}}{\partial x_{i}}>0\right)\textrm{OR}\left(\frac{\partial\pi_{i}}{\partial q_{i}}\cdot\frac{\partial q_{i}}{\partial x_{i}}<\frac{\partial\pi_{i}}{\partial q_{i}}\cdot\frac{\partial q_{i}^{\alpha_{i}}}{\partial x_{i}}<0\right)
\end{array}
\]
\[
\,\underset{(2)}{\Leftrightarrow}\,\begin{array}{c}
\left(\frac{\partial\pi_{i}}{\partial x_{i}}<0\,\&\,\frac{d\pi_{i}^{\alpha_{i}}}{dx_{i}}>\frac{d\pi_{i}}{dx_{i}}\right)\textrm{OR}\left(\frac{\partial\pi_{i}}{\partial x_{i}}>0\,\&\,\frac{d\pi_{i}^{\alpha_{i}}}{dx_{i}}<\frac{d\pi_{i}}{dx_{i}}\right)\\
\left(\frac{\partial\pi_{i}}{\partial x_{i}}<0\,\&\,\frac{d\pi_{i}^{\alpha_{i}}}{dx_{i}}<\frac{d\pi_{i}}{dx_{i}}\right)\textrm{OR}\left(\frac{\partial\pi_{i}}{\partial x_{i}}>0\,\&\,\frac{d\pi_{i}^{\alpha_{i}}}{dx_{i}}>\frac{d\pi_{i}}{dx_{i}}\right)
\end{array}
\]
\[
\,\underset{(3)}{\Leftrightarrow}\,\begin{array}{c}
\left(\frac{\partial\pi_{i}}{\partial x_{i}}<0\,\&\,x_{i}^{*}<BR_{i}\left(x_{-i}^{*}\right)\right)\textrm{OR}\left(\frac{\partial\pi_{i}}{\partial x_{i}}>0\,\&\,x_{i}^{*}>BR_{i}\left(x_{-i}^{*}\right)\right)\\
\left(\frac{\partial\pi_{i}}{\partial x_{i}}<0\,\&\,x_{i}^{*}>BR_{i}\left(x_{-i}^{*}\right)\right)\textrm{OR}\left(\frac{\partial\pi_{i}}{\partial x_{i}}>0\,\&\,x_{i}^{*}<BR_{i}\left(x_{-i}^{*}\right)\right)
\end{array}
\]
\[
\,\underset{(4)}{\Leftrightarrow}\,\begin{array}{c}
\left(\frac{\partial\pi_{i}}{\partial x_{i}}<0\,\&\,\frac{d\pi_{i}^{2}}{dx_{i}dx_{j}}\cdot\frac{d\pi_{i}}{dx_{j}}<0\right)\textrm{OR}\left(\frac{\partial\pi_{i}}{\partial x_{i}}>0\,\&\,\frac{d\pi_{i}^{2}}{dx_{i}dx_{j}}\cdot\frac{d\pi_{i}}{dx_{j}}>0\right)\\
\left(\frac{\partial\pi_{i}}{\partial x_{i}}<0\,\&\,\frac{d\pi_{i}^{2}}{dx_{i}dx_{j}}\cdot\frac{d\pi_{i}}{dx_{j}}<0\right)\textrm{OR}\left(\frac{\partial\pi_{i}}{\partial x_{i}}>0\,\&\,\frac{d\pi_{i}^{2}}{dx_{i}dx_{j}}\cdot\frac{d\pi_{i}}{dx_{j}}<0\right),
\end{array}
\]

where
\begin{enumerate}
\item $\underset{(1)}{\Leftrightarrow}$ is implied by the assumption that
$\left|\frac{\partial q_{i}^{\alpha_{i}}}{\partial x_{i}}\right|$
is increasing in $\alpha_{i}$,
\item $\underset{(2)}{\Leftrightarrow}$is due to Assumption \ref{assu:Monotone-partial-deriv}
and (\ref{eq:opposite-signs}) (the observation that $\frac{\partial\pi_{i}}{\partial x_{i}}$
and $\frac{\partial\pi_{i}}{\partial q_{i}}\cdot\frac{\partial q_{i}}{\partial x_{i}}$
must have opposite signs).
\item $\underset{(3)}{\Leftrightarrow}$ is due to robust concavity (Assumption
\ref{assu:Strong-com-subs}), and
\item $\underset{(4)}{\Leftrightarrow}$ is implied by Proposition \ref{prop:percieved-BR-vs-unbiased-BR}.\qedhere 
\end{enumerate}
\end{proof}
\begin{rem}
\label{rem:Heifetz-et-al}\citet[Lemma 1]{heifetz2007dynamic} present
a similar result to Proposition \ref{pro:alpha} in a setup of evolution
of subjective preferences (which can also be interpreted as delegation)
for \emph{symmetric two-player }games, under two strong assumptions:
(1) unique $\alpha$-equilibrium $x^{*}\left(\alpha\right)$ for each
$\alpha\in A$, which allows to define a payoff function $\tilde{\pi}_{i}\left(\alpha\right)\equiv\pi_{i}\left(x^{*}\left(\alpha\right)\right)$
for each bias profile, and (2) structural assumptions on $\tilde{\pi}_{i}\left(\alpha\right)$:
strict concavity and $\left|\frac{d^{2}\tilde{\pi}_{i}\left(\alpha\right)}{d\alpha_{1}d\alpha_{2}}\right|<\left|\frac{d^{2}\tilde{\pi}_{i}\left(\alpha\right)}{d\alpha_{1}^{2}}\right|.$
\\
Our contribution is two-fold. First, the change of mechanism from
subjective preferences / delegation to biased analytics is important,
and it induces novel predictions and policy implications. Second,
we extend \citet[Lemma 1]{heifetz2007dynamic} to arbitrary $n$-player
games and substantially relax the two strong assumptions mentioned
above.\footnote{The two strong assumptions are required for their dynamic convergence
result, namely, \citealp[Theorem 2]{heifetz2007dynamic}; we do not
present a formal dynamic convergence result in this paper.} Note, in particular, that the assumption of $\tilde{\pi}_{i}\left(\alpha\right)$
being concave does not hold in our applications, and is not required.
\end{rem}

\subsection{Pareto Dominance\label{subsec:Pareto-Dominance}}

Finally, we show that any naive analytics equilibrium of any game
with strategic complements Pareto improves over the Nash equilibrium
payoffs of the underlying game. Moreover, the converse is true for
symmetric equilibria of games with strategic substitutes: Any symmetric
naive analytics equilibrium of any game with strategic substitutes
(which admits a symmetric Nash equilibrium) is Pareto dominated by
the Nash equilibrium of the underlying game. 
\begin{defn}
Strategy profile $x$ is symmetric if $x_{i}=x_{j}$ for any pair
$i,j\in N$.
\end{defn}
\begin{prop}
\label{prop:Pareto-dominate}If $\Gamma$ satisfies Assumptions \ref{assu:Monotone-externalities}--\ref{assu:consistent-second-adaptation}
and $\left(\alpha^{*},x^{*}\right)$ is a naive analytics equilibrium:
\end{prop}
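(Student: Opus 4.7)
The statement has two halves: Pareto dominance of the NAE over the Nash equilibrium for strategic complements, and the reverse ordering for symmetric NAE in strategic substitutes. The plan is to handle them with different tools. For complements I would insert an auxiliary equilibrium in which player $i$ alone reverts to unbiased analytics and then apply the envelope theorem plus monotone comparative statics. For symmetric substitutes I would write a direct two-term decomposition of $\pi_{i}(x^{*})-\pi_{i}(x^{N})$ and argue that both terms are nonpositive.

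\textbf{Complements.} Fix a NAE $(\alpha^{*},x^{*})$, a Nash equilibrium $x^{N}$ of the underlying game, and a player $i$. Let $x'$ denote an $(\alpha_{i}=1,\alpha^{*}_{-i})$-equilibrium, in which only player $i$'s analyst is unbiased. The NAE best-reply condition immediately gives $\pi_{i}(x^{*})\ge\pi_{i}(x')$. Since player $i$ now plays the unbiased best reply, I can write $\pi_{i}(x')=\phi_{i}(x'_{-i})$, where $\phi_{i}(x_{-i}):=\pi_{i}(BR_{i}(x_{-i}),x_{-i})$. The envelope theorem identifies $d\phi_{i}/dx_{j}=d\pi_{i}/dx_{j}$, so $\phi_{i}$ is monotone in each opponent's coordinate in the direction of the externality. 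It remains to show that $x'_{-i}-x_{-i}^{N}$ is componentwise ordered in that same direction. Corollary 2 (case 1) yields that at the NAE each opponent's biased best reply $BR_{j}^{\alpha_{j}^{*}}(x_{-j}^{*})$ sits on the externality side of the unbiased $BR_{j}(x_{-j}^{*})$, and robust complementarity (Assumption 4) together with Assumption 6 allows a Topkis-style comparison of the fixed-point systems defining $x'$ and $x^{N}$, transferring this directional ordering globally. Chaining then gives $\pi_{i}(x^{*})\ge\pi_{i}(x')=\phi_{i}(x'_{-i})\ge\phi_{i}(x_{-i}^{N})=\pi_{i}(x^{N})$.

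\textbf{Symmetric substitutes.} Write $x^{*}=(\bar{x}^{*},\ldots,\bar{x}^{*})$ and $x^{N}=(\bar{x}^{N},\ldots,\bar{x}^{N})$. Under strategic substitutes the symmetric-profile best-reply map $h(x):=BR_{i}(x,\ldots,x)$ is decreasing, hence crosses the identity from above at $\bar{x}^{N}$. Corollary 2 (case 2) states that $\bar{x}^{*}-h(\bar{x}^{*})$ has sign opposite to $d\pi_{i}/dx_{j}$; combined with the crossing property this pins the sign of $\bar{x}^{*}-\bar{x}^{N}$ as opposite to the externality. I then decompose
\[
\pi_{i}(x^{*})-\pi_{i}(x^{N})=\bigl[\pi_{i}(x_{i}^{*},x_{-i}^{*})-\pi_{i}(x_{i}^{*},x_{-i}^{N})\bigr]+\bigl[\pi_{i}(x_{i}^{*},x_{-i}^{N})-\pi_{i}(x_{i}^{N},x_{-i}^{N})\bigr].
\]
The first bracket is nonpositive since each coordinate of $x_{-i}^{*}-x_{-i}^{N}$ has sign opposite to $d\pi_{i}/dx_{j}$ (so moving opponents from $x_{-i}^{N}$ to $x_{-i}^{*}$ weakly lowers $\pi_{i}$). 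The second bracket is nonpositive because $\pi_{i}(\cdot,x_{-i}^{N})$ is maximized at $x_{i}^{N}=BR_{i}(x_{-i}^{N})$, so any deviation weakly lowers it. Summing yields $\pi_{i}(x^{*})\le\pi_{i}(x^{N})$.

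\textbf{Main obstacle.} The delicate step is the monotone-comparative-statics transfer in the complements case: Corollary 2 is only a pointwise statement about best replies at the NAE, and upgrading it to a componentwise ordering between the fixed points $x'$ and $x^{N}$ requires robust complementarity together with consistent secondary adaptation (Assumption 6). The symmetric-substitutes half is cleaner because once the position of $\bar{x}^{*}$ relative to $\bar{x}^{N}$ is pinned down, both pieces of the decomposition carry the same sign and no fixed-point comparison across perturbations is needed.
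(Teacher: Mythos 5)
Your strategic-substitutes argument is correct and is essentially the paper's own: the two-bracket decomposition is exactly the chain $\pi_{i}\left(x^{*}\right)<\pi_{i}\left(x_{i}^{*},x_{-i}^{NE}\right)\leq\pi_{i}\left(x^{NE}\right)$ used in Appendix D.2, and your ``crossing the identity'' argument for the sign of $\bar{x}^{*}-\bar{x}^{NE}$ is an equivalent (arguably cleaner) packaging of the paper's proof by contradiction via robust substitutes.

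The complements half, however, has a genuine gap at the step you yourself flag. You fix an \emph{arbitrary} Nash equilibrium $x^{N}$ and then need $x'_{-i}$ (the opponents' strategies in the $\left(\alpha_{i}=1,\alpha_{-i}^{*}\right)$-equilibrium) to lie componentwise on the beneficial-externality side of $x_{-i}^{N}$. Under strategic complements the underlying game can have multiple Nash equilibria (e.g., the team-production application, where Proposition \ref{prop:team-prodcution} only claims dominance over the \emph{lowest} equilibrium), and for the highest equilibrium neither this ordering nor the Pareto ranking holds; the proposition is an existence statement, and selecting the comparison equilibrium is an essential part of the proof. The paper does this with Lemma \ref{lem:strategic-complements-standard-lemma-1}, a fixed-point argument in a restricted game applied to $x^{*}$ itself, which produces a \emph{single} $x^{NE}$ ordered below $x^{*}$ that works for every player simultaneously (and also delivers claim (ii) of part 1, which your sketch omits); it then builds the auxiliary profile around that fixed $x^{NE}$ by pinning player $i$ at $x_{i}^{NE}$ and invoking the Stackelberg-leader characterization (Proposition \ref{pro:stackelberg-leader}). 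If instead you ran a selection argument separately for each player around your player-dependent $x'$, you would in general get different Nash equilibria for different players, which does not yield one Pareto-dominated profile. Moreover, the tool you invoke for the ``Topkis-style transfer,'' Assumption \ref{assu:consistent-second-adaptation}, is vacuous under strategic complements (the paper says so explicitly), so it cannot do this work; what is needed is a Lemma \ref{lem:strategic-complements-standard-lemma-1}-type comparison applied to $x'$, using that every coordinate of $x'$ is weakly on the externality side of its unbiased best reply (strictly for $j\neq i$, via the sign argument of Lemma \ref{lem-sign-depends-only-on-alpha_j} and $\alpha_{j}^{*}\neq1$). Your envelope-theorem chain $\pi_{i}\left(x^{*}\right)\geq\pi_{i}\left(x'\right)=\phi_{i}\left(x'_{-i}\right)\geq\phi_{i}\left(x_{-i}^{NE}\right)=\pi_{i}\left(x^{NE}\right)$ is a legitimate alternative skeleton (it uses the deviation-to-$\alpha_{i}=1$ clause of the NAE definition instead of Proposition \ref{pro:stackelberg-leader}), but it is not complete until the equilibrium-selection and fixed-point-comparison steps are supplied, and the strictness required by the statement also needs the comparison to be strict in at least one opponent coordinate.
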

\begin{enumerate}
\item If $\frac{d\pi_{i}^{2}}{dx_{i}dx_{j}}>0$ (strategic complements),
then there exists a Nash equilibrium $x^{NE}$ s.t. for each $i\in N$:
(i) $\pi_{i}\left(x^{*}\right)>\pi_{i}\left(x_{i}^{NE}\right)$ and
(ii) $sgn\left(x_{i}^{*}-x_{i}^{NE}\right)=sgn\left(\frac{d\pi_{j}}{dx_{i}}\right)$.
\item If $\frac{d\pi_{i}^{2}}{dx_{i}dx_{j}}<0$ (strategic substitutes),
then for each Nash equilibrium $x^{NE}$ there exists a player $i\in N$
s.t. $sgn\left(x_{i}^{*}-x_{i}^{NE}\right)=-sgn\left(\frac{d\pi_{j}}{dx_{i}}\right)$.
Moreover, if $x^{NE}$ and $x^{*}$ are both symmetric profiles, then
$\pi_{i}\left(x^{*}\right)<\pi_{i}\left(x^{NE}\right)$ for each player
$i\in N$.
\end{enumerate}
\begin{proof}[Proof of part (1); see Appx. \ref{subsec:Pareto-Dominance} for the
similar proof of part (2) and for the lemmas]
~ \\
Corollary \ref{cor:sign-strategic-compl-supp} implies that $\textrm{sgn}\left(x_{i}^{*}-BR_{i}\left(x_{-i}^{*}\right)\right)=\textrm{sgn}\left(\frac{d\pi_{i}}{dx_{j}}\right)$
for each $i\in N$. A standard argument for games with strategic complements
(proven in Lemma \ref{lem:strategic-complements-standard-lemma-1})
implies that there must be a Nash equilibrium $x^{NE}$, such that
$sgn\left(x_{i}^{*}-x_{i}^{NE}\right)=sgn\left(\frac{d\pi_{j}}{dx_{i}}\right)$
for each $i\in N$. We are left with showing that $\pi_{i}\left(x^{*}\right)>\pi_{i}\left(x_{i}^{NE}\right)$.
Let $x'=\left(x_{i}^{NE},x'_{-i}\right)$ be a $\left(x_{i}^{NE},\alpha_{-i}\right)$-equilibrium.
It is simple to see (and formally proven in Lemma \ref{lem-sign-depends-only-on-alpha_j})
that $\textrm{sgn}\left(x'_{j}-BR_{j}\left(x'_{-j}\right)\right)=\textrm{sgn}\left(\frac{d\pi_{j}}{dx_{i}}\right)$
for any $j\neq i$. due to Lemma \ref{lem-sign-depends-only-on-alpha_j},
this implies that $sgn\left(x'_{j}-x_{j}^{NE}\right)=sgn\left(\frac{d\pi_{j}}{dx_{i}}\right)$
for each $j\neq i\in N$. Proposition \ref{pro:stackelberg-leader}
implies that $\pi_{i}\left(x^{*}\right)\geq\pi_{i}\left(x'\right)$
(due to $x_{i}^{*}$ being a Stackelberg-leader action with respect
to $\alpha_{-i}$). Finally, the payoff externalities imply that $\pi_{i}\left(x'\right)=\pi_{i}\left(x_{i}^{NE},x'_{-i}\right)>\pi_{i}\left(x^{NE}\right)$.
Combining the two inequalities yield that $\pi_{i}\left(x^{*}\right)>\pi_{i}\left(x_{i}^{NE}\right)$. 
\end{proof}

\subsection{Summary of Results\label{subsec:Summary-of-Results}}

Assumptions \ref{assu:Monotone-externalities}--\ref{assu:Monotone-partial-deriv}
map to eight combinations of the signs of the strategic complementarity
$\frac{d\pi_{i}^{2}}{dx_{i}dx_{j}}$, payoff externalities $\frac{d\Pi_{i}}{dx_{j}},$and
partial derivative $\frac{\partial\pi_{i}}{\partial x_{i}}$. Effectively,
these eight combinations define four unique classes of games since
relabeling strategies as their negative values (i.e., replacing $x_{i}$
with $-x_{i}$) results in essentially the same game with opposite
signs to each of three monotone derivatives (see, Fact \ref{fact:relabeling}
in Appendix \ref{subsec:Fact-on-Relabeling}). Table \ref{tab:Summary-of-Results}
summarizes our results for these four classes (and the remaining four
classes are presented in Appendix \ref{subsec:Fact-on-Relabeling}).

\begin{table}[h]
\caption{\label{tab:Summary-of-Results}Summary of Results Under Assumptions
\ref{assu:Monotone-externalities}--\ref{assu:consistent-second-adaptation}}

\medskip{}

~\\
\begin{tabular}{|>{\centering}p{3.2cm}|>{\centering}p{2.35cm}|>{\centering}p{1.1cm}|>{\centering}p{1.2cm}|>{\centering}p{1.7cm}|>{\centering}p{1.7cm}|>{\centering}p{2.4cm}|}
\hline 
~\\
Applications (Sections \ref{subsec:Price-Competition}--\ref{sec:Team-Production}) & Strategic comp./sub.\\
~\\
 $\frac{d\pi_{i}^{2}}{dx_{i}dx_{j}}$ & Payoff~\\
Extr.\\
~\\
 $\frac{d\Pi_{i}}{dx_{j}}$ & Partial derivative\\
$\frac{\partial\pi_{i}}{\partial x_{i}}$ & Perceived best replying (Prop. \ref{prop:percieved-BR-vs-unbiased-BR}) & Analytics bias\\
(Prop. \ref{pro:alpha}) & Pareto Dominance (Prop. \ref{prop:Pareto-dominate})\tabularnewline
\hline 
Price competition

w. subs. goods

(motiv. example) & \multirow{2}{2.35cm}{\textbf{\textcolor{green}{~}}\\
\textbf{\textcolor{green}{~~~~~~+}}\\
~\\
~~~Strategic complements\textbf{\textcolor{green}{{} }}} & \textbf{\textcolor{green}{~}}\\
\textbf{\textcolor{green}{+}} & \textbf{\textcolor{green}{~}}\\
\textbf{\textcolor{green}{+}} & \multirow{4}{1.7cm}{\\
~\\
~\\
\textcolor{blue}{~~}\\
\textcolor{blue}{~}\\
~$\,\,x_{i}^{*}>$\\
$BR_{i}\left(x_{-i}^{*}\right)$} & \textcolor{brown}{~}\\
\textcolor{brown}{$\alpha_{i}^{*}<1$} & ~\\
$\forall i\,x_{i}^{*}>x_{i}^{NE}$\\
NAE\tabularnewline
\cline{1-1} \cline{3-4} \cline{4-4} \cline{6-6} 
Advertising with

positive exter. \& team production &  & \textbf{~}\\
\textbf{\textcolor{green}{+}} & \textbf{\textcolor{red}{~}}\\
\textbf{\textcolor{red}{-}} &  & \textcolor{blue}{~}\\
\textcolor{blue}{$\alpha_{i}^{*}>1$}\\
 & Pareto

dominates NE.\tabularnewline
\cline{1-4} \cline{2-4} \cline{3-4} \cline{4-4} \cline{6-7} \cline{7-7} 
Price competition

with complement goods & \multirow{2}{2.35cm}{\textbf{\textcolor{green}{~}}\\
\textbf{\textcolor{green}{~~~~~~}}\textbf{\textcolor{red}{-}}\\
~\\
~~~Strategic Substitutes} & \textbf{~}\\
\textbf{\textcolor{red}{-}} & \textbf{\textcolor{green}{~}}\\
\textbf{\textcolor{green}{+}} &  & \textcolor{brown}{~}\\
\textcolor{brown}{$\alpha_{i}^{*}<1$}\\
 & ~\\
$\exists i\,x_{i}^{*}>x_{i}^{NE}$\\
Symmetric\tabularnewline
\cline{1-1} \cline{3-4} \cline{4-4} \cline{6-6} 
Advertising with

negative externalities &  & \textbf{\textcolor{red}{~}}\\
\textbf{\textcolor{red}{-}} & \textbf{\textcolor{red}{~}}\\
\textbf{\textcolor{red}{-}} &  & \textcolor{blue}{~}\\
\textcolor{blue}{$\alpha_{i}^{*}>1$}\\
 & NE Pareto dominates sym. NAE\tabularnewline
\hline 
\end{tabular}
\end{table}

\section{Applications\label{sec:Applications}}

We present three applications of our model: price competition, advertising
competition, and team production. Before analyzing the applications,
we present an auxiliary characterization result for the level of naivete
in a naive analytics equilibrium.

\subsection{Auxiliary Result: Characterization of $\alpha^{*}$}

In what follows we obtain a useful characterization for the level
of naivete in a naive analytics equilibrium. The presentation of this
result is much simplified under the assumption of unique perceived
best responses (which holds in all the applications). 
\begin{assumption}
\label{assu:well-behave-best-responses} For each player $i\in N$,
\textup{strategy} $x_{i}\in X_{i}$, and profile $\alpha_{-i}\in A^{n-1}$,
there exists a unique $\left(x_{i},\alpha_{-i}\right)$-equilibrium,
which we denote by $x_{-i}\left(x_{i},\alpha_{-i}\right)\in X_{-i}$.
\end{assumption}
Our next result shows that the magnitude of bias of player $i$ in
assessing her sensitivity of demand (i.e., $\frac{\partial q_{i}^{\alpha_{i}^{*}}}{\partial x_{i}}-\frac{\partial q_{i}\left(x^{*}\right)}{\partial x_{i}}$)
is a sum of the products of the impact of the player's strategy on
each of her opponent's strategies (i.e., ($\frac{\text{d\ensuremath{x_{j}\left(x_{i}^{*},\alpha_{j}^{*}\right)}}}{dx_{i}}$)
times the impact of the opponent's strategy on the player's demand
(i.e., $\frac{\partial q_{i}\left(x^{*}\right)}{\partial x_{j}}$).
Formally:
\begin{claim}
\label{claim:Stackelberg-derivative}Let $\left(\alpha^{*},x^{*}\right)$
be a naive analytics equilibrium of a game that satisfies Assumption
\ref{assu:well-behave-best-responses}. If $x_{i}^{*}\in Int\left(X_{i}\right)$,
$\alpha_{i}^{*}\in Int\left(A_{i}\right)$, and $x_{j}\left(x_{i},\alpha_{-i}\right)$
is differentiable at $\left(x_{i}^{*},\alpha_{-i}^{*}\right)$ for
each player $i,j$ then 
\[
\frac{\partial q_{i}^{\alpha_{i}^{*}}}{\partial x_{i}}-\frac{\partial q_{i}\left(x^{*}\right)}{\partial x_{i}}=\sum_{j\neq i}\frac{\text{d\ensuremath{x_{j}\left(x_{i}^{*},\alpha_{j}^{*}\right)}}}{dx_{i}}\cdot\frac{\partial q_{i}\left(x^{*}\right)}{\partial x_{j}}.
\]
\end{claim}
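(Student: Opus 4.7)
The strategy is to compare two first-order conditions at $x^{*}$: the second-stage (biased) FOC from the $\alpha^{*}$-equilibrium condition, and the first-stage (Stackelberg) FOC that $x_{i}^{*}$ satisfies once we recognize it as a Stackelberg-leader strategy. First I would invoke Proposition \ref{pro:stackelberg-leader} to conclude that $x_{i}^{*}\in X_{i}^{SL}(\alpha_{-i}^{*})$, so $x_{i}^{*}$ maximizes $\pi_{i}\!\left(x_{i},x_{-i}(x_{i},\alpha_{-i}^{*})\right)$ over $x_{i}\in X_{i}$. Since $x_{i}^{*}\in Int(X_{i})$, since $q_{i}$ and $\pi_{i}$ are $C^{1}$ by the maintained smoothness of Section \ref{subsec:Underlying-Game}, and since $x_{-i}(\cdot,\alpha_{-i}^{*})$ is differentiable at $x_{i}^{*}$ by hypothesis, the composition is differentiable at $x_{i}^{*}$ and interior maximality gives the Stackelberg FOC
\[
\frac{\partial\pi_{i}}{\partial x_{i}}(x^{*})+\frac{\partial\pi_{i}}{\partial q_{i}}(x^{*})\cdot\left(\frac{\partial q_{i}(x^{*})}{\partial x_{i}}+\sum_{j\neq i}\frac{\partial q_{i}(x^{*})}{\partial x_{j}}\cdot\frac{dx_{j}(x_{i}^{*},\alpha_{-i}^{*})}{dx_{i}}\right)=0.
\]

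Next I would write the biased second-stage FOC \eqref{eq:biased-FOC-SOC} that $x^{*}$ satisfies as an $\alpha^{*}$-equilibrium,
\[
\frac{\partial\pi_{i}}{\partial x_{i}}(x^{*})+\frac{\partial\pi_{i}}{\partial q_{i}}(x^{*})\cdot\frac{\partial q_{i}^{\alpha_{i}^{*}}}{\partial x_{i}}=0,
\]
and subtract it from the Stackelberg FOC. The $\frac{\partial\pi_{i}}{\partial x_{i}}(x^{*})$ term cancels, leaving
\[
\frac{\partial\pi_{i}}{\partial q_{i}}(x^{*})\cdot\left(\frac{\partial q_{i}^{\alpha_{i}^{*}}}{\partial x_{i}}-\frac{\partial q_{i}(x^{*})}{\partial x_{i}}-\sum_{j\neq i}\frac{\partial q_{i}(x^{*})}{\partial x_{j}}\cdot\frac{dx_{j}(x_{i}^{*},\alpha_{-i}^{*})}{dx_{i}}\right)=0.
\]

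To finish, I would divide by $\frac{\partial\pi_{i}}{\partial q_{i}}(x^{*})$, which is nonzero: by \eqref{eq:opposite-signs} the product $\frac{\partial\pi_{i}}{\partial x_{i}}\cdot\frac{\partial\pi_{i}}{\partial q_{i}}\cdot\frac{\partial q_{i}}{\partial x_{i}}$ is strictly negative at every profile in $X$, so in particular $\frac{\partial\pi_{i}}{\partial q_{i}}(x^{*})\neq0$. This yields exactly the claimed identity. The one place that demands care is the justification of the Stackelberg FOC, i.e., ensuring that the optimum is interior, that $x_{-i}(\cdot,\alpha_{-i}^{*})$ is well-defined on a neighborhood of $x_{i}^{*}$ (guaranteed by Assumption \ref{assu:well-behave-best-responses}), and that the envelope-style chain rule is legitimate (guaranteed by the assumed differentiability); once those are in hand the rest is algebra.
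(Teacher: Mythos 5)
Your proposal is correct and follows essentially the same route as the paper: take the first-order condition of the Stackelberg-leader problem $\max_{x_i}\pi_i\bigl(x_i,x_{-i}(x_i,\alpha_{-i}^*)\bigr)$ (justified via Proposition \ref{pro:stackelberg-leader}), subtract the biased second-stage FOC \eqref{eq:biased-FOC-SOC}, and divide by $\frac{\partial\pi_i}{\partial q_i}$. In fact your write-up is slightly more careful than the paper's, since you explicitly justify interiority, the chain rule, and the nonvanishing of $\frac{\partial\pi_i}{\partial q_i}$ via \eqref{eq:opposite-signs}, steps the paper's algebra performs implicitly.
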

\begin{proof}
The claim is implied by substituting the biased FOC (\ref{eq:biased-FOC-SOC})
in the FOC of a Stackelberg-leader strategy (Proposition \ref{pro:stackelberg-leader}):
\[
0=\frac{d\pi_{i}\left(x_{i}^{*},q_{i}\left(x_{i}^{*},x_{-i}\left(x_{i}^{*},\alpha_{j}^{*}\right)\right)\right)}{dx_{i}}=\frac{\partial\pi_{i}}{\partial x_{i}}+\frac{\partial\pi_{i}}{\partial q_{i}}\cdot\frac{\partial q_{i}}{\partial x_{i}}+\sum_{j\neq i}\frac{\text{d\ensuremath{x_{j}\left(x_{i},\alpha_{-i}^{*}\right)}}}{dx_{i}}\cdot\frac{\partial q_{i}}{\partial x_{j}}\cdot\frac{\partial\pi_{i}}{\partial q_{i}}=
\]
\[
0=\underset{=0\,\,\,\textrm{(FOC\,\,of\,\,\eqref{eq:biased-FOC-SOC})}}{\underbrace{\frac{\partial\pi_{i}}{\partial x_{i}}+\cdot\frac{\partial\pi_{i}}{\partial q_{i}}\cdot\frac{\partial q_{i}^{\alpha_{i}^{*}}}{\partial x_{i}}\cdot\frac{\partial q_{i}}{\partial x_{i}}}}+\left(\frac{\partial\pi_{i}}{\partial q_{i}}-\frac{\partial q_{i}^{\alpha_{i}^{*}}}{\partial x_{i}}\right)\cdot\frac{\partial q_{i}}{\partial x_{i}}+\sum_{j\neq i}\frac{\text{d\ensuremath{x_{j}\left(x_{i},\alpha_{-i}^{*}\right)}}}{dx_{i}}\cdot\frac{\partial q_{i}}{\partial x_{j}}\cdot\frac{\partial\pi_{i}}{\partial q_{i}}\Rightarrow
\]
\[
0=\left(\frac{\partial\pi_{i}}{\partial q_{i}}-\frac{\partial q_{i}^{\alpha_{i}^{*}}}{\partial x_{i}}\right)\cdot\frac{\partial q_{i}}{\partial x_{i}}+\sum_{j\neq i}\frac{\text{d\ensuremath{x_{j}\left(x_{i},\alpha_{-i}^{*}\right)}}}{dx_{i}}\cdot\frac{\partial q_{i}}{\partial x_{j}}\Rightarrow\left(\frac{\partial\pi_{i}}{\partial q_{i}}-\frac{\partial q_{i}^{\alpha_{i}^{*}}}{\partial x_{i}}\right)=\sum_{j\neq i}\frac{\text{d\ensuremath{x_{j}\left(x_{i},\alpha_{-i}^{*}\right)}}}{dx_{i}}\cdot\frac{\partial q_{i}}{\partial x_{j}}.\qedhere
\]
\end{proof}

\subsection{Price Competition\label{subsec:Price-Competition}}

\paragraph{Underlying Game}

The underlying game $G_{p}=\left(N,X,q,\pi\right)$ is a price competition
with linear demand (generalizing the motivating example of Section
\ref{sec:motivation}). Each seller $i\in N$ sets a price $x_{i}\in X_{i}$,
where $X_{i}=\mathbb{R}_{+}$ if $c_{i}>0$ and $X_{i}=\left[0,\frac{a_{i}}{b_{i}}\right]$
if $c_{i}<0$ . Limiting the maximum price to $\frac{a_{i}}{b_{i}}$
is without loss of generality because setting a higher price implies
that the seller's demand cannot be positive. The demand of each firm
$i$ is:\footnote{Our results remain the same if one adapts the demand function to be
non-negative, i.e., $q_{i}\left(x\right)=\max\left(a_{i}-b_{i}x_{i}+c_{i}\cdot x_{-i},0\right)$.
We refrain from doing so, as it will require rephrasing the assumptions
of the general model (Section \ref{sec:General-Results}) in a more
cumbersome way: assuming demand to be non-negative, and allowing the
various monotone derivatives to be equal to zero when the demand is
equal to zero.}

\begin{equation}
q_{i}\left(x\right)=a_{i}-b_{i}x_{i}+c_{i}\cdot\bar{x},\,\,\,\,\,\,\,\bar{x}\equiv\sum_{j}w_{j}\cdot x_{j}\label{eq:linear-bertrand-demand}
\end{equation}
where $\bar{x}$ is a weighted mean price with weights $w_{j}>0$
and $\sum_{j}w_{j}=1$, and where $c_{i}\cdot c_{j},a_{i},b_{i}>0$,
and $\left|c_{i}\right|<b_{i}$ for each $i,j\in N$. The sign of
$c_{i}$ (which coincides with sign of $c_{-i}$) determines whether
the sold goods are substitutes ($c_{i}>0$ as in the standard differentiated
Bertrand competition) or complements ($c_{i}<0$ as in a case of two
stores that sell complementary goods, such as kitchen appliances and
cooking ingredients, or as in the case of adjacent stores that sell
unrelated goods, and a price decrease in one of the stores attracts
more customers that also visit the neighboring store). The inequality
$\left|c_{i}\right|<b_{i}$ constrains the cross-elasticity parameters
to be sufficiently small relative to the own-elasticity parameters.
If $c_{i}<0$ then we further require an additional upper bound on
the the cross-elasticity: $\sum_{j\neq i}\frac{\left|c_{j}\right|}{b_{j}}<\frac{1}{w_{i}}$
 which implies Assumption \ref{assu:consistent-second-adaptation}.

Finally, the profit of each firm is given by $\pi_{i}\left(x_{i},q_{i}\right)=q_{i}\left(x\right)\cdot x_{i}$.
This profit function corresponds to constant marginal costs, which
have been normalized to zero. Observe that game $G_{p}$ has strategic
complements if $c_{i}>0$ and strategic substitutes if $c_{i}<0$.
Lemma \ref{lem:unqiue-alpha-eq-Price-competition-game} in Appendix
\ref{subsec:Proof-of-oligopoly} shows that the price competition
game admits a unique Nash equilibrium, which we denote by $x^{NE}$. 

\paragraph{Results}

Let $\Gamma_{p}=\left(G_{p},R_{++},f_{m}\right)$ be the analytics
game with multiplicative biases $A\equiv R_{++}$, $f_{m}\left(\frac{\partial q_{i}}{\partial x_{i}},\alpha_{i}\right)=\alpha_{i}\cdot\frac{\partial q_{i}}{\partial x_{i}}$.
Our first result shows that price competition satisfies all the assumptions
of the general model, and that in any naive analytics equilibrium:
\begin{enumerate}
\item All firms hire analysts that underestimate the elasticity of demand
(a result which fit the direction of demand elasticity bias when
not controlling for price endogeneity, as in, e.g., Table 1 of \citealp{berry1994estimating}
and Table 2 of \citealp{villas1999endogeneity});
\item The prices are higher than in the Nash equilibrium; and
\item The naive analytics equilibrium Pareto dominates (resp., is Pareto
dominated by) the Nash equilibrium if the game has strategic complements
(resp., substitutes).
\end{enumerate}
\begin{prop}
\label{prop:price-satisfies-Assumptions}$\Gamma_{p}$ satisfies Assumptions
\ref{assu:Monotone-externalities}--\ref{assu:consistent-second-adaptation}.
Moreover, any NAE $\left(\alpha^{*},x^{*}\right)$ of $\Gamma_{p}$
satisfies:
\begin{enumerate}
\item Underestimation of elasticity of demand: $\alpha_{i}^{*}<1.$
\item Prices are higher than the Nash equilibrium prices: $x_{i}^{*}\left(\alpha^{*}\right)>x_{i}^{NE}.$
\item Pareto dominance relative to the Nash equilibrium:\\
 $c_{i}>0\Rightarrow\pi_{i}\left(x^{*}\right)>\pi_{i}\left(x^{NE}\right)$,
and $c_{i}<0\Rightarrow\pi_{i}\left(x^{*}\right)<\pi_{i}\left(x^{NE}\right)$.
\end{enumerate}
\end{prop}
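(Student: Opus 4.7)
My plan is to verify the general assumptions of Section~\ref{sec:General-Results} for $\Gamma_p$, pin down the direction of the analytics bias using Claim~\ref{claim:Stackelberg-derivative}, and then extract the price and Pareto conclusions via Proposition~\ref{prop:Pareto-dominate}, handling the complements regime separately by exploiting linearity of demand.

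\emph{Stage 1 (assumptions).} Direct differentiation yields $\partial q_i/\partial x_i=c_iw_i-b_i<0$ (using $|c_i|<b_i$ and $w_i\in(0,1)$), $\partial q_i/\partial x_j=c_iw_j$, $\partial\pi_i/\partial x_i=q_i>0$, $\partial\pi_i/\partial q_i=x_i>0$. Hence $d\pi_i/dx_j=x_ic_iw_j$ and $d^2\pi_i^{\alpha_i}/dx_idx_j=c_iw_j$ share the common sign $\textrm{sgn}(c_i)$ across players (Assumptions~\ref{assu:Monotone-externalities}, \ref{assu:Monotone-partial-deriv}, \ref{assu:Strong-com-subs}), while $d^2\pi_i^{\alpha_i}/dx_i^2=(1+\alpha_i)(c_iw_i-b_i)<0$ delivers Assumption~\ref{assu:Strong-concavity}. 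The biased best reply admits the explicit linear form
\[
BR_i^{\alpha_i}(x_{-i})=\frac{a_i+c_i\sum_{j\neq i}w_jx_j}{(1+\alpha_i)(b_i-c_iw_i)},
\]
which together with $|c_i|<b_i$ (and, when $c_i<0$ with $n\geq 3$, the supplementary bound $\sum_{j\neq i}|c_j|/b_j<1/w_i$) makes the best-reply iteration a contraction and so yields an invariant box, giving Assumption~\ref{assu:bounded-percieved-BR}. Assumption~\ref{assu:consistent-second-adaptation} is vacuous when $c_i>0$ and, when $c_i<0$, follows from the same cross-elasticity bound.

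\emph{Stage 2 (bias direction).} Applying Claim~\ref{claim:Stackelberg-derivative} with the multiplicative bias, the left-hand side reduces to $(\alpha_i^*-1)(c_iw_i-b_i)$, so its sign is $-\textrm{sgn}(\alpha_i^*-1)$. The right-hand side $\sum_{j\neq i}(dx_j/dx_i)(x_i^*,\alpha_{-i}^*)\cdot c_iw_j$ is a sum of products of two factors that share a sign in each regime: both positive when $c_i>0$ (strategic complements give $dx_j/dx_i>0$), both negative when $c_i<0$. Thus the right-hand side is positive in both cases, forcing $\alpha_i^*<1$. As a byproduct, $BR_i^{\alpha_i^*}(x_{-i}^*)=\tfrac{2}{1+\alpha_i^*}BR_i(x_{-i}^*)>BR_i(x_{-i}^*)$, and the NAE fixed-point condition $x_i^*=BR_i^{\alpha_i^*}(x_{-i}^*)$ recovers the perceived-best-reply inequality $x_i^*>BR_i(x_{-i}^*)$ predicted by Corollary~\ref{cor:sign-strategic-compl-supp}.

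\emph{Stage 3 (prices and Pareto).} For substitute goods ($c_i>0$) the game has strategic complements with positive externalities, so Proposition~\ref{prop:Pareto-dominate}(1) immediately delivers both $x_i^*>x_i^{NE}$ and $\pi_i(x^*)>\pi_i(x^{NE})$ for every $i$. The complements case ($c_i<0$) is the main obstacle because Proposition~\ref{prop:Pareto-dominate}(2) only inverts the price ordering for a single player and only gives Pareto dominance under symmetry. Here I would exploit linearity. The NAE and NE FOCs are the linear systems $M^*x^*=a$ and $M^{NE}x^{NE}=a$ with $M^*_{ii}=(1+\alpha_i^*)(b_i-c_iw_i)$, $M^{NE}_{ii}=2(b_i-c_iw_i)$ and $M^*_{ij}=M^{NE}_{ij}=-c_iw_j$, whose difference yields $M^*(x^*-x^{NE})=D\,x^{NE}$ with $D=\textrm{diag}((1-\alpha_i^*)(b_i-c_iw_i))>0$. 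Moreover the biased FOC $q_i=\alpha_i^*(b_i-c_iw_i)x_i^*$ gives the identity $\pi_i(x^*)=\alpha_i^*(b_i-c_iw_i)(x_i^*)^2$ (and $\pi_i(x^{NE})=(b_i-c_iw_i)(x_i^{NE})^2$), so the Pareto comparison reduces to the sign of $\alpha_i^*(x_i^*)^2-(x_i^{NE})^2$. The genuine work lies in the sign analysis of $(M^*)^{-1}Dx^{NE}$ when $c_i<0$: in the substitutes case $M^*$ is a diagonally-dominant M-matrix so $(M^*)^{-1}\geq 0$ and $x^*-x^{NE}>0$ follows for free, but for complements $M^*$ has positive off-diagonal entries and one must exploit the cross-elasticity bound of Stage~1 to rule out sign cancellation and obtain the correct coordinatewise ordering, which then combines with the bound $\alpha_i^*<1$ to yield the desired reversed Pareto inequality.
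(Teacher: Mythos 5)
Your Stage~1 and Stage~2, and the substitutes half of Stage~3, are essentially fine and track the paper. The assumption verification is the same computation the paper does (the paper packages the bounded-best-reply step into two small lemmas rather than a contraction argument, but that is cosmetic). Deriving $\alpha_i^*<1$ from Claim~\ref{claim:Stackelberg-derivative} rather than from Proposition~\ref{pro:alpha} is a legitimate variant --- it is exactly the computation the paper performs when it characterizes $\alpha^*$ explicitly in Propositions~\ref{prop:price-competetion-unqiue-NAE} and~\ref{prop:symmetric-oligopoly-prices} --- but you should actually verify, not merely assert, that $\frac{dx_{j}\left(x_{i},\alpha_{-i}^{*}\right)}{dx_{i}}$ has the sign of $c_{j}$ for $n\geq3$ (solve the linear $\left(x_{i},\alpha_{-i}\right)$-equilibrium through the aggregate $\bar{x}$), and note that Claim~\ref{claim:Stackelberg-derivative} requires uniqueness and differentiability of the induced equilibrium, which hold here by linearity. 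For $c_{i}>0$ the appeal to Proposition~\ref{prop:Pareto-dominate}(1) plus uniqueness of the Nash equilibrium is exactly the paper's argument.

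The genuine gap is the complement-goods case $c_{i}<0$, which you correctly identify as the obstacle but do not resolve. First, the coordinatewise ordering $x^{*}>x^{NE}$ is deferred to a ``sign analysis of $\left(M^{*}\right)^{-1}Dx^{NE}$'' that you never carry out, and this is precisely the nontrivial step: with $c_{i}<0$ the off-diagonal entries of $M^{*}$ are positive, so no M-matrix or monotonicity argument applies off the shelf. Second, and more seriously, your proposed Pareto step cannot close as described: from $\pi_{i}\left(x^{*}\right)=\alpha_{i}^{*}\left(b_{i}-c_{i}w_{i}\right)\left(x_{i}^{*}\right)^{2}$ and $\pi_{i}\left(x^{NE}\right)=\left(b_{i}-c_{i}w_{i}\right)\left(x_{i}^{NE}\right)^{2}$, the two facts you have, $\alpha_{i}^{*}<1$ and $x_{i}^{*}>x_{i}^{NE}$, push the comparison of $\alpha_{i}^{*}\left(x_{i}^{*}\right)^{2}$ with $\left(x_{i}^{NE}\right)^{2}$ in opposite directions, so ``combining with $\alpha_{i}^{*}<1$'' yields nothing without a quantitative bound you do not supply. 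The paper closes both points in a few lines by exploiting that demand depends on opponents only through $\bar{x}$: the equilibrium aggregate solves the scalar equation \eqref{eq:x-mean-equation-alpha}, whose left-hand side is decreasing in $\bar{x}$ and in each $\alpha_{i}$ when $c_{i}<0$, so $\alpha_{i}^{*}<1$ forces $\bar{x}^{*}>\bar{x}^{NE}$ and hence, via \eqref{eq:x_i_NE-1}, $x_{i}^{*}>x_{i}^{NE}$ for every $i$; then the payoff comparison is the chain $\pi_{i}\left(x^{*}\right)<\pi_{i}\left(x_{i}^{*},x_{-i}^{NE}\right)\leq\pi_{i}\left(x^{NE}\right)$, where the first inequality uses the negative externalities together with $x_{-i}^{*}>x_{-i}^{NE}$ and the second uses that $x_{i}^{NE}$ is the unbiased best reply to $x_{-i}^{NE}$. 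You should replace the complements branch of your Stage~3 with this aggregate-price monotonicity argument (or supply the missing quantitative analysis, which your sketch does not indicate how to do).
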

The proof is presented in Appendix \ref{subsec:Proof-of-oligopoly}.

Our next results explicitly characterize the unique naive analytics
equilibrium in two important special cases (1) duopolies and (2) symmetric
oligopolies. This characterization will be used in analyzing the impact
of merges in Section \ref{sec:Implication-for-Merges}. Proposition
\ref{prop:price-competetion-unqiue-NAE} shows that in duopolistic
competition, both firms use the same level of naivete, which is decreasing
in the ratio between the cross-elasticity parameters and the own-elasticity
parameters. It is convenient to relabel the variables in the duopoly
case as follows: (1) $\tilde{c}_{i}=c_{i}w_{-i}$, and (2) $\tilde{b}_{i}=b_{i}-c_{i}w_{i}$.
With this relabeling the demand function is simplified to be: $q_{i}\left(x\right)=a_{i}-b_{i}x_{i}-c_{i}\bar{x}=a_{i}-\tilde{b}_{i}x_{i}+\tilde{c}_{i}x_{-i}$
\begin{prop}
\label{prop:price-competetion-unqiue-NAE}Assume that $n=2$. Then
$\Gamma_{p}$ admits a unique naive analytics equilibrium $\left(\alpha^{*},x^{*}\right)$
in which $\alpha_{1}^{*}=\alpha_{2}^{*}=\sqrt{1-\frac{\tilde{c}_{i}\tilde{c}_{-i}}{\tilde{b}_{i}\tilde{b}_{-i}}}.$
\end{prop}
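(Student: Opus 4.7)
The plan is to use Claim \ref{claim:Stackelberg-derivative} to derive two simultaneous equations in $(\alpha_1^{*},\alpha_2^{*})$, which will immediately force symmetry and pin down the common value.

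First I would write down the perceived first-order condition in the relabeled duopoly. With $\pi_{i}=x_{i}q_{i}$ we have $\frac{\partial\pi_{i}}{\partial x_{i}}=q_{i}$ and $\frac{\partial\pi_{i}}{\partial q_{i}}=x_{i}$, while the multiplicative bias gives $\frac{\partial q_{i}^{\alpha_{i}}}{\partial x_{i}}=-\alpha_{i}\tilde b_{i}$. Plugging these into \eqref{eq:biased-FOC-SOC} yields $q_{i}=\alpha_{i}\tilde b_{i}x_{i}$, so the perceived best reply to $x_{-i}$ is
\[
x_{i}(x_{-i};\alpha_{i})=\frac{a_{i}+\tilde c_{i}x_{-i}}{(1+\alpha_{i})\tilde b_{i}},
\qquad
\frac{d x_{-i}(x_{i},\alpha_{-i})}{dx_{i}}=\frac{\tilde c_{-i}}{(1+\alpha_{-i})\tilde b_{-i}}.
\]
The SOC $\frac{d^{2}\pi_{i}^{\alpha_{i}}}{dx_{i}^{2}}=-2\alpha_{i}\tilde b_{i}<0$ holds, and solving the two best-reply equations gives a unique $\alpha$-equilibrium $x^{*}(\alpha)\in Int(X)$ for any $\alpha\in\mathbb{R}_{++}^{2}$, so Assumption \ref{assu:well-behave-best-responses} applies and Claim \ref{claim:Stackelberg-derivative} can be invoked.

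Second, I would substitute into Claim \ref{claim:Stackelberg-derivative}. Since $\frac{\partial q_{i}}{\partial x_{i}}=-\tilde b_{i}$, $\frac{\partial q_{i}^{\alpha_{i}^{*}}}{\partial x_{i}}=-\alpha_{i}^{*}\tilde b_{i}$, and $\frac{\partial q_{i}}{\partial x_{-i}}=\tilde c_{i}$, the identity becomes
\[
(1-\alpha_{i}^{*})\,\tilde b_{i}\;=\;\frac{\tilde c_{-i}}{(1+\alpha_{-i}^{*})\tilde b_{-i}}\cdot\tilde c_{i}
\;\Longleftrightarrow\;
(1-\alpha_{i}^{*})(1+\alpha_{-i}^{*})=k,\qquad k\equiv\frac{\tilde c_{1}\tilde c_{2}}{\tilde b_{1}\tilde b_{2}}.
\]
Writing the symmetric equation with $i$ and $-i$ swapped and subtracting gives $2(\alpha_{-i}^{*}-\alpha_{i}^{*})=0$, so $\alpha_{1}^{*}=\alpha_{2}^{*}=:\alpha^{*}$. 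Then $(1-\alpha^{*})(1+\alpha^{*})=k$ yields $\alpha^{*}=\sqrt{1-k}$, which is well-defined and strictly positive because the parameter restrictions $|c_{i}|<b_{i}$ (for substitutes) and $\sum_{j\neq i}|c_{j}|/b_{j}<1/w_{i}$ (for complements) imply $|k|<1$ in the duopoly. Since $\alpha^{*}\in Int(\mathbb{R}_{++})$ and $x^{*}(\alpha^{*})\in Int(X)$, the interiority hypotheses of Claim \ref{claim:Stackelberg-derivative} are automatically satisfied at any candidate NAE, so the derivation above is both necessary and, via Proposition \ref{pro:stackelberg-leader}, sufficient.

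For existence and uniqueness, note that the $\alpha^{*}$ just found uniquely determines the $\alpha^{*}$-equilibrium $x^{*}=x^{*}(\alpha^{*})$ through the solved FOC system. To finish, I would verify that $x_{i}^{*}\in X_{i}^{SL}(\alpha_{-i}^{*})$: the induced reduced-form payoff $\pi_{i}(x_{i},x_{-i}(x_{i},\alpha_{-i}^{*}))$ is quadratic in $x_{i}$ with negative leading coefficient (since $\tilde b_{i}-\tilde c_{i}\tilde c_{-i}/[(1+\alpha_{-i}^{*})\tilde b_{-i}]>0$ under the parameter restrictions), so the Stackelberg FOC satisfied at $x_{i}^{*}$ is also sufficient. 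Combined with Proposition \ref{pro:stackelberg-leader}, this shows $(\alpha^{*},x^{*})$ is an NAE, and the only one. The main obstacle I anticipate is the verification in this last step, specifically showing concavity of the reduced-form Stackelberg objective globally on $X_{i}$ (rather than just at the critical point) under the complements case $c_{i}<0$, which is where the auxiliary bound $\sum_{j\neq i}|c_{j}|/b_{j}<1/w_{i}$ does the real work.
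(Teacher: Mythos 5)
Your proposal is correct and follows essentially the same route as the paper: derive the biased best-reply slope $\frac{dx_{-i}}{dx_i}=\frac{\tilde c_{-i}}{(1+\alpha_{-i}^*)\tilde b_{-i}}$, substitute into Claim \ref{claim:Stackelberg-derivative} to get $(1-\alpha_i^*)(1+\alpha_{-i}^*)=\tilde c_1\tilde c_2/(\tilde b_1\tilde b_2)$, and use the symmetry of the right-hand side to force $\alpha_1^*=\alpha_2^*=\sqrt{1-\tilde c_1\tilde c_2/(\tilde b_1\tilde b_2)}$. Your additional verification of the sufficiency/existence step (concavity of the reduced-form Stackelberg objective and interiority) is a welcome supplement that the paper's proof leaves implicit, but it does not change the argument.
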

\begin{proof}
Eq. \eqref{eq:biased-BR-oligopoly} in the proof of Proposition \ref{subsec:Proof-of-oligopoly}
implies that 
\[
x_{-i}\left(x_{i},\alpha_{-i}^{*}\right)=\frac{a_{-i}+\tilde{c}_{-i}x_{i}}{\left(1+\alpha_{-i}^{*}\right)\tilde{b}_{-i}}\,\Rightarrow\,\frac{dx_{-i}\left(x_{i},\alpha_{-i}^{*}\right)}{dx_{i}}=\frac{\tilde{c}_{-i}}{\left(1+\alpha_{-i}^{*}\right)\tilde{b}_{-i}}.
\]
This implies due to Claim \ref{pro:stackelberg-leader} (and the fact
that $\frac{\partial q_{i}}{\partial x_{-i}}=\tilde{c}_{i},\,\,\frac{\partial q_{i}}{\partial x_{i}}=-\tilde{b}_{i}$)
that $x_{i}^{*}$ must satisfy 
\begin{equation}
\alpha_{i}^{*}-1=\frac{\text{d\ensuremath{x_{-i}\left(x_{i},\alpha_{-i}^{*}\right)}}}{dx_{i}}\cdot\frac{\frac{\partial q_{i}}{\partial x_{-i}}}{\frac{\partial q_{i}}{\partial x_{i}}}=\frac{\tilde{c}_{i}\tilde{c}_{-i}}{-\left(1+\alpha_{-i}^{*}\right)\tilde{b}_{-i}\tilde{b}_{i}}\Leftrightarrow\left(1+\alpha_{-i}^{*}\right)\left(1-\alpha_{i}^{*}\right)=\frac{\tilde{c}_{i}\tilde{c}_{-i}}{\tilde{b}_{-i}\tilde{b}_{i}}.\label{eq:alpha-condition}
\end{equation}
 Observe that the RHS of \eqref{eq:alpha-condition} remains the same
when swapping $i$ and $-i$. This implies that $\alpha_{-i}^{*}$
and $\alpha_{i}^{*}$ must be equal, and that $\alpha_{1}^{*}=\alpha_{2}^{*}=\sqrt{1-\frac{\tilde{c}_{i}\tilde{c}_{-i}}{\tilde{b}_{-i}\tilde{b}_{i}}}$. 
\end{proof}
We note that the property that asymmetric firms must have the same
biasedness level in a naive analytics equilibrium does not hold when
there are more than two firms. 

Next we characterize the unique symmetric naive analytics equilibrium
in symmetric oligopolies, in which the demand parameters are the same
for all firms: $b_{i}=b$ and $c_{i}=c$ for each firm $i\in N$.
Proposition \ref{prop:symmetric-oligopoly-prices} shows that the
level of bias $\alpha_{i}^{*}$ depends only on the ratio $\left|\frac{b}{c}\right|$
and on $n$, and that it is increasing in $\left|\frac{b}{c}\right|$.
The dependency on $n$ is non-monotone. A monopoly is unbiased ($\alpha_{i}^{*}=1$).
By contrast, firms in duopoly have the strongest bias (i.e., $\alpha_{i}^{*}<1$
is furthest from one), and the level of biasedness decreases (i.e.,
$\alpha_{i}^{*}$ becomes closer to one) the larger the number of
competitors (converging back to 1 when $n\rightarrow\infty$). Formally,
\begin{prop}
\label{prop:symmetric-oligopoly-prices}Assume that the price competition
is symmetric. Then $\Gamma_{p}$ admits a unique symmetric naive analytics
equilibrium $\left(\alpha^{*},x^{*}\right)$, where $\alpha_{i}^{*}\left(\left|\frac{b}{c}\right|,n\right)$
depends only on $\left|\frac{b}{c}\right|$ and $n$ and satisfies
the following properties: (1) $\alpha_{i}^{*}\left(\left|\frac{b}{c}\right|,n\right)$
is increasing in $\left|\frac{b}{c}\right|$ for $n\geq2$, (2) $\alpha_{i}^{*}\left(\left|\frac{b}{c}\right|,1\right)=1>\alpha_{i}^{*}\left(\left|\frac{b}{c}\right|,2\right)$,
(3) $\alpha_{i}^{*}\left(\left|\frac{b}{c}\right|,n\right)<\alpha_{i}^{*}\left(\left|\frac{b}{c}\right|,n+1\right)<1$
for $n\geq2$, and (4) 
\[
\lim_{n\rightarrow\infty}\alpha_{i}^{*}\left(\left|\frac{b}{c}\right|,n\right)=\lim_{\frac{b}{c}\rightarrow\infty}\alpha_{i}^{*}\left(\left|\frac{b}{c}\right|,n\right)=1.
\]
\end{prop}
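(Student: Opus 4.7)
The plan is to reduce the symmetric NAE condition to a single scalar equation in $\alpha^{*}$ and then read off the claimed existence, uniqueness, and comparative statics.

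First I characterize the $(x_{i},\alpha_{-i}^{*})$-equilibrium when all opponents share a common bias $\alpha^{*}\in A$. Each opponent's biased FOC gives $q_{j}=\alpha^{*}\tilde{b}\,x_{j}$, where $\tilde{b}:=b-c/n$, and by symmetry they all play a common response $y(x_{i})$. Solving the resulting linear system expresses $y$ as an affine function of $x_{i}$ whose slope $d:=y'(x_{i})$ is an explicit rational function of $b,c,n$, and $\alpha^{*}$. Applying Claim \ref{claim:Stackelberg-derivative} at a symmetric candidate NAE, using the multiplicative bias and $\partial q_{i}/\partial x_{j}=c/n$ for each $j\neq i$, collapses the Stackelberg identity to $(1-\alpha^{*})\,\tilde{b}=(n-1)(c/n)\,d$. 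Substituting the expression for $d$ yields a quadratic equation $F(\alpha^{*};b,c,n)=0$ whose coefficients depend on $b,c,n$ but not on $a$.

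Next I show that $F(\,\cdot\,;b,c,n)$ has a unique root $\alpha^{*}\in(0,1)$ for $n\geq2$. Evaluating $F$ at $\alpha=1$ shows the LHS vanishes while the RHS is strictly positive (using $c\neq0$ and $n\geq2$); evaluating at $\alpha=0$, together with the assumption $|c|<b$ and (in the complements case) the cross-elasticity bound, produces the opposite sign. Strict monotonicity of $F$ in $\alpha$ at the candidate root then gives a single crossing. The equilibrium strategy $x^{*}$ is pinned down by the biased FOC $q_{i}(x^{*})=\alpha^{*}\tilde{b}\,x^{*}$, which establishes uniqueness of the symmetric NAE. The case $n=1$ is degenerate: the sum in Claim \ref{claim:Stackelberg-derivative} is empty, forcing $\alpha^{*}=1$, and Proposition \ref{prop:price-competetion-unqiue-NAE} gives the closed form $\alpha^{*}=\sqrt{1-\tilde{c}^{2}/\tilde{b}^{2}}<1$ for $n=2$, which sharply separates the $n=1$ and $n=2$ cases.

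Finally I verify the monotonicity and limit claims directly from $F$. Because $F$ is independent of $a$ and depends on $b,c$ only through their ratio, $\alpha^{*}$ is a function of $|b/c|$ and $n$ alone. Monotonicity in $|b/c|$ for $n\geq2$ follows by implicit differentiation of $F$ with respect to $c/b$, using the sign of $\partial F/\partial\alpha$ at the root. Monotonicity in $n$ for $n\geq2$ is obtained by treating $n$ as a continuous parameter and implicitly differentiating $F$; this is the main obstacle, since signing the resulting expression requires carefully exploiting $|c|<b$ and the structure of $\tilde{b}$ and $\tilde{c}$. Both limits are immediate: the RHS of $F$ is proportional to $(n-1)(c/n)^{2}$, which vanishes both as $n\to\infty$ and as $|b/c|\to\infty$, forcing $1-\alpha^{*}$ to zero.
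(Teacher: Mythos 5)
Your reduction coincides with the paper's: under a common opponent bias $\alpha^{*}$ the biased FOCs give the affine follower response $x_{j}\left(x_{i},\alpha^{*}\right)=\frac{a+\tilde{c}x_{i}}{\left(1+\alpha^{*}\right)\tilde{b}-(n-2)\tilde{c}}$ with $\tilde{b}=b-c/n$, $\tilde{c}=c/n$, and feeding its slope into Claim \ref{claim:Stackelberg-derivative} yields exactly the quadratic the paper works with, so up to that point the two arguments are the same. Your existence--uniqueness step is essentially right and can be tightened: after clearing the denominator, $F(\alpha)=(1-\alpha)\tilde{b}\left[(1+\alpha)\tilde{b}-(n-2)\tilde{c}\right]-(n-1)\tilde{c}^{2}$ is a concave quadratic in $\alpha$ with $F(0)=b(b-c)>0$ and $F(1)=-(n-1)\tilde{c}^{2}<0$, so it has exactly one root in $(0,1)$ and its second root is negative, hence outside $A=\mathbb{R}_{++}$; your treatment of $n=1$ (empty sum in Claim \ref{claim:Stackelberg-derivative}) and $n=2$ (Proposition \ref{prop:price-competetion-unqiue-NAE}) is fine, as is the limit argument.

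The genuine gap is in the comparative statics, i.e., parts (1) and (3). You propose implicit differentiation of $F$ with respect to $|b/c|$ and with respect to $n$, but you never sign the resulting expressions, and you yourself label the $n$-monotonicity "the main obstacle"; as written those two claims are asserted, not proved. The paper closes exactly this step by solving the quadratic explicitly, obtaining $\alpha^{*}=\frac{n-2+\sqrt{(n-2)^{2}-4(n-1)-4\left|\tilde{b}/\tilde{c}\right|\left(n-2-\left|\tilde{b}/\tilde{c}\right|\right)}}{2\left|\tilde{b}/\tilde{c}\right|}$, checking monotonicity of this closed form in $\left|\tilde{b}/\tilde{c}\right|$ and $n$, and then transferring to $\left(|b/c|,n\right)$ via $\left|\tilde{b}/\tilde{c}\right|=n|b/c|-1$, which is increasing in both arguments; with the explicit root in hand the monotonicity and both limits are routine verifications. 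So to complete your route you must actually carry out and sign the two implicit-function computations, or simply solve the quadratic as the paper does. One further caveat, which you inherit from the paper rather than introduce: the passage from dependence on the signed ratio $\tilde{b}/\tilde{c}$ to dependence on $|b/c|$ is immediate only for substitutes ($c>0$); for complements $\tilde{b}/\tilde{c}=n(b/c)-1$ is negative, and that step deserves a separate argument in either version of the proof.
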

The proof is given in Appendix \ref{subsec:proof-of-symmetric-oligopoly}.
The intuition for $\alpha_{i}^{*}\left(\left|\frac{b}{c}\right|,n\right)$
increasing in $\left|\frac{b}{c}\right|$ is the same as in the case
of two competing firms. A monopoly has no strategic advantage from
biased estimation (and, thus, $\alpha_{i}^{*}\left(\left|\frac{b}{c}\right|,1\right)=1$).
The intuition why $\alpha_{i}^{*}\left(\left|\frac{b}{c}\right|,n\right)$
is increasing in $n$ for $n\geq2$ is that the larger the number
of competing firms, the smaller the strategic impact of a player's
price on the biased best reply of her competitors, and thus the strategic
advantage of having biased analytics decreases. 

Proposition \ref{prop:symmetric-oligopoly-prices} is illustrated
in Figure \ref{fig:Examples-of-symmetric-oligoply}. The left panel
shows the equilibrium level of bias $\alpha_{i}^{*}$ in symmetric
competition as a function of the number of competing firms $n$ for
$a=20$, $b=1$, and $c=0.7,\,0.9$. The right panel shows the (unbiased)
Nash equilibrium prices and the naive analytics equilibrium prices
in these cases. The figure shows that that the differences between
the NAE and the Nash equilibrium prices are maximal for duopolies,
and gradually decrease as $n$ increases. This suggests that a counter-factual
economic analysis of the impact on two firms merging on the equilibrium
prices, which ignores the biased analytics, would overestimate the
price increase induced by a duopoly merging to a monopoly, and it
would underestimate the price induced by all other mergers. This argument
is detailed and illustrated in Section \ref{sec:Implication-for-Merges}.
\begin{figure}[tbh]
\caption{\textbf{\label{fig:Examples-of-symmetric-oligoply}}Biases and Prices
in Symmetric Oligopolies ($a=20$, $b=1$)}
\includegraphics[width=0.49\textwidth]{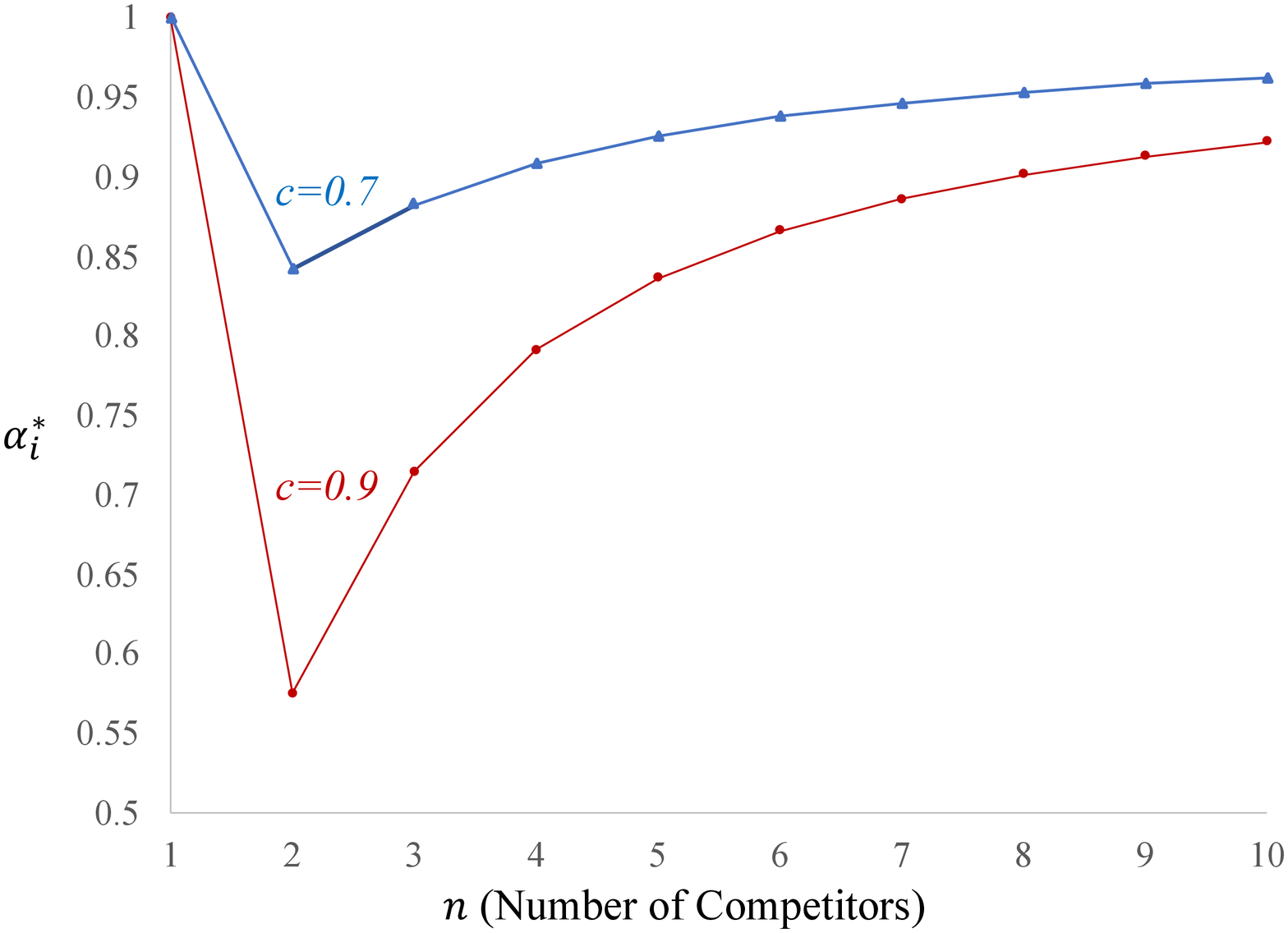}~\includegraphics[width=0.49\textwidth]{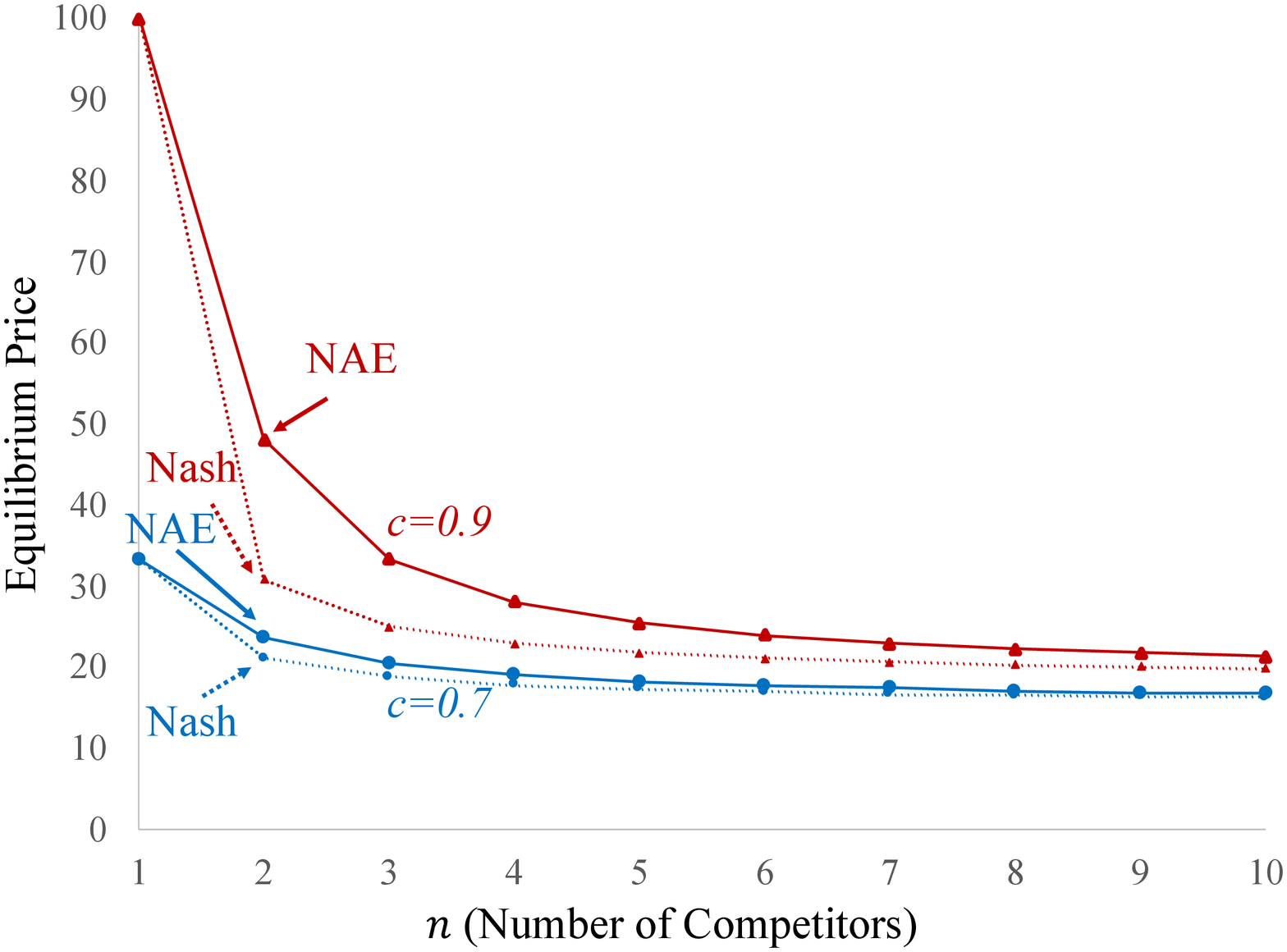}

{\footnotesize{}The left panel shows the symmetric naive analytics
equilibrium level of biasedness $\alpha_{i}^{*}$ as a function of
the number of competing firms for $c=0.7$ and $c=0.9$. The right
panel shows the (unbiased) Nash equilibrium prices and the naive analytics
equilibrium prices in each case.}{\footnotesize\par}
\end{figure}

\paragraph{Micro-Foundations for $\alpha_{i}<1$}

We conclude this application by discussing plausible mechanisms that
can induce analysts to unintentionally underestimate demand sensitivities
and elasticities:
\begin{enumerate}
\item The first mechanism has been illustrated in the motivating example
(Section \ref{sec:motivation}), and is described here again briefly.
Suppose the daily demand of each firm has a random component for which
the firm employees observe an informative signal. For example, the
employees observe the weather forecast which is correlated with the
realized demand. If the employees prefer to set lower sale prices
on days with low demand, e.g., because they have more free time to
change prices, then naive analysts who allow the employees to choose
the days for price discounts would induce a correlation between low
demand days and low prices. As a result, they will under-estimate
of the elasticity of demand. A sophisticated analyst who closely ensures
that price discounts are set uniformly at random, or accurately controls
for the weather forecast in his econometric analysis, would yield
an accurate estimation of the elasticity.
\item The second mechanism is formalized in Appx. \ref{subsec:Biased-Price-Competition},
and is presented here briefly. Employees of competing stores might
choose similar days for price discounts. One examples is setting discounts
by season (holidays) or for specific days of the week (weekends).
Another example is if the levels of their inventory is correlated
and discounts are given when the inventory level is high. This correlation
in prices would induce naive analysts who allow employees to set the
days with price discounts to under-estimate the price elasticity,
because in days with low prices the competitor is more likely to provide
a discount as well, making the response to price discounts seem weaker.
\end{enumerate}

\subsection{Advertising Competition\label{subsec:Advertising-Competition}}

Research that estimates the effectiveness of advertising often showed
that extra care is required to arrive at non-biased estimates, and
that not correcting for these biases often results in overestimating
advertising effectiveness \citep{lodish1995tv,blake2015consumer,gordon2019comparison,shapiro2019generalizable}.
In the second application of our model we analyze a duopoly competition
in advertising to understand why firms might benefit from naive analytics
that overestimates the effectiveness of advertising.

\paragraph{Underlying Game}

The underlying game $G_{a}=\left(N=\left\{ 1,2\right\} ,X,q,\pi\right)$
describes an advertising competition among two firms. Firm $i\in\left\{ 1,2\right\} $
sells a product with exogenous profit margins $p_{i}>0$ per unit
sold. The expected demand of each firm depends on the advertising
budget $x_{i}\in X_{i}$ of both firms: 
\begin{equation}
q_{i}=a_{i}+b_{i}\cdot\sqrt{x_{i}}+c_{i}\cdot\sqrt{x_{i}\cdot x_{-i}},\label{eq:demand-2}
\end{equation}
where $a_{i},b_{i},c_{i}\cdot c_{-i}>0$. When $c_{i}>0$ the feasible
budget set is unrestricted ($X_{i}=\mathbb{R}_{+}$). When $c_{i}<0$,
we restrict the maximum budget to be $M_{i}=\frac{b_{-i}}{\left|c_{-i}\right|}$
(i.e., $X_{i}=\left[0,\frac{b_{i}}{\left|c_{i}\right|}\right]$).
This restriction is without loss of generality as additional budget
above $\frac{b_{i}}{\left|c_{i}\right|}$ cannot increase the firm's
demand when $c_{i}<0$.

In this market a firm's own advertising increases demand for its own
product ($b_{i}$ is positive), but the competitor's advertising may
affect the level of the increase. When $c_{i}>0$ the total category
demand increases when the competitor advertises, which increases the
effect of a firm's own advertising. When $c_{i}<0$ the effect of
a firm's own advertising decreases in the competitor's advertising
due to competition (e.g., over the same customers). We require the
sign of $c_{i}$ and $c_{-i}$ to coincide. Positive $c_{i}$-s might
correspond to a new category of goods, in which advertising attracts
attention to the category. Negative $c_{i}$-s might correspond to
a mature category in which advertising mainly causes consumers to
switch among competing goods. The payoff of player $i$ is given
by
\begin{equation}
\pi_{i}\left(x_{i},x_{-i}\right)=q_{i}\left(x_{i},x_{-i}\right)\cdot p_{i}-x_{i}=p_{i}\cdot\left(a_{i}+b_{i}\cdot\sqrt{x_{i}}+c_{i}\cdot\sqrt{x_{i}\cdot x_{-i}}\right)-x_{i}\text{.}\label{eq:demand-1}
\end{equation}
We require the advertising externalities to be sufficiently small:
$\left|c_{i}\right|\in\left(0,\frac{1}{p_{i}}\right)$, which implies
well-behaved interior Nash equilibrium. Further, if $c_{i}<0$ we
assume $\left|c_{i}\right|<\frac{b_{i}}{b_{-i}\cdot p_{-i}}.$ Lemma
\ref{lem:unqiue-alpha-eq-Price-competition-game-1} in Appendix \ref{subsec:Proof-of-advertising-result}
proves that advertising competition admits a unique Nash equilibrium,
which we denote by $x^{NE}$.

\paragraph{Naive Analytics Equilibrium}

Let $\Gamma_{a}=\left(G_{a},R_{++},f_{m}\right)$ be the analytics
game with multiplicative biases, i.e., $f_{m}\left(\frac{\partial q_{i}}{\partial x_{i}},\alpha_{i}\right)=\alpha_{i}\cdot\frac{\partial q_{i}}{\partial x_{i}}$.
Our result shows that advertising competition satisfies all the assumptions
of the general model, and that in any naive analytics equilibrium:
\begin{enumerate}
\item The firms hire analysts that overestimate the effectiveness of advertising
(and the level of bias is the same for both firms);
\item The advertising budgets are higher than in the Nash equilibrium; and
\item The naive analytics equilibrium Pareto dominates (resp., is Pareto
dominated by) the Nash equilibrium if the game has strategic complements
(resp., substitutes).
\end{enumerate}
\begin{prop}
\label{prop:advertising-result}$\Gamma_{a}$ satisfies Assumptions
\ref{assu:Monotone-externalities}--\ref{assu:consistent-second-adaptation}.
Moreover, any NAE $\left(\alpha^{*},x^{*}\right)$ satisfies:
\begin{enumerate}
\item Symmetric overestimation of ad effectiveness: $\alpha_{1}^{*}=\alpha_{2}^{*}=\frac{2}{1+\sqrt{1-c_{1}c_{2}p_{1}p_{2}}}>1$.
\item Prices are higher than the Nash equilibrium prices: $x_{i}^{*}\left(\alpha^{*}\right)>x_{i}^{NE}$
$\forall i\in N$.
\item Pareto dominance relative to the Nash equilibrium:\\
 $c_{i}>0\Rightarrow\pi_{i}\left(x^{*}\right)>\pi_{i}\left(x^{NE}\right)$,
and $c_{i}<0\Rightarrow\pi_{i}\left(x^{*}\right)<\pi_{i}\left(x^{NE}\right)$.
\end{enumerate}
\end{prop}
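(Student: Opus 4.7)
My approach has three parts: (i) verify that Assumptions~\ref{assu:Monotone-externalities}--\ref{assu:consistent-second-adaptation} hold for $\Gamma_a$ so that the general propositions apply, (ii) identify $\alpha^*$ via the Stackelberg-derivative identity of Claim~\ref{claim:Stackelberg-derivative}, and (iii) derive the price and payoff comparisons from Proposition~\ref{prop:Pareto-dominate}, supplemented by a direct argument in the strategic substitutes case.

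Step (i) is a sign-check built on $\partial q_i/\partial x_i = (b_i+c_i\sqrt{x_{-i}})/(2\sqrt{x_i})$, $\partial q_i/\partial x_{-i} = c_i\sqrt{x_i}/(2\sqrt{x_{-i}})$, and $\partial\pi_i/\partial x_i=-1$. Because $c_1c_2>0$, the payoff externality $d\pi_i/dx_{-i}$ and the cross partial $d^2\pi_i^{\alpha_i}/(dx_i\,dx_{-i})$ carry the sign of $c_i$ uniformly over $i$, yielding Assumptions~\ref{assu:Monotone-externalities} and~\ref{assu:Strong-com-subs}; the bounds on $|c_i|$ keep $b_i+c_i\sqrt{x_{-i}}$ strictly positive on the feasible domain so that $\partial\pi_i/\partial q_i\cdot\partial q_i/\partial x_i>0$ (Assumption~\ref{assu:Monotone-partial-deriv}); direct differentiation gives $d^2\pi_i^{\alpha_i}/dx_i^2 = -p_i\alpha_i(b_i+c_i\sqrt{x_{-i}})/(4x_i^{3/2})<0$ (Assumption~\ref{assu:Strong-concavity}); Assumption~\ref{assu:bounded-percieved-BR} follows from the compact box $X_i=[0,b_i/|c_i|]$ when $c_i<0$ and from an FOC-based bounding argument when $c_i>0$; and Assumption~\ref{assu:consistent-second-adaptation} is vacuous since $n=2$.

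Step (ii) is the heart of the proof. The opponent's biased FOC gives $\sqrt{x_{-i}(x_i,\alpha_{-i}^*)} = p_{-i}\alpha_{-i}^*(b_{-i}+c_{-i}\sqrt{x_i})/2$, and implicit differentiation yields $dx_{-i}(x_i,\alpha_{-i}^*)/dx_i = p_{-i}\alpha_{-i}^*c_{-i}\sqrt{x_{-i}}/(2\sqrt{x_i})$. Substituting this together with the player's own biased FOC $\partial q_i(x^*)/\partial x_i = 1/(p_i\alpha_i^*)$ and the expression for $\partial q_i/\partial x_{-i}$ into Claim~\ref{claim:Stackelberg-derivative} collapses the Stackelberg identity to
\begin{equation*}
\frac{\alpha_i^*-1}{\alpha_i^*}\;=\;\frac{c_1c_2p_1p_2}{4}\,\alpha_{-i}^*.
\end{equation*}
The right-hand side is symmetric under $i\leftrightarrow-i$, which forces $\alpha_1^*=\alpha_2^*\equiv\alpha^*$, and $\alpha^*$ then solves the quadratic $K(\alpha^*)^2-\alpha^*+1=0$ with $K := c_1c_2p_1p_2/4 \in (0,1/4)$. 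Selecting the branch consistent with $\alpha^*\to 1$ as $K\to 0$ gives $\alpha^* = (1-\sqrt{1-4K})/(2K) = 2/(1+\sqrt{1-c_1c_2p_1p_2})$, and the inequality $\alpha^*>1$ reduces to $4K^2>0$.

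For step (iii), solving the biased and unbiased FOC systems $\sqrt{x_i}=p_i\alpha(b_i+c_i\sqrt{x_{-i}})/2$ in closed form and taking the difference isolates a factor of $\alpha^*-1>0$, from which $x_i^*>x_i^{NE}$ for both players regardless of the sign of $c_i$ (this requires a short algebraic check in the substitutes case using the bound $|c_i|<b_i/(b_{-i}p_{-i})$). When $c_i>0$ the Pareto comparison is immediate from Proposition~\ref{prop:Pareto-dominate}(1) applied to the unique NE of $\Gamma_a$. When $c_i<0$, concavity of $\pi_i$ in $x_i$ gives $\pi_i(x_i^*,x_{-i}^{NE})<\pi_i(x^{NE})$, and the negative externality together with $x_{-i}^*>x_{-i}^{NE}$ gives $\pi_i(x^*)<\pi_i(x_i^*,x_{-i}^{NE})$; chaining the two inequalities yields $\pi_i(x^*)<\pi_i(x^{NE})$. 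The main obstacle is the algebraic reduction in step (ii) of the Stackelberg identity to a single clean quadratic in $\alpha^*$; once that reduction is in hand, everything else follows either from the general theory or from elementary computation.
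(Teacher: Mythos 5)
Your overall route matches the paper's: verify Assumptions \ref{assu:Monotone-externalities}--\ref{assu:consistent-second-adaptation} by direct computation of the partials of \eqref{eq:demand-2}--\eqref{eq:demand-1}, differentiate the opponent's biased FOC to get $\frac{dx_{-i}(x_i,\alpha_{-i}^*)}{dx_i}=\frac{\sqrt{x_{-i}}}{\sqrt{x_i}}\alpha_{-i}^*\frac{p_{-i}c_{-i}}{2}$, and plug into Claim \ref{claim:Stackelberg-derivative} to obtain $\alpha_i^*-1=\alpha_1^*\alpha_2^*\frac{p_1p_2c_1c_2}{4}$, whose symmetry forces $\alpha_1^*=\alpha_2^*$ and yields the stated quadratic. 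Your direct handling of parts (2)--(3) in the substitutes case (closed-form comparison of $x^*$ with $x^{NE}$ using $|c_i|<b_i/(b_{-i}p_{-i})$, then chaining the best-reply inequality with the negative externality) is sound and in fact spells out more than the paper's one-line appeal to Proposition \ref{prop:Pareto-dominate}, whose strategic-substitutes part only delivers the payoff comparison for symmetric profiles; so that portion is a genuine, and welcome, completion rather than a deviation.

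There is, however, one gap in part (1): the quadratic $K(\alpha^*)^2-\alpha^*+1=0$ has \emph{two} roots, $\frac{2}{1+\sqrt{1-4K}}\in(1,2)$ and $\frac{2}{1-\sqrt{1-4K}}>2$, and both satisfy the necessary Stackelberg first-order identity you derived. Since the proposition asserts that \emph{any} NAE has bias equal to the smaller root, you must positively exclude the larger one, and ``selecting the branch consistent with $\alpha^*\to1$ as $K\to0$'' does not do this: for a fixed game with $K>0$ that limit statement says nothing about whether the larger root could also support a naive analytics equilibrium. The paper closes this by checking that only the smaller root satisfies the relevant second-order condition (the larger root fails the optimality check for the Stackelberg-leader problem underlying Claim \ref{claim:Stackelberg-derivative}); you need some argument of this kind --- e.g., verify that at $\hat\alpha=\frac{2}{1-\sqrt{1-4K}}$ the induced strategy is not a payoff maximizer against $x_{-i}(\cdot,\hat\alpha)$ --- before concluding $\alpha_1^*=\alpha_2^*=\frac{2}{1+\sqrt{1-c_1c_2p_1p_2}}$. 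Everything else in your sketch is consistent with the paper's proof.
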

The proof is presented in Appendix \ref{subsec:Proof-of-advertising-result}.

\paragraph{Micro-Foundations for $\alpha_{i}>1$}

We conclude this section by providing micro-foundational examples
of cases that would cause firms to overestimate their advertising
effectiveness:
\begin{enumerate}
\item Similarly to price competition, correlation between advertising budgets
of firms with positive externalities would cause an overestimate of
ad effectiveness. If firms choose to increase budgets during the holidays,
or just prior to weekends, they will observe an increase in demand
beyond the effects of their own advertising. This correlation will
create an overestimate of advertising elasticity.
\item When online advertising is purchased on social media platforms, such
as on Facebook, the advertiser provides the advertising platform a
budget and a target metric. The platform's algorithm then targets
consumers in order to maximize the target metric under the budget
constraint. One common such metric is sales or purchases, and a strategy
to maximize this metric is to show ads to likely buyers of the product,
or to past purchasers of the product. Under such a strategy, an analysis
that compares the purchase rates of people that have seen ads to those
that have not seen ads will overestimate the effectiveness of advertising
\citep{berman2018beyond}.
\item If firms respond to decreased demand by increasing their advertising
budgets in the next time period, and if demand is noisy, a standard
``regression to the mean'' argument implies that demand is likely
to increase in the next period regardless of the additional advertising
budget. Failing to take this into account would lead to overestimation
of the advertising effectiveness, as we formally present in Appendix.
\ref{subsec:Biased-Advertising-Effectiveness}.
\end{enumerate}

\subsection{Team Production and Overconfidence\label{sec:Team-Production}}

Thus far we have interpreted $q\left(x\right)$ as market demand and
$\alpha_{i}\neq1$ as bias due to naive analytics. We now demonstrate
that our model applies in more general settings. Specifically, we
apply the model to an underlying game of team production with strategic
complementarity. Team production is common in partnerships and other
input games (see, e.g., \citealp{holmstrom1982moral,cooper1988coordinating,heller2020promises}).
Examples include salespeople who are compensated based on the performance
of the joint sales of a team, entrepreneurs who receive a share of
the exit value of a startup, and academic co-authors who benefit from
the impact of their joint paper. It is often observed that entrepreneurial
firms are founded by teams of overconfident founders \citep{astebro2014seeking,hayward2006hubris}.
Taking this perspective, we interpret $x_{i}$ as the contribution
of each team member, and $q\left(x\right)$ as the value created by
the team. This analogy directly leads to interpreting $\alpha_{i}\neq1$
as a bias player $i$ has when evaluating their contribution to the
value created by the team, which can be seen as a measure of \emph{confidence}.
We show that in any naive analytics equilibrium all agents are overconfident
in the sense of overestimating their ability to contribute to the
team's output (i.e., having $\alpha_{i}>1$). In the case of entrepreneurship,
for example, much of the past research explained overconfidence as
necessary to overcome risk aversion and tackle uncertainty, that is,
as a response to the entrepreneurial environment which is external
to the firm.\footnote{See also \citet{heller2014overconfidence} who demonstrates that overconfidence
of entrepreneurs can help an investor in diversifying aggregate risk.} Our results provide a novel foundation for the tendency of people
(and, in particular, entrepreneurs) to be overconfident in the sense
of overestimating one's ability. We show that when skills are complementary,
overconfidence contributes to increased team efficiency, and is a
response to the internal firm environment. The results provide a novel
explanation to why investors might prefer to invest in overconfident
startup founders, why managers might prefer to hire overconfident
sales people, and why academic researchers might prefer working with
overconfident co-authors (see, \citealp{herz2020opinion}, for evidence
suggesting that co-authors over-estimate their contributions to joint
papers).

\subparagraph{Underlying Game}

We describe a team production game $G_{t}$ with strategic complements.
Consider $n$ players, each choosing how much effort $x_{i}\in X_{i}\equiv\mathbb{R}_{+}$
to exert in a joint project. The project yields a benefit $q_{i}\left(x\right)$
to each agent $i$ , where $q_{i}$ is twice continuously differentiable
in $\mathbb{R}_{++}^{2}$, strictly increasing, strictly concave,
and supermodular, i.e., $\frac{dq_{i}\left(x\right)}{dx_{j}}>0$,
$\frac{d^{2}q_{i}\left(x\right)}{dx_{i}^{2}}<0$, $\frac{d^{2}q_{i}\left(x\right)}{dx_{i}x_{j}}>0$
for any $x\in X$ and any $i,j\in N$). The payoff of each player
$i$ is equal to her project's benefit minus her effort: $\pi_{i}\left(x\right)=q_{i}\left(x\right)-x_{i}.$
Observe that the above assumptions imply that the payoff function
is strictly concave and satisfies strategic complementarity. Consequently
all Nash equilibria of the underlying game are pure, and one of them,
denoted by $x^{LNE}$, has the lowest effort levels (i.e., $x_{i}^{LNE}\leq x_{i}^{NE}$
for any Nash equilibrium $x^{NE}$). Finally, we require that the
marginal contribution of efforts are initially very high, while converging
to zero for very large efforts, which implies that players always
have bounded perceived best-replies. Specifically, we assume that
the marginal contribution of a player's effort converges to infinity
(resp., zero) when the efforts converge to zero (resp., infinity)
and the profile of efforts is symmetric:
\begin{assumption}
\label{assu:marginal-contirbution-small-large-efforts}For each player
$i\in N$ 
\[
\lim_{x_{i}\rightarrow0}\left(\left.\frac{dq_{i}\left(x\right)}{dx_{i}}\right|_{x=\left(x_{i},...,x_{i}\right)}\right)=\infty,\,\,\,\lim_{x_{i}\rightarrow\infty}\left(\left.\frac{dq_{i}\left(x\right)}{dx_{i}}\right|_{x=\left(x_{i},...,x_{i}\right)}\right)=0.
\]
 
\end{assumption}

\subparagraph{Result}

Let $\Gamma_{t}=\left(G_{t},R_{++},f_{m}\right)$ be an analytics
game induced by $G_{t}$. We show that $\Gamma_{t}$ satisfies assumptions
\ref{assu:Monotone-externalities}--\ref{assu:consistent-second-adaptation},
and as a result we can apply the results of Section \ref{sec:General-Results}
and show that in any naive analytics equilibrium: (1) all agents are
overconfident in the sense of overestimating their own influence on
the joint project, (2) exert more effort than in the lowest Nash equilibrium,
and (3) the payoffs Pareto dominates the lowest Nash equilibrium.
Formally (the proof is presented in Appx. \ref{subsec:Proof-of-Proposition-team-production}):
\begin{prop}
\label{prop:team-prodcution}$\Gamma_{a}$ satisfies Assumptions \ref{assu:Monotone-externalities}--\ref{assu:consistent-second-adaptation},
which implies that any NAE $\left(\alpha^{*},x^{*}\right)$ of $\Gamma_{a}$
satisfies for each player $i$: (1) $\alpha_{i}^{*}>1$, (2) $x_{i}^{*}>x_{i}^{LNE}$,
and (3) $\pi_{i}\left(x^{*}\right)>\pi_{i}\left(x_{i}^{LNE}\right)$.
\end{prop}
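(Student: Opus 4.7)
The proof reduces to two blocks: (i) verify that $\Gamma_{t}$ satisfies Assumptions \ref{assu:Monotone-externalities}--\ref{assu:consistent-second-adaptation}, so the general results of Section \ref{sec:General-Results} apply; (ii) read off conclusions (1)--(3) from Corollary \ref{cor:sign-strategic-compl-supp} combined with a direct FOC comparison, from Proposition \ref{prop:Pareto-dominate}(1), and from standard monotone-comparative-statics arguments that bridge the ``some Nash equilibrium'' delivered by Proposition \ref{prop:Pareto-dominate} to the specific benchmark $x^{LNE}$.

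\textbf{Verifying A\ref{assu:Monotone-externalities}--A\ref{assu:Strong-com-subs}.} Writing $\pi_{i}(x_{i},q_{i})=q_{i}-x_{i}$ gives $\frac{\partial\pi_{i}}{\partial x_{i}}=-1$ and $\frac{\partial\pi_{i}}{\partial q_{i}}=1$. Hence $\frac{d\pi_{i}}{dx_{j}}=\frac{\partial q_{i}}{\partial x_{j}}>0$ (Assumption \ref{assu:Monotone-externalities}); $\frac{\partial\pi_{i}}{\partial x_{i}}<0$ uniformly and $\frac{\partial\pi_{i}}{\partial q_{i}}\cdot\frac{\partial q_{i}}{\partial x_{i}}>0$ uniformly (Assumption \ref{assu:Monotone-partial-deriv}); $\frac{d^{2}\pi_{i}^{\alpha_{i}}}{dx_{i}^{2}}=\alpha_{i}\frac{\partial^{2}q_{i}}{\partial x_{i}^{2}}<0$ by strict concavity and $\alpha_{i}>0$ (Assumption \ref{assu:Strong-concavity}); and $\frac{d^{2}\pi_{i}^{\alpha_{i}}}{dx_{i}dx_{j}}=\alpha_{i}\frac{\partial^{2}q_{i}}{\partial x_{i}\partial x_{j}}>0$ by supermodularity (Assumption \ref{assu:Strong-com-subs}, the strategic-complements case).

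\textbf{Verifying A\ref{assu:bounded-percieved-BR} and A\ref{assu:consistent-second-adaptation}.} Fix $\alpha\in A^{n}$, set $M=\max_{i}1/\alpha_{i}$ and $m=\min_{i}1/\alpha_{i}$, and use Assumption \ref{assu:marginal-contirbution-small-large-efforts} to pick $\underline{s}>0$ with $\left.\frac{\partial q_{i}}{\partial x_{i}}\right|_{(\underline{s},\dots,\underline{s})}>M$ and $\overline{s}$ with $\left.\frac{\partial q_{i}}{\partial x_{i}}\right|_{(\overline{s},\dots,\overline{s})}<m$ for every $i$. Put $\underline{x}^{\alpha}=(\underline{s},\dots,\underline{s})$ and $\overline{x}^{\alpha}=(\overline{s},\dots,\overline{s})$. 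Because $q_{i}$ is strictly concave in $x_{i}$, the biased FOC $\alpha_{i}\frac{\partial q_{i}}{\partial x_{i}}=1$ has a unique root in $x_{i}$, so the above inequalities give $BR_{i}^{\alpha_{i}}(\underline{x}_{-i}^{\alpha})>\underline{s}$ and $BR_{i}^{\alpha_{i}}(\overline{x}_{-i}^{\alpha})<\overline{s}$. Supermodularity makes $BR_{i}^{\alpha_{i}}$ monotone increasing in $x_{-i}$, chaining $\underline{s}<BR_{i}^{\alpha_{i}}(\underline{x}_{-i}^{\alpha})\leq BR_{i}^{\alpha_{i}}(\overline{x}_{-i}^{\alpha})<\overline{s}$ for each $i$, which is Assumption \ref{assu:bounded-percieved-BR}. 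Assumption \ref{assu:consistent-second-adaptation} is vacuous for strategic-complements games, as noted after its statement.

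\textbf{Conclusions.} From strategic complements and positive payoff externalities, Corollary \ref{cor:sign-strategic-compl-supp} yields $x_{i}^{*}>BR_{i}(x_{-i}^{*})$. Combining the biased FOC $\alpha_{i}^{*}\frac{\partial q_{i}}{\partial x_{i}}(x^{*})=1$ with the unbiased FOC $\frac{\partial q_{i}}{\partial x_{i}}(BR_{i}(x_{-i}^{*}),x_{-i}^{*})=1$ and the fact that $\frac{\partial q_{i}}{\partial x_{i}}$ is strictly decreasing in $x_{i}$, we obtain $\frac{\partial q_{i}}{\partial x_{i}}(x^{*})<1$, hence $\alpha_{i}^{*}>1$, proving (1). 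For (2)--(3), Proposition \ref{prop:Pareto-dominate}(1) produces a Nash equilibrium $x^{NE}$ of the underlying game with $x_{i}^{*}>x_{i}^{NE}$ and $\pi_{i}(x^{*})>\pi_{i}(x^{NE})$ for all $i$; since $x^{LNE}$ is the componentwise smallest NE, $x_{i}^{NE}\geq x_{i}^{LNE}$, giving (2). For (3), chain
\[
\pi_{i}(x^{*})>\pi_{i}(x^{NE})\geq\pi_{i}(x_{i}^{LNE},x_{-i}^{NE})\geq\pi_{i}(x^{LNE}),
\]
where the middle inequality uses that $x_{i}^{NE}$ is a best reply to $x_{-i}^{NE}$, and the last uses positive externalities combined with $x_{-i}^{NE}\geq x_{-i}^{LNE}$. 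The only mildly non-routine step is Assumption \ref{assu:bounded-percieved-BR}, which relies on the limiting behavior provided by Assumption \ref{assu:marginal-contirbution-small-large-efforts}; everything else is sign-tracking plus the monotone-NE argument used to bridge to $x^{LNE}$.
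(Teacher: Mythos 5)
Your proof is correct and follows essentially the same route as the paper: verify Assumptions \ref{assu:Monotone-externalities}--\ref{assu:consistent-second-adaptation} (identical sign computations, the same use of Assumption \ref{assu:marginal-contirbution-small-large-efforts} with symmetric profiles for Assumption \ref{assu:bounded-percieved-BR}, and vacuousness of Assumption \ref{assu:consistent-second-adaptation} under complements) and then invoke the general results of Section \ref{sec:General-Results}. Your only additions are to make explicit two steps the paper leaves implicit --- deriving $\alpha_{i}^{*}>1$ directly from Corollary \ref{cor:sign-strategic-compl-supp} and the two first-order conditions, and bridging from the Nash equilibrium delivered by Proposition \ref{prop:Pareto-dominate}(1) to the lowest equilibrium $x^{LNE}$ via the chain $\pi_{i}(x^{*})>\pi_{i}(x^{NE})\geq\pi_{i}(x_{i}^{LNE},x_{-i}^{NE})\geq\pi_{i}(x^{LNE})$ --- both of which are sound.
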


\section{Implications for Market Structure Analysis\label{sec:Implication-for-Merges}}

In this Section we demonstrate the implications of the naive analytics
equilibrium concept for analysis of market structure in oligopolies.
Data on product quantities and prices are often used to infer price-cost
margins (PCM) to be used in analysis of mergers and market power \citep{nevo2000mergers,nevo2001measuring}.
In these analyses one often assumes that firms play a differentiated
Bertrand competition, as in Section \ref{subsec:Price-Competition},
and the observed price elasticities are used to infer the unobserved
marginal costs of firms, to be used in simulation of different market
structures. 

We illustrate the implications of firms playing a naive analytics
equilibrium on the results of this analysis. Assume three firms, numbered
$i=1$, $2$ and $3$, play a symmetric differentiated Bertrand competition
with demand $q_{i}=a-bx_{i}+c\frac{x_{1}+x_{2}+x_{3}}{3}$, zero marginal
costs and $a>0$, $b>c>0$. Appendix \eqref{subsec:Proof-of-Proposition-merger}
explicitly calculates the unique pre-merger NAE $\left(x_{i}^{pre},\alpha_{i}^{pre}\right)$,
which is a special case of the symmetric oligopoly in Section \ref{subsec:Price-Competition}. 

Suppose firms $2$ and $3$ plan to merge, and an economist (e.g.,
a regulator) tries to predict the prevailing prices post-merger using
pre-merger observed prices and quantities. The economist, not knowing
that the firms are playing a naive analytics equilibrium with zero
marginal costs, assumes that pre-merger each firm set its prices $x_{i}$
to maximize the payoff $\pi_{i}^{mc}=(x_{i}-mc_{i})q_{i}$, with $mc_{i}$
being firm $i$'s (unknown) marginal cost. Post-merger, the economist
assumes firm $1$ will set prices $x_{1}^{mc}$ to maximize $\pi_{1}^{mc}$
while the merged firm $23$ will set prices $x_{2}^{mc}$ and $x_{3}^{mc}$
simultaneously to maximize the joint profit $\pi_{2}^{mc}+\pi_{3}^{mc}$. 

In reality, the firms will converge to a new naive analytics equilibrium
with prices $x_{i}^{post}$ that depend on the equilibrium biases
of the firms. We denote by $\alpha^{post}$ the long-run analytics
bias, and by $x_{i}^{post}(\alpha^{post})$ the NAE prices in the
long run. In the short run, the firms might have not adjusted their
analytics yet, setting prices $x_{i}^{post}(\alpha^{pre})$ accordingly.
We also denote by $x_{i}^{post}(mc)$ the equilibrium prices predicted
by the economist who assumes firms have the same marginal costs after
the merger, and no analytics bias. The analysis shows that the economist
will estimate positive marginal costs for the firms, and that she
will underestimate the new equilibrium prices after the merger. 
\begin{prop}
\label{prop:merger}Assume that pre-merger price competition is symmetric
and that $n=3$, $a>0$ and $b>c>0$. Assume that post-merger firms
$2$ and $3$ merge and set prices simultaneously for both goods.
Then: (1) The economist will estimate $mc_{i}>0$.  (2) The economist
will underestimate post-merger prices: $x_{i}^{post}(mc)<x_{i}^{post}(\alpha^{pre})<x_{i}^{post}x_{i}^{post}(\alpha^{post})$.
\end{prop}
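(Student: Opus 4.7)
The plan is to exploit the explicit tractability of the symmetric three-firm linear Bertrand game: compute the pre-merger NAE via Proposition \ref{prop:symmetric-oligopoly-prices} with $n=3$, and then reduce the post-merger game, by the internal symmetry $x_2=x_3$, to an asymmetric linear Bertrand duopoly to which Proposition \ref{prop:price-competetion-unqiue-NAE} applies. Throughout, write $\tilde b\equiv b-c/3$ for the effective own-price slope.

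First, by Proposition \ref{prop:symmetric-oligopoly-prices} the pre-merger game admits a unique symmetric NAE $(x_i^{pre},\alpha^{pre})$ with $\alpha^{pre}\in(0,1)$, characterized by the biased FOC $q_i^{pre}=\alpha^{pre}\,\tilde b\,x_i^{pre}$. Part~(1) is then immediate: the economist imposes the unbiased Nash FOC $q_i^{pre}=(x_i^{pre}-mc_i)\tilde b$, which yields
\[
mc_i \;=\; x_i^{pre}-\frac{q_i^{pre}}{\tilde b}\;=\;\frac{q_i^{pre}}{\tilde b}\Bigl(\frac{1}{\alpha^{pre}}-1\Bigr)>0,
\]
because $\alpha^{pre}<1$. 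Equivalently, this follows from Proposition \ref{prop:price-satisfies-Assumptions}(2) (NAE prices strictly exceed the zero-cost Nash prices) combined with monotonicity of Nash prices in $mc$.

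For part~(2), the post-merger game, in which firm~$1$ maximizes $\pi_1$ and merged firm~$23$ maximizes $\pi_2+\pi_3$, reduces by the symmetry $x_2=x_3\equiv x_{23}$ to a two-player game: firm~$1$'s demand has own slope $\tilde b$ and cross slope $2c/3$, while the merged firm's joint FOC produces an effective own slope $b-2c/3$ (internalizing the cross-demand externality $c/3$) and effective cross slope $2c/3$. I would then solve three two-equation linear systems in $(x_1,x_{23})$: (a) $x_i^{post}(mc)$, using the unbiased FOCs at cost $mc$ from part~(1); (b) $x_i^{post}(\alpha^{pre})$, using the biased FOCs with $\alpha^{pre}$ and zero cost; and (c) $x_i^{post}(\alpha^{post})$, the NAE of the reduced duopoly, obtained directly from Proposition \ref{prop:price-competetion-unqiue-NAE}, which yields $\alpha^{post}$ in closed form and implies $\alpha^{post}<\alpha^{pre}<1$ (the merger tightens strategic complementarity relative to the three-firm game). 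The outer inequality $x_i^{post}(\alpha^{pre})<x_i^{post}(\alpha^{post})$ then follows from $\alpha^{post}<\alpha^{pre}$ together with the fact that, in the reduced game with strategic complements, prices are strictly decreasing in the common bias $\alpha$. The inner inequality $x_i^{post}(mc)<x_i^{post}(\alpha^{pre})$ follows by comparing the unbiased-with-$mc$ system and the biased-with-$\alpha^{pre}$ system: by construction of $mc$ they coincide pre-merger, but post-merger they differ because the bias $\alpha^{pre}$ multiplies the entire demand Jacobian of the merged firm (including the internalized cross-slope $c/3$), whereas the additive $mc$ interacts with $c/3$ only through realized quantities.

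The main obstacle I anticipate is the inner inequality $x_i^{post}(mc)<x_i^{post}(\alpha^{pre})$. Because the two ``explanations'' of the pre-merger data are observationally indistinguishable by construction, the argument must isolate the asymmetric way in which a multiplicative bias on elasticity versus an additive shift in cost propagates through the merged firm's joint FOC. I expect this reduces to a careful signed comparison of the two linear systems, exploiting that the merger lowers the relevant own slope from $\tilde b$ to $b-2c/3$ and that this reduction is further scaled by $\alpha^{pre}$ only under the bias interpretation.
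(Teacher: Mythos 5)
Your overall route is the same as the paper's: pre-merger symmetric NAE from Proposition \ref{prop:symmetric-oligopoly-prices} with $n=3$, the post-merger reduction (by $x_2=x_3$) to a linear duopoly with weights $w_1=1/3$, $w_{23}=2/3$ so that Proposition \ref{prop:price-competetion-unqiue-NAE} gives $\alpha^{post}=\sqrt{1-\tfrac{2c^2}{(3b-c)(3b-2c)}}$, and the outer inequality $x_i^{post}(\alpha^{pre})<x_i^{post}(\alpha^{post})$ from monotonicity of the $\alpha$-equilibrium prices in the common bias (the paper proves this monotonicity by the implicit function theorem applied to the aggregate-price equation; you assert it, but for the linear model it is routine). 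Your part (1) is correct and in fact cleaner than the paper's: from the biased FOC $q_i^{pre}=\alpha^{pre}\tilde b\,x_i^{pre}$ and the economist's FOC $q_i^{pre}=(x_i^{pre}-mc_i)\tilde b$ you get $mc_i=x_i^{pre}(1-\alpha^{pre})>0$ directly, whereas the paper solves $\bar{x}^{pre}=\bar{x}^{mc}$ and checks positivity of the resulting closed form.

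There are, however, two gaps, one of them serious. First, $\alpha^{post}<\alpha^{pre}$ is asserted only heuristically (``the merger tightens strategic complementarity''); it is a concrete algebraic comparison of $\alpha^{pre}=\tfrac{c+\sqrt{36b^2-36bc+c^2}}{6b-2c}$ with the expression above, which the paper verifies and which you would still need to check, though it is routine. Second, and decisively, the inner inequality $x_i^{post}(mc)<x_i^{post}(\alpha^{pre})$ --- which you yourself identify as the main obstacle --- is never actually established. You describe an anticipated ``careful signed comparison of the two linear systems,'' but give no argument that pins down the sign; your qualitative observation that a multiplicative bias scales the merged firm's whole demand Jacobian while an additive $mc$ enters only through quantities does not by itself determine which effect dominates. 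In the paper this is exactly the messy step: the authors compute $x_1^{mc,post}$, $x_{23}^{mc,post}$ and $x_i^{post}(\alpha^{pre})$ in closed form and verify that the differences (ratios of high-degree polynomial expressions in $b,c$ involving $\sqrt{36b^2-36bc+c^2}$) are positive for $b>c>0$, with the sign check done symbolically in Mathematica. Until you carry out that explicit comparison (or supply a genuinely structural sign argument), the central claim of part (2) remains unproved.
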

The proof appears in Appendix \eqref{subsec:Proof-of-Proposition-merger}.
The prices in the pre-merger naive analytics equilibrium will be above
the Nash equilibrium prices, as shown in Proposition \ref{prop:price-satisfies-Assumptions}.
These inflated prices will cause the economist to estimate a positive
marginal cost above zero for all firms. Post merger, the firms will
increase their prices due to the decreased competition. Despite the
economist overestimating firm marginal costs, which would allow her
to predict higher prices post-merger, this bias is not enough to compensate
for underestimating the prices in a naive analytics equilibrium.

Figure \ref{fig:merger} illustrates these results for $a=20$, $b=1$
and $c\in\left(0,1\right)$. The top panels show that the estimated
marginal costs $mc_{i}$ are positive and increasing with $c$ (left),
while the analytics biases $\alpha_{i}$ are lower than 1, decrease
with $c$ and $\alpha^{post}<\alpha^{pre}$ (right). The bottom panels
illustrate how the equilibrium prices change with $c$ for player
$1$ (left) and the merged player $23$ (right). In all cases, the
prices post-merger are higher than the prices pre-merger. However,
the economist underestimates the price increase due to the merger,
while the long-run price increases are the largest.

\begin{figure}[tbh]
\caption{\textbf{\label{fig:merger}}Outcomes of Merger Analysis ($a=20$,
$b=1$)}
\includegraphics[width=0.49\textwidth]{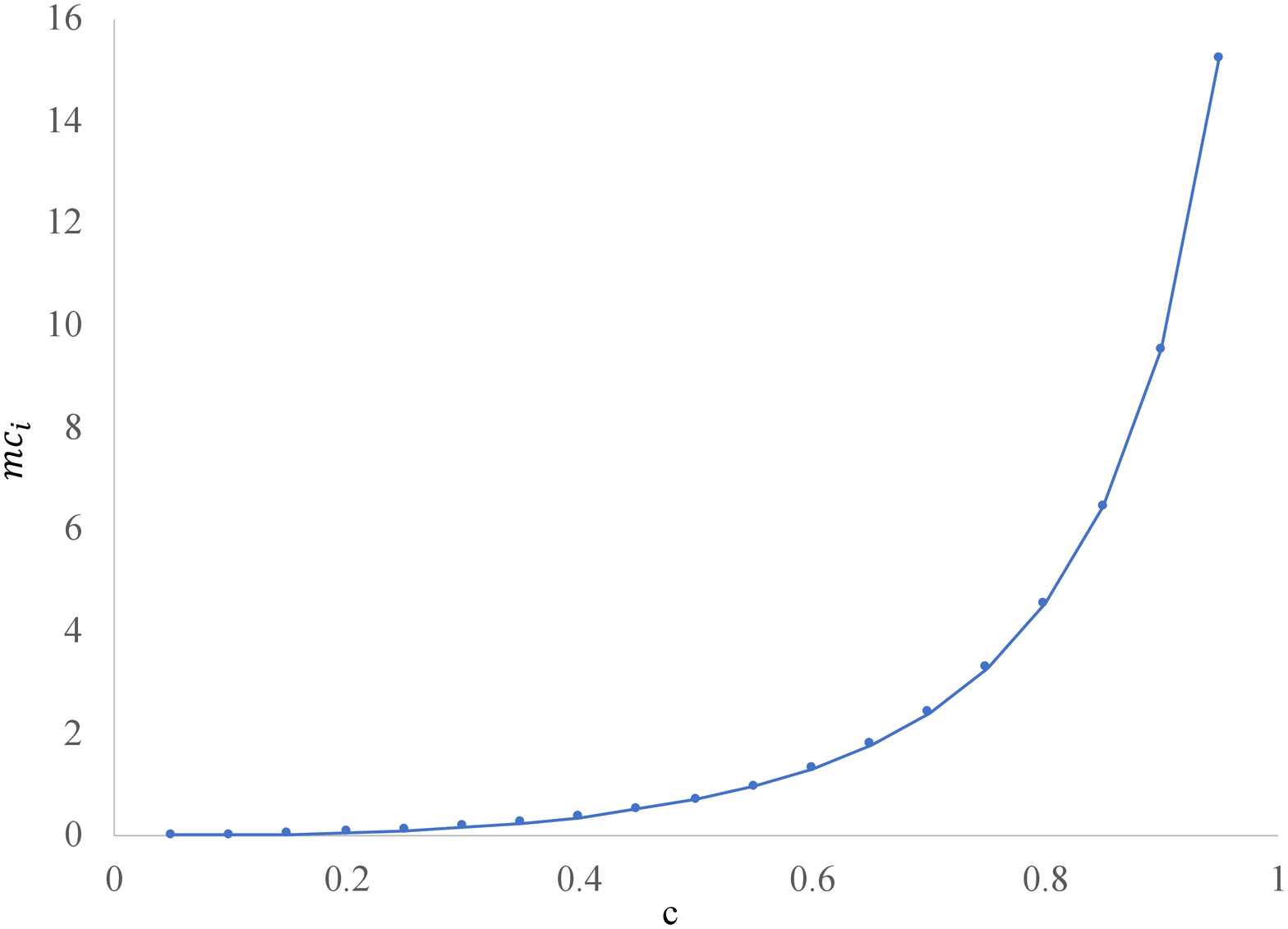}~\includegraphics[width=0.49\textwidth]{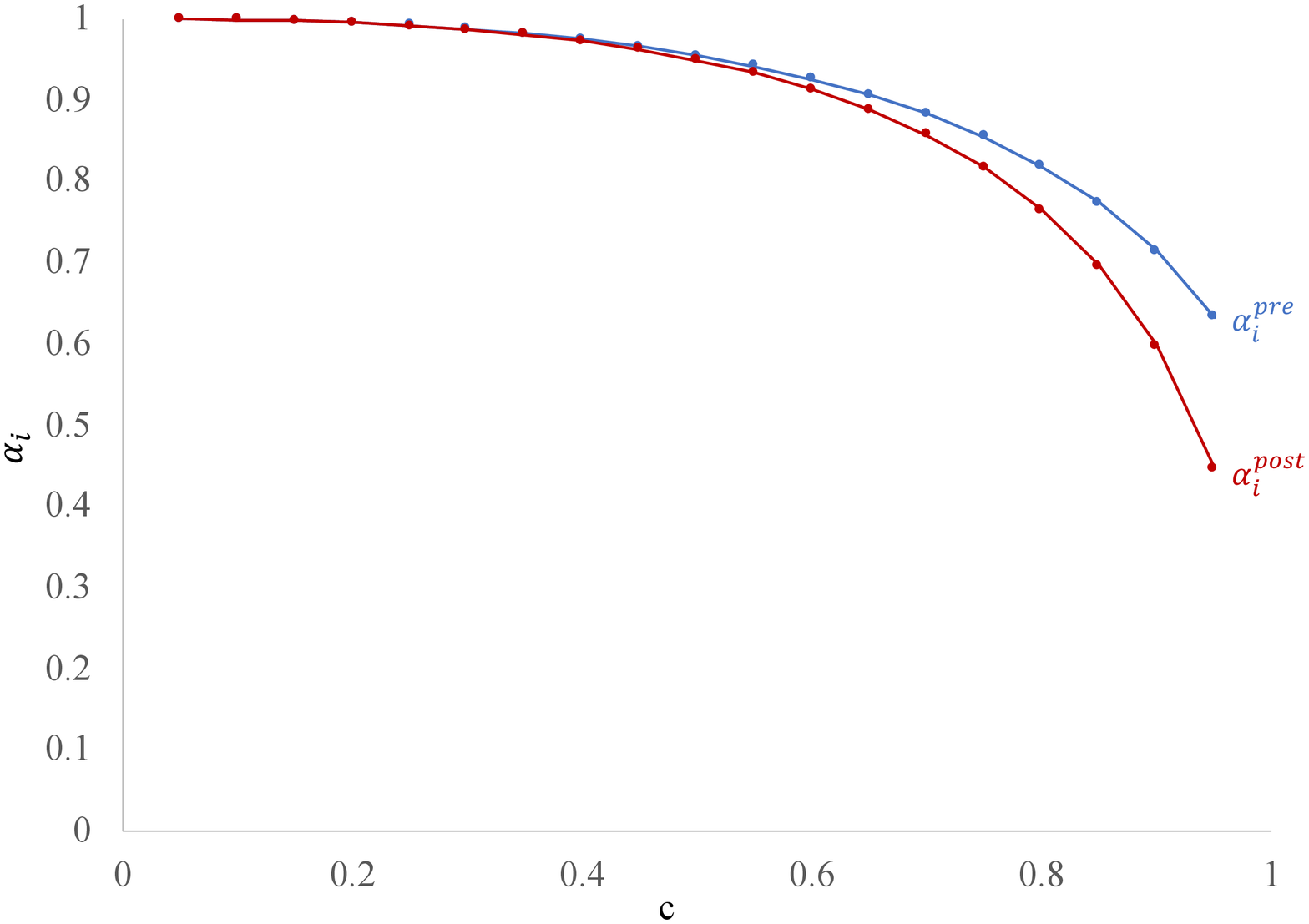}

{\footnotesize{}The left panel shows the marginal cost the economist
estimates as a function of $c$. The right panel shows the naive analytics
equilibrium biases $\alpha$ before and after the merger.}{\footnotesize\par}

\includegraphics[width=0.49\textwidth]{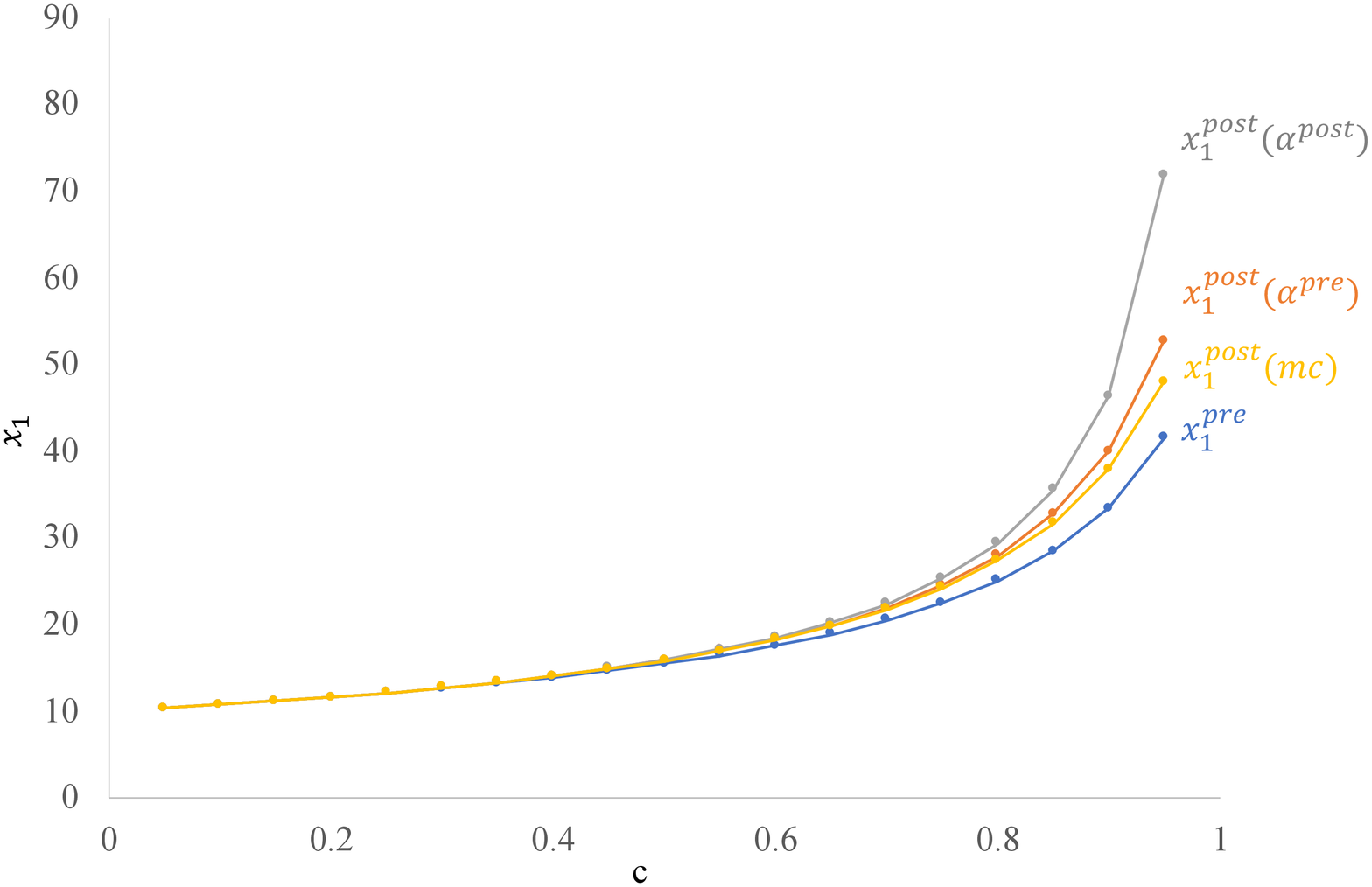}~\includegraphics[width=0.49\textwidth]{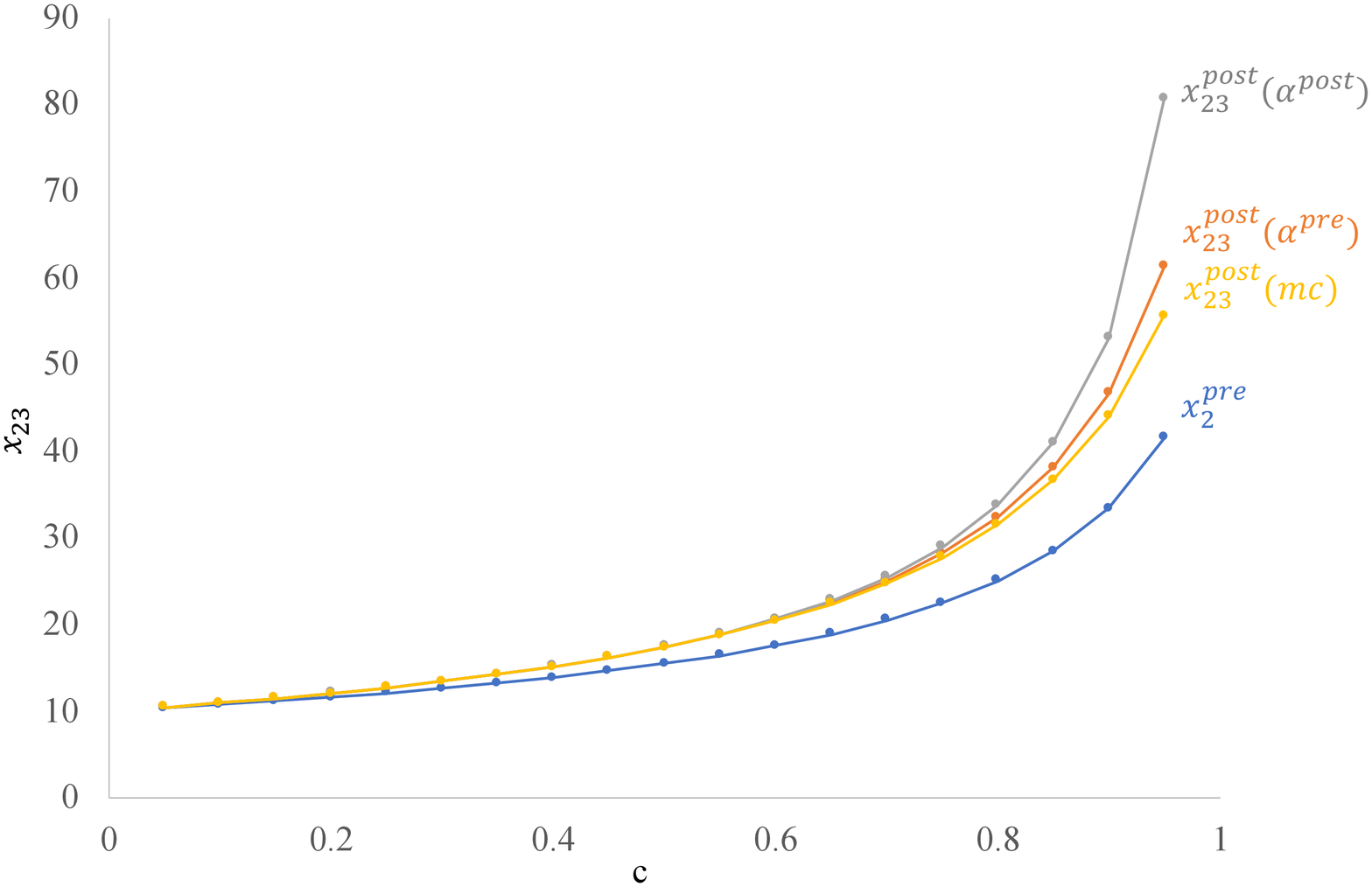} 

{\footnotesize{}The left panel shows the equilibrium prices set by
firm $1$ as a function of $c$. The right panel shows the equilibrium
prices set by firm $2$ pre-merger and the merged firm $23$ as a
function of $c$.}{\footnotesize\par}
\end{figure}

\section{Conclusion\label{sec:Conclusion}}

Naive analytics equilibrium can be used to analyze such games where
players have uncertainty about the indirect impact of their actions
on their payoffs, and allows players to use biased data analytics
to estimate this impact. This scenario is common in economic applications
such as price competition, advertising competition and team production.

The predictions of our results are consistent with commonly observed
behaviors of firms and teams. In equilibrium, players are predicted
to converge to biased estimates in the direction that causes their
opponents to respond in a beneficial manner. In pricing competition,
players are better off if they perceive consumers to be less price
elastic than they actually are, which is a possible interpretation
of observed firm pricing if they do not correct for price endogeneity
in their econometric analysis. In advertising competition, it is observed
that firms often overestimate the response to their advertising and
over-advertise, as predicted by our results. These deviations from
unbiased estimates cause deviations from the Nash equilibrium that
can be beneficial or detrimental to players. When games have strategic
complements, players will choose strategies that deviate from the
Nash equilibrium in the direction that benefits the opponents, and
their equilibrium payoffs will dominate those of the Nash equilibrium.
The converse is true for games with strategic substitutes.

The results of our analysis provide testable empirical predictions
about the direction and magnitudes of the biases. In particular, the
analysis predicts that in duopolies, both firms will have similar
level of biasedness (see, some supporting evidence in Table 2 of \citealp{villas1999endogeneity}).
Another prediction is that biasedness is strongest in duopolistic
competitions, it weakens as the number of competitors increase, and
it disappears if there are many small competitors, or in a monopolistic
market. These predictions could be tested in empirical data as well
as serve as a basis for analysis about the adoption and sophistication
of analytics in various industries. Further, our results may bring
to question some of the assumptions used in practice when performing
counterfactual analysis to estimate welfare and assess the impact
of regulatory policy. In these analyses, it is often assumed that
firms correctly perceive their economic environment and that any observed
inconsistency with this assumption may be due to unobserved factors
by researchers. However, the fact that firms misperceive the sensitivity
of demand in a naive analytics equilibrium can substantially change
the conclusions of counterfactual analysis used to assess the impact
of market structure analysis, as we have demonstrated for the merger
of competing firms in Section \ref{sec:Implication-for-Merges}. 

A second implication of our results is for research that focuses on
biases in decision making from non-causal inferential methods. The
research implicitly assumes that focusing on causality and more precise
estimates are better for firm performance, which often translates
to normative recommendation about firm practices (see, e.g., \citet{siroker2013b}
and \citet{thomke2020experimentation} on A/B testing). Our results
suggest that firms may be better off with opting for more naive heuristics,
which are indeed quite popular because they are easy to implement.
This may suggest that normative recommendations for deploying more
sophisticated analytics capabilities should be made with caution in
competitive environments.

\bibliographystyle{chicago}
\bibliography{sloppy}

\appendix
\clearpage \pagenumbering{arabic}

\section*{Online Appendices}

\section{\label{subsec:Fact-on-Relabeling}Classes with Relabeled Strategies
($x_{i}\rightarrow-x_{i}$)}

It is simple to see that relabeling strategies as their negative values
(i.e., replacing $x_{i}$ with $-x_{i}$) results in essentially the
same game with opposite signs to $\frac{d\pi_{i}}{dx_{j}}$ and $\frac{\partial\pi_{i}}{\partial x_{i}}$. 
\begin{fact}
\label{fact:relabeling}Let $\Gamma=\left(G=\left(N,X,q,\pi\right),A,f\right)$
be a naive-analytics game that satisfies Assumptions \ref{assu:Monotone-externalities}--\ref{assu:consistent-second-adaptation}.
Let $G'=\left(N,X',q',\pi'\right)$ be induced from $G$ by relabeling
the strategies $x_{i}\rightarrow-x_{i}$, i.e., $X'_{i}=-X_{i}=\left\{ x'_{i}\in\mathbb{R}|-x'_{i}\in X_{i}\right\} $,
$q'_{i}\left(x_{1},...,x_{n}\right)=q_{i}\left(-x_{1},...,-x_{n}\right)$
and $\pi'_{i}\left(x_{i},q_{i}\right)=\pi_{i}\left(-x_{i},q_{i}\right)$.
Then, $\Gamma'=\left(G',A,f\right)$ satisfies Assumptions \ref{assu:Monotone-externalities}--\ref{assu:consistent-second-adaptation},
the sign of $\frac{d\pi_{i}^{2}}{dx_{i}dx_{j}}$ remains the same,
while the signs of $\frac{d\Pi_{i}}{dx_{j}}$ and $\frac{\partial\pi_{i}}{\partial x_{i}}$
are flipped (relative to their signs in $G$), i.e., $\forall i\neq j$
\[
\frac{d\pi_{i}}{dx_{j}}>0\Leftrightarrow\frac{d\pi'_{i}}{dx_{j}}<0,\,\,\,\frac{\partial\pi_{i}}{\partial x_{i}}>0\Leftrightarrow\frac{\partial\pi'_{i}}{\partial x_{i}}<0,\,\,\,\frac{d^{2}\pi_{i}}{dx_{i}dx_{j}}>0\Leftrightarrow\frac{d^{2}\pi'_{i}}{dx_{i}dx_{j}}>0.
\]
\end{fact}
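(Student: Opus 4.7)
The plan is to reduce the whole statement to one transformation identity and then propagate its consequences via the chain rule. Let $\Pi_{i}(x)\equiv\pi_{i}(x_{i},q_{i}(x))$ denote the reduced profit in the original game and $\Pi'_{i}(x)\equiv\pi'_{i}(x_{i},q'_{i}(x))$ the reduced profit in the relabeled game. Directly from the definitions in the statement, $\Pi'_{i}(x)=\pi_{i}(-x_{i},q_{i}(-x))=\Pi_{i}(-x)$. Writing $y=-x$ and using $\tfrac{dy_{k}}{dx_{k}}=-1$, one obtains by the chain rule: (i) $\tfrac{d\Pi'_{i}}{dx_{j}}(x)=-\tfrac{d\Pi_{i}}{dy_{j}}(-x)$, so each payoff externality flips sign; (ii) $\tfrac{\partial\pi'_{i}}{\partial x_{i}}(x_{i},q_{i})=-\tfrac{\partial\pi_{i}}{\partial x_{i}}(-x_{i},q_{i})$, so each partial derivative flips sign; (iii) $\tfrac{d^{2}\Pi'_{i}}{dx_{i}dx_{j}}(x)=\tfrac{d^{2}\Pi_{i}}{dy_{i}dy_{j}}(-x)$, so the cross partial is preserved because two factors of $-1$ cancel. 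These three identities are exactly the three claimed sign relations.

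Next I would run through Assumptions \ref{assu:Monotone-externalities}--\ref{assu:consistent-second-adaptation} one at a time. Assumption \ref{assu:Monotone-externalities} transfers from (i), since a uniform sign flip preserves ``all of the same sign.'' Assumption \ref{assu:Monotone-partial-deriv} transfers analogously: part (i) follows from (ii), and part (ii) follows by noting that $\tfrac{\partial\pi_{i}}{\partial q_{i}}$ is unchanged while $\tfrac{\partial q_{i}}{\partial x_{i}}$ flips once, so the product flips uniformly. For Assumptions \ref{assu:Strong-concavity} and \ref{assu:Strong-com-subs} I first check how the biased estimation transforms: $\tfrac{\partial q'_{i}}{\partial x_{i}}(x)=-\tfrac{\partial q_{i}}{\partial x_{i}}(-x)$, and by the sign-preservation property of $f$ (combined with the natural symmetry $f(-y,\alpha)=-f(y,\alpha)$ implicit in the model and satisfied by the multiplicative $f_{m}$), the biased sensitivity obeys $\tfrac{\partial q'^{\alpha_{i}}_{i}}{\partial x_{i}}(x)=-\tfrac{\partial q^{\alpha_{i}}_{i}}{\partial x_{i}}(-x)$. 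Hence the biased reduced profit $\pi'^{\alpha_{i}}_{i}$ is related to $\pi^{\alpha_{i}}_{i}$ exactly as $\Pi'_{i}$ is related to $\Pi_{i}$, and the same chain-rule computation as in (iii) gives robust concavity and preserves the sign of every cross partial $\tfrac{d^{2}\pi^{\alpha_{i}}_{i}}{dx_{i}dx_{j}}$.

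For Assumption \ref{assu:bounded-percieved-BR} I would set $\underline{x}'^{\alpha}\equiv-\overline{x}^{\alpha}$ and $\overline{x}'^{\alpha}\equiv-\underline{x}^{\alpha}$; the bound carries over because the perceived best reply in $G'$ satisfies $BR'^{\alpha_{j}}_{j}(x_{-j})=-BR^{\alpha_{j}}_{j}(-x_{-j})$ (since the first-order condition becomes its negative under the relabeling). This same identity handles Assumption \ref{assu:consistent-second-adaptation}: evaluating the biconditional in $G'$ at a profile $x$ and its perceived-best-reply iteration, both sides of the equivalence negate in lockstep with their arguments, so the ``$>$'' and ``$<$'' flip on both sides and the biconditional is preserved.

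The main obstacle is the clean handling of the bias function $f$ under the sign reversal of its first argument. The paper only postulates that $f$ is sign-preserving with $|f|$ increasing in $\alpha_{i}$, and the transfer of Assumptions \ref{assu:Strong-concavity}--\ref{assu:Strong-com-subs} and of Assumption \ref{assu:consistent-second-adaptation} relies on $f$ satisfying the antisymmetry $f(-y,\alpha)=-f(y,\alpha)$. I would argue (and state explicitly in the proof) that this antisymmetry is the canonical choice, is automatic for the multiplicative family $f_{m}$ used in all applications, and is the only extension of $f$ from the one-sided domain to the two-sided domain that is compatible with sign preservation and continuity. Once this is granted, the remainder is bookkeeping via the chain rule.
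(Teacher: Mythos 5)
Your proof is correct and takes essentially the same route as the paper, which asserts the fact without an explicit proof as an immediate consequence of the chain rule under the reflection $x\mapsto-x$; your write-up merely makes the computation and the assumption-by-assumption transfer explicit. Your treatment of the bias function is the right call---the antisymmetry $f(-y,\alpha)=-f(y,\alpha)$ is indeed what makes the bias-dependent Assumptions \ref{assu:Strong-concavity}--\ref{assu:consistent-second-adaptation} carry over and it holds for the multiplicative $f_{m}$ used in all the paper's applications---though your side claim that it is the \emph{unique} sign-preserving continuous extension is an overstatement (and inessential, since you impose it explicitly).
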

\noindent Table \ref{tab:Summary-of-Results-appx} presents the 4
classes derived from the main-text classes by relabeling the strategies.
\[
.
\]

\begin{table}[H]
\caption{\label{tab:Summary-of-Results-appx}Summary of Results Under Assumptions
\ref{assu:Monotone-externalities}--\ref{assu:consistent-second-adaptation}}

~\\
\begin{tabular}{|>{\centering}p{3.2cm}|>{\centering}p{2.35cm}|>{\centering}p{1.1cm}|>{\centering}p{1.2cm}|>{\centering}p{1.7cm}|>{\centering}p{1.7cm}|>{\centering}p{2.4cm}|}
\hline 
~\\
Applications (Sections \ref{subsec:Price-Competition}--\ref{sec:Team-Production}) & Strategic comp./sub.\\
~\\
 $\frac{d\pi_{i}^{2}}{dx_{i}dx_{j}}$ & Payoff~\\
Extr.\\
~\\
 $\frac{d\Pi_{i}}{dx_{j}}$ & Partial derivative\\
$\frac{\partial\pi_{i}}{\partial x_{i}}$ & Perceived best replying (Prop. \ref{prop:percieved-BR-vs-unbiased-BR}) & Analytics bias\\
(Prop. \ref{pro:alpha}) & Pareto Dominance (Prop. \ref{prop:Pareto-dominate})\tabularnewline
\hline 
Price competition

w. subs. goods

$x_{i}=$discount & \multirow{2}{2.35cm}{\textbf{\textcolor{green}{~}}\\
\textbf{\textcolor{green}{~~~~~~+}}\\
~\\
~~~Strategic complements} & \textbf{\textcolor{red}{~}}\\
\textbf{\textcolor{red}{-}} & \textbf{\textcolor{red}{~}}\\
\textbf{\textcolor{red}{-}} & \multirow{4}{1.7cm}{\\
~\\
~\\
\textcolor{blue}{~~}\\
\textcolor{blue}{~}\\
~$\,\,x_{i}^{*}<$\\
$BR_{i}\left(x_{-i}^{*}\right)$} & \textcolor{brown}{~}\\
\textcolor{brown}{$\alpha_{i}^{*}<1$} & ~\\
$\forall i\,x_{i}^{*}<x_{i}^{NE}$\\
NAE\tabularnewline
\cline{1-1} \cline{3-4} \cline{4-4} \cline{6-6} 
Advertising with

positive exter. $x_{i}=$budget cut &  & \textbf{~}\\
\textbf{\textcolor{red}{-}} & \textbf{\textcolor{green}{~}}\\
\textbf{\textcolor{green}{+}} &  & \textcolor{blue}{~}\\
\textcolor{blue}{$\alpha_{i}^{*}>1$}\\
 & Pareto

dominates NE.\tabularnewline
\cline{1-4} \cline{2-4} \cline{3-4} \cline{4-4} \cline{6-7} \cline{7-7} 
Price competition

w. compl. goods $x_{i}=$discount & \multirow{2}{2.35cm}{\textbf{\textcolor{green}{~}}\\
\textbf{\textcolor{green}{~~~~~~}}\textbf{\textcolor{red}{-}}\\
~\\
~~~Strategic Substitutes} & \textbf{~}\\
\textbf{\textcolor{green}{+}} & \textbf{\textcolor{red}{~}}\\
\textbf{\textcolor{red}{-}} &  & \textcolor{brown}{~}\\
\textcolor{brown}{$\alpha_{i}^{*}<1$}\\
 & ~\\
$\exists i\,x_{i}^{*}<x_{i}^{NE}$\\
Symmetric\tabularnewline
\cline{1-1} \cline{3-4} \cline{4-4} \cline{6-6} 
Advertising with

neg. externalities $x_{i}=$budget cut &  & \textbf{\textcolor{green}{~}}\\
\textbf{\textcolor{green}{+}} & \textbf{\textcolor{green}{~}}\\
\textbf{\textcolor{green}{+}} &  & \textcolor{blue}{~}\\
\textcolor{blue}{$\alpha_{i}^{*}>1$}\\
 & NE Pareto dominates sym. NAE\tabularnewline
\hline 
\end{tabular}
\end{table}

\section{\label{subsec:Example-why-Assumption-6-is-nessecary}Example why
Assumption \ref{assu:consistent-second-adaptation} is Necessary}

Example \ref{exa:example-3-player-clockwise-1} illustrates why Assumption
\ref{assu:consistent-second-adaptation} (consistent\textbf{ }secondary
adaptation) is necessary for Proposition \ref{prop:percieved-BR-vs-unbiased-BR}
(and, thus, for the remaining results of Section \ref{sec:General-Results}).
\begin{example}
\label{exa:example-3-player-clockwise-1} Consider a 3-player game
in which the firms are located in a circle, the demand of each firm
is decreasing in all prices, and it mainly depends on its own price
and on its clockwise neighbor's price. That is the demand functions
are given by (for some $0<\epsilon<<1)$: 
\[
q_{1}=120-x_{1}-x_{2}-\epsilon x_{3},\,\,\,q_{2}=120-x_{2}-x_{3}-\epsilon x_{1},\,\,\,q_{3}=120-x_{3}-x_{1}-\epsilon x_{2},
\]
and the profit function of each firm is $\pi_{i}=q_{i}x_{i}$. Observe
that the game has strategic substitutes ($\frac{d\pi_{i}^{2}}{dx_{i}dx_{j}}<0$)
and negative externalities ($\frac{d\pi_{i}}{dx_{j}}<0$), which implies
that the strategic complementarity and payoff externalities are in
the same direction. In what follows we show that $x_{i}^{*}<BR_{i}\left(x_{-i}^{*}\right)$
in any naive analytics equilibrium, contradicting Proposition \ref{prop:percieved-BR-vs-unbiased-BR}.
Assume to the contrary that $x_{1}^{*}>BR_{1}\left(x_{-1}^{*}\right)$,
which implies that Player $1$ gains from unilaterally decreasing
her price towards the unbiased best reply. This decrease initially
induces both competitors to increase their prices, but player's 3
response is much stronger. This new higher price of player 3 induces
player 2 to readjust her price by decreasing it below her initial
NAE price, which increases the payoff of player 1. This implies that
player 1 gains from the change in her opponents' strategies. Thus,
$x_{1}^{*}$ cannot be a Stackelberg-leader strategy, which contradicts
$\left(\alpha^{*},x^{*}\right)$ being a naive analytics equilibrium.
\end{example}

\section{Microfoundations for biased estimation\label{sec:Microfoundations-for-biased}}

\subsection{\label{subsec:Biased-Price-Competition}Bias $\alpha_{i}<1$ in Price
Competition }

Suppose the analysts hired by each of the two firms decide to experiment
with prices to find the price elasticity of demand by alternating
between a high price ($p_{H}$) and a low price ($p_{L}$), setting
a low price (discount) $\mu_{L}$-share of the time. The experiment
can be characterized by a level of sloppiness $\gamma_{i}\in[0,1]$.
In a fraction $\gamma_{i}$ of the time, the analyst doesn't monitor
the firm's employees and does not carefully supervise that the employees
choose the discount times uniformly at random. Hence, it is possible,
for example, that the firm's employees will implement discounts on
days of low demand, possibly due to the employees having more free
time in these days to deal with posting different prices. In the rest
of the time ($1-\gamma_{i}$ fraction), the analyst verifies that
the prices are set randomly. Consequently, when either analyst sets
prices uniformly at random, there will not be correlation between
the firm's prices. This happens $1-\gamma_{1}\gamma_{2}$ fraction
of the time. In the remaining $\gamma_{1}\gamma_{2}$ fraction of
the time, there might be correlation between the firm's prices, which
we denote by $\rho$. The joint distribution of prices conditional
on the correlation $\rho$ and the fractions $\gamma_{1},\gamma_{2}$
is described in Table \ref{tb:joint}.

\begin{table}[h]
\begin{centering}
\begin{tabular}{|c|c|c|}
\hline 
 & $p_{L}$ & $p_{H}$\tabularnewline
\hline 
\hline 
$p_{L}$ & $\mu_{LL}=\mu_{L}^{2}+\mu_{L}(1-\mu_{L})\gamma_{1}\gamma_{2}\rho$ & $\mu_{LH}=\mu_{L}(1-\mu_{L})(1-\gamma_{1}\gamma_{2}\rho)$\tabularnewline
\hline 
$p_{H}$ & $\mu_{HL}=\mu_{L}(1-\mu_{L})(1-\gamma_{1}\gamma_{2}\rho)$ & $\mu_{HH}=(1-\mu_{L})^{2}+\mu_{L}(1-\mu_{L})\gamma_{1}\gamma_{2}\rho$\tabularnewline
\hline 
\end{tabular}
\par\end{centering}
\caption{Joint distribution of prices with correlation $\rho$ and the fractions
$\gamma_{1},\gamma_{2}$}
\label{tb:joint}
\end{table}
When calculating the price elasticity of demand to decide how to change
prices, the analyst calculates:
\begin{equation}
\eta_{i}=-\frac{\frac{\Delta Q_{i}}{\overline{Q_{i}}}}{\frac{\Delta P_{i}}{\overline{P_{i}}}}\label{eq:elasticity}
\end{equation}
Where $\Delta Q_{i}$ is the difference in average demand between
high priced and low priced periods, $\overline{Q_{i}}$ is the average
realized demand, $\Delta P_{i}=p_{H}-p_{L}$ is the difference in
price between high and low price periods, and $\overline{P_{i}}=\mu_{L}p_{L}+(1-\mu_{L})p_{H}$
is the average price set by the firm.

The demand observed by firm $i$ when setting price $p_{i}$ and when
its competitor sets a price $p_{-i}$ is $Q_{i}(p_{i},p_{-i})=a_{i}-b_{i}p_{i}+c_{i}p_{-i}$. 

Using the joint probabilities in Table \ref{tb:joint}, we find that
$\overline{Q_{i}}=a-(b-c)\left(\mu_{L}p_{L}+\left(1-\mu_{L}\right)p_{H}\right)$
and $\Delta Q_{i}=-\left(p_{H}-p_{L}\right)\left(b-c\gamma_{1}\gamma_{2}\rho\right)$.

Plugging into \eqref{eq:elasticity}, firm $i$ will estimate its
price elasticity as:

\begin{equation}
\eta_{i}=\frac{\left(b-c\gamma_{1}\gamma_{2}\rho\right)\left(\mu_{L}p_{L}+p_{H}\left(1-\mu_{L}\right)\right)}{a-(b-c)\left(\mu_{L}p_{L}+p_{H}\left(1-\mu_{L}\right)\right)},
\end{equation}
while the true elasticity is $\eta_{i}^{T}=\frac{b\left(\mu_{L}p_{L}+p_{H}\left(1-\mu_{L}\right)\right)}{a-(b+c)\left(\mu_{L}p_{L}+p_{H}\left(1-\mu_{L}\right)\right)}$.
Hence the analyst will estimate the firm's price elasticity as being
lower than $\eta_{i}^{T}$ when $c>0$ and $\rho>0$.

\subsection{\label{subsec:Biased-Advertising-Effectiveness}Bias $\alpha_{i}>1$
in Advertising Competition}

Assume the firm's sales at time $t$ behave according to the linear
model $sales_{t}=\mu+x_{t}+\epsilon_{t}$ where $\mu$ is the average
sales, $x_{t}$ is the level of advertising, that can be $x_{L}$
or $x_{H}$ with $x_{H}>x_{L}\ge0$, and $\epsilon_{t}$ is demand
shock which is distributed i.i.d $\mathscr{\mathcal{N}}(0,1)$. In
this model, the true effect of advertising, $\frac{d(sales_{t})}{dx_{t}}$
equals 1.

The firm has a sales target $\mu$ and its advertising strategy is
to increase advertising to $x_{t+1}=x_{H}$ if sales fall below $\mu$
at time $t$, i.e., if $sales_{t}<\mu$, and otherwise set $x_{t+1}=x_{L}$.

To estimate the effect of advertising, the firm looks at the difference
in sales when advertising is increased or decreased (otherwise the
change cannot be attributed to changes in advertising) and takes the
average to calculate
\begin{equation}
\frac{\mathbb{E}[\Delta sales]}{\Delta x}=\frac{\frac{\mathbb{E}[sales_{t+1}-sales_{t}|x_{t+1}=x_{H},x_{t}=x_{L}]}{x_{H}-x_{L}}+\frac{\mathbb{E}[sales_{t+1}-sales_{t}|x_{t+1}=x_{L},x_{t}=x_{H}]}{x_{L}-x_{H}}}{2}\label{eq:ad-effect}
\end{equation}

More sophisticated approaches can take into account a weighted average
of these estimates and also take into account the baseline sales when
advertising does not change.

The left-hand part of the summand equals: 
\begin{eqnarray*}
\frac{\mathbb{E}[sales_{t+1}-sales_{t}|x_{t+1}=x_{H},x_{t}=x_{L}]}{x_{H}-x_{L}} & = & \frac{\mu+x_{H}+\mathbb{E}[\epsilon_{t+1}]-(\mu+x_{L}+\mathbb{E}[\epsilon_{t}|sales_{t}<\mu])}{x_{H}-x_{L}}\\
 & = & =\frac{\mu+x_{H}-\left(\mu+x_{L}-\frac{\phi(-x_{L})}{\Phi(-x_{L})}\right)}{x_{H}-x_{L}}=1+\frac{\frac{\phi(-x_{L})}{\Phi(-x_{L})}}{x_{H}-x_{L}}>1
\end{eqnarray*}

where $\phi(\cdot)$ is the standard Normal pdf and $\Phi(\cdot)$
its cdf. The right-hand part equals:

\begin{eqnarray*}
\frac{\mathbb{E}[sales_{t+1}-sales_{t}|x_{t+1}=x_{L},x_{t}=x_{H}]}{x_{L}-x_{H}} & = & \frac{\mu+x_{L}+\mathbb{E}[\epsilon_{t+1}]-(\mu+x_{H}+\mathbb{E}[\epsilon_{t}|sales_{t}\ge\mu])}{x_{L}-x_{H}}\\
 & = & \frac{\mu+x_{L}-\left(\mu+x_{H}-\frac{\phi(-x_{H})}{1-\Phi(-x_{H})}\right)}{x_{L}-x_{H}}=1-\frac{\frac{\phi(x_{H})}{\Phi(x_{H})}}{x_{H}-x_{L}}<1
\end{eqnarray*}

Because $\frac{\phi(x)}{\Phi(x)}$ is decreasing in $x$, the sum
in the numerator of \eqref{eq:ad-effect} is larger than 2, which
results in the firm overestimating the effectiveness of its advertising
to be more than 1.

\section{\label{sec:Technical-Proofs}Technical Proofs}

\subsection{\label{subsec:Proof-of-Stackelberg}Proof of Prop. \ref{pro:stackelberg-leader}
(Unilateral Sophistication)}

Assume to the contrary that $\left(\alpha^{*},x^{*}\right)$ is
a naive analytics equilibrium and $x_{i}^{*}\notin X_{i}^{SL}\left(\alpha_{-i}^{*}\right)$.
The fact that $x_{i}^{*}\notin X_{i}^{SL}\left(\alpha_{-i}^{*}\right)$
implies that there exist $x'_{i}\in X_{i}$ and $\left(x'_{i},\alpha_{-i}^{*}\right)$-equilibrium
$x'$ such that $\pi_{i}\left(x'\right)>\pi_{i}\left(x^{*}\right)$.
Let $\alpha'_{i}\in A$ be a bias satisfying $\frac{d\pi_{i}^{\alpha'_{i}}\left(x'\right)}{dx_{i}}=0$
and $\frac{d^{2}\pi_{i}^{\alpha'_{i}}\left(x'\right)}{dx_{i}^{2}}<0$
(such $\alpha'{}_{i}$ exists due to part (2) of Definition \ref{def:SL-strategy}).
Then $x'$ is $\left(\alpha'_{i},\alpha_{-i}^{*}\right)$-equilibrium
that satisfies $\pi_{i}\left(x'\right)>\pi_{i}\left(x^{*}\right)$,
which contradicts $\left(\alpha^{*},x^{*}\right)$ being a naive analytics
equilibrium.

Next assume to the contrary that $x^{*}$ is an $\alpha^{*}$-equilibrium,
$x_{i}^{*}\in X_{i}^{SL}\left(\alpha_{-i}^{*}\right)$ for each player
$i$ , and $\left(\alpha^{*},x^{*}\right)$ is not a naive analytics
equilibrium. The last assumption implies that there is $i\in N$,
bias $\alpha'_{i}$ and $\left(\alpha'_{i},\alpha_{-i}^{*}\right)$-equilibrium
$x'$ s.t. $\pi_{i}\left(x'\right)>\pi_{i}\left(x^{*}\right)$. Observe
that $x'$ is an $\left(x'_{i},\alpha_{-i}^{*}\right)$-equilibrium.
The fact that $\pi_{i}\left(x'\right)>\pi_{i}\left(x^{*}\right)$
contradicts the assumption that $x_{i}\left(\alpha^{*}\right)\in X_{i}^{SL}\left(\alpha_{-i}\right)$.

\subsection{Proposition \ref{prop:Pareto-dominate} (Strategic Complementarity)}

\begin{lem}
\label{lem:strategic-complements-standard-lemma-1}Let $\Gamma$ be
a game Assumptions \ref{assu:Monotone-externalities}--\ref{assu:Strong-com-subs}.
Let $x^{*}$ be a strategy profile satisfying $\textrm{sgn}\left(x_{i}^{*}-BR_{i}\left(x_{-i}^{*}\right)\right)\in\left\{ \textrm{sgn}\left(\frac{d\pi_{i}}{dx_{j}}\right),0\right\} $
for each $i\in N$ and $\textrm{sgn}\left(x_{k}^{*}-BR_{i}\left(x_{-k}^{*}\right)\right)\neq0$
for some $k\in N$. hen there exists a Nash equilibrium $x^{NE}$,
such that $sgn\left(x_{i}^{*}-x_{i}^{NE}\right)=sgn\left(\frac{d\pi_{j}}{dx_{i}}\right)$
for each $i\in N$.
\end{lem}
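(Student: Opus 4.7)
My plan is to construct $x^{NE}$ as the monotone limit of best-reply iteration from $x^*$. By Fact \ref{fact:relabeling}, I may assume without loss of generality that $\frac{d\pi_i}{dx_j} > 0$ for all $i \neq j$ (the negative-externalities case follows by relabeling $x_i \mapsto -x_i$, which flips externalities while preserving strategic complementarity). Under this normalization the hypothesis reads $x_i^* \geq BR_i(x_{-i}^*)$ for every $i$, with strict inequality at some index $k$, and the conclusion becomes $x_i^* > x_i^{NE}$ for every $i$. Also note that strategic complements ($\frac{d^2\pi_i}{dx_i dx_j} > 0$) together with robust concavity (Assumption \ref{assu:Strong-concavity}) make $BR$ a continuous, strictly coordinatewise-increasing map.

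The iteration I propose is $x^{(0)} = x^*$ and $x^{(t+1)} = BR(x^{(t)})$. The hypothesis gives $x^{(1)} \leq x^{(0)}$, and monotonicity of $BR$ propagates this inductively to $x^{(t+1)} \leq x^{(t)}$. Assumption \ref{assu:bounded-percieved-BR} supplies a lower bound $\underline{x}$ with $BR_i(\underline{x}_{-i}) > \underline{x}_i$, so by monotonicity any iterate dominating $\underline{x}$ has a successor also dominating $\underline{x}$; if the original $\underline{x}$ does not already lie below $x^*$ coordinatewise, I would shrink it further using the strict inequality in Assumption \ref{assu:bounded-percieved-BR} together with continuity. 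The resulting monotone bounded sequence converges, and continuity of $BR$ forces its limit $x^{NE}$ to be a fixed point of $BR$, hence a Nash equilibrium with $x^{NE} \leq x^*$.

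To upgrade to strict coordinatewise inequality, I propagate the strict decrease at index $k$ through one additional round. After step one, $x_k^{(1)} < x_k^*$ and $x_j^{(1)} \leq x_j^*$ for $j \neq k$. At step two, for every $i \neq k$ the strict positivity of $\frac{d^2\pi_i}{dx_i dx_k}$ makes $BR_i$ strictly increasing in its $k$-th argument, so $x_i^{(2)} = BR_i(x_{-i}^{(1)}) < BR_i(x_{-i}^*) \leq x_i^*$; and for $i = k$, monotone decrease gives $x_k^{(2)} \leq x_k^{(1)} < x_k^*$. Combined with monotone convergence this yields $x_i^{NE} \leq x_i^{(2)} < x_i^*$ for every $i$, as required. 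The main obstacle I anticipate is ensuring that the iteration stays inside a compact region on which $BR$ is continuous — i.e., reconciling the hypothesis on $x^*$ with the global bounds of Assumption \ref{assu:bounded-percieved-BR}; once that bookkeeping is handled, the monotonicity and strict-propagation steps follow routinely from strategic complementarity.
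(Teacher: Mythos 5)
Your route is genuinely different from the paper's. You construct the required equilibrium as the limit of a decreasing best-reply iteration started at $x^{*}$ (a Topkis-style monotone argument), whereas the paper argues by contradiction: it restricts each player $i$ to strategies with $\textrm{sgn}\left(x_{i}^{*}-x_{i}\right)\in\left\{ \textrm{sgn}\left(\frac{d\pi_{i}}{dx_{j}}\right),0\right\}$, invokes existence of a pure equilibrium of that restricted game (via concavity), and shows that a player whose restricted best reply sits on the boundary $x_{i}^{*}$ would have to violate strategic complementarity; a second step then upgrades the weak inequalities to strict ones. Your normalization via Fact \ref{fact:relabeling}, the monotone-decrease induction, and the propagation of strictness through the positive cross-partial are all sound.

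The genuine gap is exactly the step you flag and then dismiss as bookkeeping: bounding the decreasing iteration from below. Assumption \ref{assu:bounded-percieved-BR} asserts the existence of one particular pair $\underline{x}^{\alpha}\leq\overline{x}^{\alpha}$ at which best replies point inward; it is not a family of bounds that can be ``shrunk further,'' and continuity only perturbs $\underline{x}$ locally --- it does not manufacture a profile lying below an arbitrary $x^{*}$ that still satisfies $BR_{i}\left(\underline{x}_{-i}\right)>\underline{x}_{i}$. Since each $X_{i}$ is only a possibly unbounded interval and $x^{*}$ need not dominate $\underline{x}$, nothing in Assumptions \ref{assu:Monotone-externalities}--\ref{assu:bounded-percieved-BR} prevents your iterates from exiting the box $\left[\underline{x},\overline{x}\right]$ from below and then decreasing without a limit in $X$ (best-reply slopes of one or more below the box are compatible with all the stated sign and concavity assumptions), so this is where the substance of the proof lies, not a routine detail. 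The workable fixes are either (i) to use that in every application of the lemma the sets $X_{i}$ are closed and bounded below (e.g., $\mathbb{R}_{+}$ or $\left[0,M_{i}\right]$), in which case your argument goes through verbatim, or (ii) to follow the paper and obtain the equilibrium from the restricted game, whose pure-equilibrium existence plays the role of your missing lower bound (and itself implicitly uses compactness of the restricted strategy sets). A smaller caveat: your strict step $x_{i}^{(2)}<x_{i}^{*}$ for $i\neq k$ relies on $BR_{i}$ being strictly increasing in $x_{k}$, which is guaranteed at interior best replies but can fail at a corner of $X_{i}$; this deserves a sentence when $X_{i}$ has endpoints.
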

\begin{proof}
We begin by showing a slightly weaker property, namely, that there
exists a Nash equilibrium $x^{NE}$, such that $sgn\left(x_{i}^{*}-x_{i}^{NE}\right)=\left\{ sgn\left(\frac{d\pi_{j}}{dx_{i}}\right),0\right\} $
for each $i\in N$. Assume to the contrary that there exists player
$j$ for which $sgn\left(x_{j}^{*}-x_{j}^{NE}\right)=-sgn\left(\frac{d\pi_{j}}{dx_{i}}\right)$
for every Nash equilibrium $x^{NE}$. Consider an auxiliary game $G^{R}$
similar to $G$ except that each player $i$ is restricted to choose
a strategy $x_{i}$ satisfying $sgn\left(x_{i}^{*}-x_{i}\right)\in\left\{ \textrm{sgn}\left(\frac{d\pi_{i}}{dx_{j}}\right),0\right\} $.
Due to the concavity (Assumption \ref{assu:Strong-concavity}), the
game $G^{R}$ admits a pure Nash equilibrium, which we denote by $x^{RE}$.
The profile $x^{RE}$ cannot be a Nash equilibrium of the original
game $G$ because $sgn\left(x_{j}^{*}-x_{j}^{RE}\right)\in\left\{ \textrm{sgn}\left(\frac{d\pi_{i}}{dx_{j}}\right),0\right\} $,
while $sgn\left(x_{j}^{*}-x_{j}^{NE}\right)=-sgn\left(\frac{d\pi_{j}}{dx_{i}}\right)$
for every Nash equilibrium $x^{NE}$. This implies that there exists
player $i$ for which $x_{i}^{RE}=x_{i}^{*}$ and $\textrm{sgn}\left(x_{i}^{*}-BR_{i}\left(x_{-i}^{RE}\right)\right)=-\textrm{sgn}\left(\frac{d\pi_{i}}{dx_{j}}\right)$,
which contradicts $\textrm{sgn}\left(x_{i}^{*}-BR_{i}\left(x_{-i}^{*}\right)\right),sgn\left(BR_{i}\left(x_{-i}^{*}\right)-BR_{i}\left(x_{-i}^{RE}\right)\right)\in\left\{ \textrm{sgn}\left(\frac{d\pi_{i}}{dx_{j}}\right),0\right\} $
(where the latter inclusion is implied by the strategic complimentary
and the fact that $sgn\left(x_{-i}^{*}-x_{-i}^{RE}\right)\in\left\{ \textrm{sgn}\left(\frac{d\pi_{i}}{dx_{j}}\right),0\right\} $).

Next, we prove the slightly stronger property that $sgn\left(x_{i}^{*}-x_{i}^{NE}\right)=sgn\left(\frac{d\pi_{j}}{dx_{i}}\right)$
for each $i\in N$. Assume to the contrary that there exists player
$j$ for which $sgn\left(x_{j}^{*}-x_{j}^{NE}\right)=\left\{ -\textrm{sgn}\left(\frac{d\pi_{i}}{dx_{j}}\right),0\right\} $
for every Nash equilibrium $x^{NE}$. Due to argument presented above
$sgn\left(x_{i}^{*}-x_{i}^{NE}\right)=\left\{ sgn\left(\frac{d\pi_{j}}{dx_{i}}\right),0\right\} $
for each $i\in N$. This implies that $x_{j}^{*}=x_{j}^{NE}$ and
\[
sgn\left(BR_{j}\left(x_{-j}^{*}\right)-BR_{j}\left(x_{-j}^{NE}\right)\right)=sgn\left(BR_{j}\left(x_{-j}^{*}\right)-x_{j}^{NE}\right)=\left\{ sgn\left(\frac{d\pi_{j}}{dx_{i}}\right),0\right\} 
\]
due to the strategic complements. Finally, the fact that $sgn\left(x_{j}^{*}-BR_{i}\left(x_{-j}^{*}\right)\right)=sgn\left(\frac{d\pi_{j}}{dx_{i}}\right)$
implies that $sgn\left(x_{j}^{*}-x_{j}^{NE}\right)=sgn\left(\frac{d\pi_{j}}{dx_{i}}\right)$
and we get a contradiction.
\end{proof}
\begin{lem}
\label{lem-sign-depends-only-on-alpha_j}Let $\Gamma$ be a game satisfying
Assumptions \ref{assu:Monotone-externalities}--\ref{assu:consistent-second-adaptation}.
Let $x$ be an $\alpha$-equilibrium and let $x'$ be an $\left(\alpha_{j},\alpha'_{-j}\right)$-equilibrium.
Then $\textrm{sgn}\left(x_{j}-BR_{j}\left(x_{-j}\right)\right)=\textrm{sgn}\left(x'_{j}-BR_{j}\left(x'_{-j}\right)\right)$.
\end{lem}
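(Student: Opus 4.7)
\noindent\textbf{Plan for the proof of Lemma \ref{lem-sign-depends-only-on-alpha_j}.} The goal is to show that the sign of $x_j - BR_j(x_{-j})$ depends only on the bias $\alpha_j$ that player $j$ uses, and not on the opponents' biases $\alpha_{-j}$ (or on the equilibrium profile in which they are embedded). Since both $x$ and $x'$ are profiles in which player $j$ best-responds under the same bias $\alpha_j$, we have $x_j = BR_j^{\alpha_j}(x_{-j})$ and $x'_j = BR_j^{\alpha_j}(x'_{-j})$. The lemma therefore reduces to the claim that, for every opponents' profile $y_{-j}$, the sign of $BR_j^{\alpha_j}(y_{-j}) - BR_j(y_{-j})$ is the same across all $y_{-j}$.

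\noindent\textbf{Step 1: Reduce signs to a single first-order expression.} By robust concavity (Assumption \ref{assu:Strong-concavity}), the perceived profit $\pi_j^{\alpha_j}$ is strictly concave in $x_j$, so $\frac{d\pi_j^{\alpha_j}}{dx_j}$ is strictly decreasing in $x_j$ and vanishes precisely at $BR_j^{\alpha_j}(y_{-j})$. Hence
\[
\mathrm{sgn}\bigl(BR_j^{\alpha_j}(y_{-j}) - BR_j(y_{-j})\bigr) \;=\; \mathrm{sgn}\left(\left.\frac{d\pi_j^{\alpha_j}}{dx_j}\right|_{x_j = BR_j(y_{-j}),\, x_{-j}=y_{-j}}\right).
\]

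\noindent\textbf{Step 2: Substitute the unbiased FOC.} At $x_j = BR_j(y_{-j})$ the unbiased FOC \eqref{eq:foc} gives $\frac{\partial\pi_j}{\partial x_j} = -\frac{\partial\pi_j}{\partial q_j}\cdot\frac{\partial q_j}{\partial x_j}$. Plugging this into $\frac{d\pi_j^{\alpha_j}}{dx_j} = \frac{\partial\pi_j}{\partial x_j} + \frac{\partial\pi_j}{\partial q_j}\cdot\frac{\partial q_j^{\alpha_j}}{\partial x_j}$ yields
\[
\left.\frac{d\pi_j^{\alpha_j}}{dx_j}\right|_{x_j=BR_j(y_{-j})} \;=\; \frac{\partial\pi_j}{\partial q_j}\left(\frac{\partial q_j^{\alpha_j}}{\partial x_j} - \frac{\partial q_j}{\partial x_j}\right).
\]

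\noindent\textbf{Step 3: Pin down the two sign factors.} The assumptions on the bias function $f$ (same sign as $\frac{\partial q_j}{\partial x_j}$ and $|f|$ increasing in $\alpha_j$) imply
\[
\mathrm{sgn}\!\left(\tfrac{\partial q_j^{\alpha_j}}{\partial x_j} - \tfrac{\partial q_j}{\partial x_j}\right) \;=\; \mathrm{sgn}(\alpha_j - 1)\cdot\mathrm{sgn}\!\left(\tfrac{\partial q_j}{\partial x_j}\right),
\]
so the sign of the expression in Step 2 equals $\mathrm{sgn}(\alpha_j-1)\cdot\mathrm{sgn}\!\left(\tfrac{\partial\pi_j}{\partial q_j}\cdot\tfrac{\partial q_j}{\partial x_j}\right)$. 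By Assumption \ref{assu:Monotone-partial-deriv}(ii), the product $\frac{\partial\pi_j}{\partial q_j}\cdot\frac{\partial q_j}{\partial x_j}$ has a constant sign across the whole profile space. Therefore the displayed sign is a function of $\alpha_j$ alone. Applying this once with $y_{-j} = x_{-j}$ and once with $y_{-j} = x'_{-j}$ gives the same sign in both cases, which is exactly the conclusion of the lemma.

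\noindent\textbf{Anticipated difficulty.} The argument is essentially an identification of which structural assumptions render $\mathrm{sgn}(x_j - BR_j(x_{-j}))$ profile-independent; the real work was done when Assumption \ref{assu:Monotone-partial-deriv} was imposed, so no substantive obstacle remains. The only mild care needed is to handle both signs of $\frac{\partial q_j}{\partial x_j}$ uniformly (the multiplication by $\mathrm{sgn}(\frac{\partial q_j}{\partial x_j})$ in Step 3 absorbs this), and to verify that robust concavity, rather than mere concavity at $\alpha_j = 1$, is what licenses Step 1 for the given $\alpha_j$.
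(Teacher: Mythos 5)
Your proposal is correct and is essentially the paper's own argument run in mirror image: the paper evaluates the \emph{true} total derivative $\frac{d\pi_j}{dx_j}$ at the biased equilibrium point and uses concavity of the true payoff, whereas you evaluate the \emph{biased} derivative $\frac{d\pi_j^{\alpha_j}}{dx_j}$ at the unbiased best reply and use robust concavity; in both cases the key step is that the difference between the two first-order conditions equals $\frac{\partial\pi_j}{\partial q_j}\bigl(\frac{\partial q_j^{\alpha_j}}{\partial x_j}-\frac{\partial q_j}{\partial x_j}\bigr)$, whose sign is pinned down by $\mathrm{sgn}(\alpha_j-1)$ together with the profile-independent sign of $\frac{\partial\pi_j}{\partial q_j}\cdot\frac{\partial q_j}{\partial x_j}$ from Assumption \ref{assu:Monotone-partial-deriv}. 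No substantive gap; the only (shared) caveat is the implicit interiority of the best replies so that the first-order conditions apply, which is the paper's standing level of rigor.
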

\begin{proof}
The fact that x (resp., $x'$) is an $\alpha$-equilibrium (resp.,
$\left(\alpha_{j},\alpha'_{-j}\right)$-equilibrium) implies that
$\frac{d\pi_{j}^{\alpha_{j}}\left(x\right)}{dx_{j}}=0$ (resp., $\frac{d\pi_{J}^{\alpha_{j}}\left(x'\right)}{dx_{j}}=0$).
This, in turn, implies due to the definition of $\frac{dq_{j}^{\alpha_{j}}}{dx_{j}}$
and Assumptions \ref{assu:Monotone-partial-deriv}--\ref{assu:Strong-concavity}
that $\alpha_{j}\cdot\frac{d\pi_{j}\left(x\right)}{dx_{j}}\cdot\frac{\partial\pi_{j}\left(x\right)}{\partial x_{j}}>0$
(resp., $\alpha_{j}\cdot\frac{d\pi_{j}\left(x'\right)}{dx_{j}}\cdot\frac{\partial\pi_{j}\left(x'\right)}{\partial x_{j}}>0$).
Combining these inequalities imply that $\textrm{sgn}\left(\frac{d\pi_{j}\left(x\right)}{dx_{j}}\right)=\textrm{sgn}\left(\frac{d\pi_{j}\left(x'\right)}{dx_{j}}\right)$
due to Assumption \ref{assu:Monotone-partial-deriv}, which implies
that $\textrm{sgn}\left(x_{j}-BR_{j}\left(x_{-j}\right)\right)=\textrm{sgn}\left(x'_{j}-BR_{j}\left(x'_{-j}\right)\right)$
due to Assumption \ref{assu:Strong-concavity}.
\end{proof}
We now prove part (2) of Prop. \ref{prop:Pareto-dominate} for the
case of strategic substitutes (part (1) is proven in the main text
and the preceding lemmas). Corollary \ref{cor:sign-strategic-compl-supp}
implies that $\textrm{sgn}\left(x_{i}^{*}-BR_{i}\left(x_{-i}^{*}\right)\right)=-\textrm{sgn}\left(\frac{d\pi_{i}}{dx_{j}}\right)$
for each $i\in N$. We now show that that for every Nash equilibrium
$x^{NE}$ there exists player $i\in N$, such that $sgn\left(x_{i}^{*}-x_{i}^{NE}\right)=-sgn\left(\frac{d\pi_{j}}{dx_{i}}\right)$.
Fix any player $j\in N$. Assume to the contrary that $sgn\left(x_{i}^{*}-x_{i}^{NE}\right)=sgn\left(\frac{d\pi_{j}}{dx_{i}}\right)$
for each player $i$. This implies (due to robust strategic substitutes)
that $\textrm{sgn}\left(BR_{i}\left(x_{-j}^{*}\right)-BR_{j}\left(x_{-j}^{NE}\right)\right)=-\textrm{sgn}\left(\frac{d\pi_{i}}{dx_{j}}\right)$,
which, in turn, implies that $-\textrm{sgn}\left(\frac{d\pi_{i}}{dx_{j}}\right)=\textrm{sgn}\left(x_{j}^{*}-BR_{j}\left(x_{-j}^{NE}\right)\right)=\textrm{sgn}\left(x_{j}^{*}-x_{j}^{NE}\right)$,
and we get a contradiction. Next assume that $x^{NE}$ and $x^{*}$
are both symmetric profiles. The symmetry of the profiles and the
above argument implies that $sgn\left(x_{i}^{*}-x_{i}^{NE}\right)=-sgn\left(\frac{d\pi_{j}}{dx_{i}}\right)$
for each player $i\in N$. This implies that $\pi_{i}\left(x^{*}\right)<\pi_{i}\left(x_{i}^{*},x_{-i}^{NE}\right)\leq\pi_{i}\left(x^{NE}\right)$,
where the first (resp., second) inequality is due to monotone externalities
(resp., $x^{NE}$ being a Nash equilibrium).

\subsection{\label{subsec:Proof-of-oligopoly}{\normalsize{}Proof of Proposition
\ref{prop:price-satisfies-Assumptions}} {\normalsize{}(Price Competition
$\Rightarrow$ Assumptions \ref{assu:Monotone-externalities}--\ref{assu:consistent-second-adaptation})}}

The following three lemmas will be helpful for the proof Proposition
\ref{prop:price-satisfies-Assumptions}.
\begin{lem}
\label{lem:unqiue-alpha-eq-Price-competition-game}Price competition
$\Gamma_{P}$ admits a unique $\alpha$-equilibrium for any $\alpha\in\mathbb{R}_{++}$.
\end{lem}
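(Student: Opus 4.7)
The plan is to reduce the claim to the unique solvability of a linear system. Under the multiplicative bias $f_m$, using $\partial \pi_i / \partial x_i = q_i$, $\partial \pi_i / \partial q_i = x_i$, and $\partial q_i / \partial x_i = -\tilde{b}_i$ where $\tilde{b}_i := b_i - c_i w_i$, the biased FOC in \eqref{eq:biased-FOC-SOC} reduces to $q_i(x) = \alpha_i \tilde{b}_i x_i$. Expanding $q_i$ and collecting the $x_i$ terms yields the linear system
\[
(1+\alpha_i)\tilde{b}_i\, x_i - c_i \sum_{j\neq i} w_j x_j = a_i, \qquad i \in N,
\]
which I rewrite compactly as $M(\alpha)\,x = a$ with $M_{ii} = (1+\alpha_i)\tilde{b}_i$ and $M_{ij} = -c_i w_j$ for $j\neq i$. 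The SOC $d^2\pi_i^{\alpha_i}/dx_i^2 = -(1+\alpha_i)\tilde{b}_i$ is automatically negative from $|c_i| < b_i$ (giving $\tilde{b}_i > 0$) and $\alpha_i > 0$, so any solution of the FOC system that lies in $X$ is an $\alpha$-equilibrium, and uniqueness of the $\alpha$-equilibrium will follow from uniqueness of solutions to the linear system (together with feasibility).

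The central step is to show $M(\alpha)$ is invertible, for which I plan to verify strict row diagonal dominance: $(1+\alpha_i)\tilde{b}_i > |c_i|(1-w_i)$ for every $i$. When $c_i > 0$, $\tilde{b}_i = b_i - c_i w_i > c_i - c_i w_i = c_i(1-w_i)$ (using $b_i > c_i$), and since $\alpha_i > 0$ the required strict inequality follows. When $c_i < 0$, $\tilde{b}_i = b_i + |c_i| w_i > b_i > |c_i| \ge |c_i|(1-w_i)$. Either way $M(\alpha)$ is strictly diagonally dominant, hence invertible, giving a unique solution $x^*(\alpha) \in \mathbb{R}^n$ to the FOC system.

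The main obstacle is to show $x^*(\alpha) \in X$. When $c_i > 0$, $M(\alpha)$ has positive diagonal and nonpositive off-diagonal entries in addition to strict diagonal dominance, so $M(\alpha)$ is an M-matrix; hence $M(\alpha)^{-1}$ has nonnegative entries and $x^* = M(\alpha)^{-1} a \ge 0$, matching $X_i = \mathbb{R}_+$. When $c_i < 0$ the off-diagonal entries of $M(\alpha)$ become positive, which rules out the M-matrix route. I would instead exploit the auxiliary hypothesis $\sum_{j\neq i} |c_j|/b_j < 1/w_i$ to verify that the box $\prod_i [0, a_i/b_i]$ is invariant under the biased best-reply map $T_i(x) := (a_i + c_i \sum_{j\neq i} w_j x_j)/((1+\alpha_i)\tilde{b}_i)$; combined with the fact that diagonal dominance makes $T$ a contraction on this box, this forces its unique fixed point (which equals $x^*(\alpha)$) to lie in $\prod_i [0, a_i/b_i] \subseteq X$, establishing both existence and uniqueness of the $\alpha$-equilibrium.
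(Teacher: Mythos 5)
Your uniqueness argument is correct, but it takes a genuinely different route from the paper. The paper's proof works through the aggregate price: each biased FOC is solved as $x_i=\frac{a_i+c_i\bar x}{b_i+\alpha_i\left(b_i-c_iw_i\right)}$, and multiplying by $w_i$ and summing collapses the system to the single linear equation \eqref{eq:x-mean-equation-alpha} in $\bar x$, whose unique solution is substituted back to recover the profile. You instead keep the full $n\times n$ system $M(\alpha)x=a$ and verify strict row diagonal dominance; your check of dominance in both sign cases of $c_i$ is correct, as is the observation that the SOC $-\left(1+\alpha_i\right)\tilde b_i<0$ holds automatically, so ``at most one $\alpha$-equilibrium'' is fully established. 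Your M-matrix step for $c_i>0$ moreover settles feasibility ($x^*=M(\alpha)^{-1}a\geq0$, hence $x^*\in X$), a point the paper's proof never addresses explicitly; in that case your argument is, if anything, more complete than the paper's.

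The gap is in the $c_i<0$ feasibility step. The hypothesis $\sum_{j\neq i}\left|c_j\right|/b_j<1/w_i$ contains no information about the intercepts $a_j$, so it cannot deliver invariance of the box $\prod_i\left[0,a_i/b_i\right]$ under $T$. Concretely, take $n=2$, $w_1=w_2=1/2$, $b_1=b_2=1$, $c_1=c_2=-0.9$, $a_1=1$, $a_2=100$: all of the paper's restrictions hold, yet $T_1$ evaluated at $x_2=a_2/b_2=100$ equals $\left(1-0.45\cdot100\right)/\left(\left(1+\alpha_1\right)\tilde b_1\right)<0$, so the box is not invariant; indeed the unique solution of $M(\alpha)x=a$ (the fixed point of your globally contracting map $T$) has $x_1<0\notin X_1$. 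So this step cannot be repaired from the stated hypotheses: with sufficiently asymmetric intercepts the interior FOC solution leaves $X$, and existence of an $\alpha$-equilibrium in the sense of Definition \ref{def:alpha-eq} genuinely requires more than the quoted cross-elasticity bound. To be fair, the paper's own proof is silent on exactly this point — it solves the linear system and implicitly treats the solution as feasible, with existence elsewhere leaning on Assumption \ref{assu:bounded-percieved-BR} via Brouwer — so the part you could not close is precisely the part the paper glosses over; your proposal proves the uniqueness half rigorously, and the existence-in-$X$ half only for $c_i>0$.
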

\begin{proof}
The robust concavity of the payoff function (proved below) implies
that any $\alpha$-equilibrium is fully characterized by the FOC

\begin{equation}
0=\frac{d\pi_{i}^{\alpha_{i}}}{dx_{i}}=q_{i}+\alpha_{i}\left(b_{i}-c_{i}w_{i}\right)x_{i}=a_{i}-\left(b_{i}+\alpha_{i}\left(b_{i}-c_{i}w_{i}\right)\right)x_{i}+c_{i}\bar{x}\Rightarrow x_{i}=\frac{a_{i}+c_{i}\bar{x}}{b_{i}+\alpha_{i}\left(b_{i}-c_{i}w_{i}\right)},\label{eq:x_i_NE-1}
\end{equation}
Multiplying each $i$-th equation by $w_{i}$ and summing up the $n$
equations yields 
\begin{equation}
\bar{x}=\sum_{i\in N}\frac{w_{i}\left(a_{i}+c_{i}\bar{x}\right)}{b_{i}+\alpha_{i}\left(b_{i}-c_{i}w_{i}\right)},\label{eq:x-mean-equation-alpha}
\end{equation}
which is a linear one-variable equation that yields a unique solution
$\bar{x}^{\alpha},$which induces a unique $\alpha$-equilibrium $x^{\alpha}$
by substituting $\bar{x}=x^{\alpha}$ in (\ref{eq:x_i_NE-1}). When
substituting $\alpha=\overrightarrow{1}$, this implies, the existence
of a unique Nash equilibrium, which we denote by $x^{NE}$.
\end{proof}
\begin{lem}
\label{lem-bounded-BR-pos-B_i}Assume that $BR_{i}^{\alpha_{i}}\left(x_{-i}\right)=A_{i}^{\alpha_{i}}+B_{i}^{\alpha_{i}}\sum_{j\neq i}w_{j}x_{j}$,
$A_{i}^{\alpha_{i}}>0$, $B_{i}^{\alpha_{i}}\in\left(0,1\right)$
for each player $i\in N$ and each $\alpha_{i}\in A$. Let $\epsilon>0$,
$\underline{x}_{i}=A_{i}^{\alpha_{i}}$, and $\overline{x}_{i}=\frac{\max_{i\in N}A_{i}^{\alpha_{i}}+\epsilon}{1-\max_{i\in N}B_{i}^{\alpha_{i}}}$.
Then $BR_{i}^{\alpha}\left(\underline{x}_{-i}^{\alpha}\right),BR_{i}^{\alpha_{i}}\left(\overline{x}_{-i}^{\alpha}\right)\in\left(\underline{x}_{i}^{\alpha},\overline{x}_{i}^{\alpha}\right)$
for each $i\in N$.
\end{lem}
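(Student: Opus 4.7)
The plan is to verify the four strict inclusions by direct substitution, exploiting the explicit linear form of the best reply together with the normalization $\sum_{j}w_{j}=1$. Write $\overline{x}:=\frac{\max_{k}A_{k}^{\alpha_{k}}+\epsilon}{1-\max_{k}B_{k}^{\alpha_{k}}}$, so that $\overline{x}_{j}=\overline{x}$ for every $j$ (independent of $j$). Substituting into the linear best-reply formula yields
\[
BR_{i}^{\alpha_{i}}(\underline{x}_{-i})=A_{i}^{\alpha_{i}}+B_{i}^{\alpha_{i}}\sum_{j\neq i}w_{j}A_{j}^{\alpha_{j}},\qquad BR_{i}^{\alpha_{i}}(\overline{x}_{-i})=A_{i}^{\alpha_{i}}+B_{i}^{\alpha_{i}}(1-w_{i})\,\overline{x},
\]
where I used $\sum_{j\neq i}w_{j}=1-w_{i}$.

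For the lower bounds, each of the two expressions strictly exceeds $\underline{x}_{i}=A_{i}^{\alpha_{i}}$: the residual summand is a product/combination of strictly positive terms ($B_{i}^{\alpha_{i}}>0$, $A_{j}^{\alpha_{j}}>0$, $w_{j}>0$, $\overline{x}>0$, $1-w_{i}>0$), so the inequalities $BR_{i}^{\alpha_{i}}(\underline{x}_{-i})>A_{i}^{\alpha_{i}}$ and $BR_{i}^{\alpha_{i}}(\overline{x}_{-i})>A_{i}^{\alpha_{i}}$ are immediate.

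For the upper bound on $BR_{i}^{\alpha_{i}}(\overline{x}_{-i})$, I would rearrange the desired inequality $BR_{i}^{\alpha_{i}}(\overline{x}_{-i})<\overline{x}$ into the equivalent form $A_{i}^{\alpha_{i}}<\overline{x}\bigl(1-B_{i}^{\alpha_{i}}(1-w_{i})\bigr)$. Since $B_{i}^{\alpha_{i}}(1-w_{i})\leq\max_{k}B_{k}^{\alpha_{k}}$, the right-hand side is at least $\overline{x}(1-\max_{k}B_{k}^{\alpha_{k}})=\max_{k}A_{k}^{\alpha_{k}}+\epsilon$, which strictly exceeds $A_{i}^{\alpha_{i}}$. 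For the upper bound on $BR_{i}^{\alpha_{i}}(\underline{x}_{-i})$, using $A_{j}^{\alpha_{j}}\leq\max_{k}A_{k}^{\alpha_{k}}$ and $\sum_{j\neq i}w_{j}<1$ gives $BR_{i}^{\alpha_{i}}(\underline{x}_{-i})\leq(\max_{k}A_{k}^{\alpha_{k}})(1+\max_{k}B_{k}^{\alpha_{k}})$, and the elementary identity $(1+\max B)(1-\max B)=1-(\max B)^{2}<1$ together with $\epsilon>0$ implies that this bound is strictly less than $\overline{x}$.

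There is essentially no conceptual obstacle: the lemma amounts to careful bookkeeping with the explicit linear formulas, using only the strict positivity of the intercepts $A_{i}^{\alpha_{i}}$ and slopes $B_{i}^{\alpha_{i}}$, the contraction condition $\max_{k}B_{k}^{\alpha_{k}}<1$, and the strict positivity of $\epsilon$ (which provides the necessary slack to convert non-strict into strict inequalities).
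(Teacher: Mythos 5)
Your proposal is correct and follows essentially the same route as the paper's proof: direct substitution into the linear best-reply formula, with the lower bounds immediate from positivity and the upper bounds obtained by bounding the intercepts by $\max_{k}A_{k}^{\alpha_{k}}$, the slopes by $\max_{k}B_{k}^{\alpha_{k}}<1$, and using $\epsilon>0$ for strictness. The only cosmetic difference is that you treat the profiles $\underline{x}$ and $\overline{x}$ separately, whereas the paper bounds both at once by $A_{i}^{\alpha_{i}}+B_{i}^{\alpha_{i}}\overline{x}$; the underlying bookkeeping is the same.
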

\begin{proof}
It is immediate that $BR_{i}^{\alpha_{i}}\left(\underline{x}_{-i}\right),BR_{i}^{\alpha_{i}}\left(\overline{x}_{-i}\right)>A_{i}^{\alpha_{i}}=\underline{x}_{i}^{\alpha}.$
Next observe that 
\[
BR_{i}^{\alpha_{i}}\left(\underline{x}_{-i}\right),BR_{i}^{\alpha_{i}}\left(\overline{x}_{-i}\right)\leq A_{i}^{\alpha_{i}}+B_{i}^{\alpha_{i}}\left(\frac{\max_{i\in N}A_{i}^{\alpha_{i}}+\epsilon}{1-\max_{i\in N}B_{i}^{\alpha_{i}}}\right)\leq\max_{i\in N}A_{i}^{\alpha_{i}}+B_{i}^{\alpha_{i}}\left(\frac{\max_{i\in N}A_{i}^{\alpha_{i}}+\epsilon}{1-\max_{i\in N}B_{i}^{\alpha_{i}}}\right)
\]
\[
\frac{\max_{i\in N}A_{i}^{\alpha_{i}}\left(1-\max_{i\in N}B_{i}^{\alpha_{i}}\right)+B_{i}^{\alpha_{i}}\max_{i\in N}A_{i}^{\alpha_{i}}+B_{i}^{\alpha_{i}}\epsilon}{1-\max_{i\in N}B_{i}^{\alpha_{i}}}<\frac{\max_{i\in N}A_{i}^{\alpha_{i}}+\epsilon}{1-\max_{i\in N}B_{i}^{\alpha_{i}}}=\overline{x}_{i}.\qedhere
\]
\end{proof}
\begin{lem}
\label{lem-bounded-BR-neg-B_i}Assume that $BR_{i}^{\alpha_{i}}\left(x_{-i}\right)=A_{i}^{\alpha_{i}}-B_{i}^{\alpha_{i}}\sum_{j\neq i}w_{j}x_{j}$,
$A_{i}^{\alpha_{i}}>0$, $B_{i}^{\alpha_{i}}\in\left(0,1\right)$
for each player $i\in N$ and each $\alpha_{i}\in A$. Let $\underline{x}_{i}^{\alpha_{i}}=\min_{i\in N}\left(A_{i}^{\alpha_{i}}\left(1-B_{i}^{\alpha_{i}}\right)\right)$
and $\overline{x}_{i}^{\alpha_{i}}=A_{i}^{\alpha_{i}}$. Then $BR_{i}^{\alpha}\left(\underline{x}_{-i}^{\alpha}\right),BR_{i}^{\alpha_{i}}\left(\overline{x}_{-i}^{\alpha}\right)\in\left(\underline{x}_{i}^{\alpha},\overline{x}_{i}^{\alpha}\right)$
for each $i\in N$.
\end{lem}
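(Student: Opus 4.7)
My plan is to verify the containment directly from the algebraic form of the best reply, mirroring the pattern established by the preceding Lemma \ref{lem-bounded-BR-pos-B_i}. Write $\underline{x}^{\alpha} \equiv \min_{k\in N} A_k^{\alpha_k}(1-B_k^{\alpha_k})$ for the uniform lower bound. Since $B_i^{\alpha_i}>0$ and $w_j>0$, the map $BR_i^{\alpha_i}(x_{-i})=A_i^{\alpha_i}-B_i^{\alpha_i}\sum_{j\neq i}w_j x_j$ is strictly decreasing in each coordinate. Therefore, on the box $\prod_{j\neq i}[\underline{x}^{\alpha},\overline{x}^{\alpha}_j]$, its maximum is attained at the lower corner $\underline{x}^{\alpha}_{-i}$ and its minimum at the upper corner $\overline{x}^{\alpha}_{-i}$, and the desired inclusion reduces to four corner inequalities.

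The two upper bounds $BR_i^{\alpha_i}(\underline{x}^{\alpha}_{-i}),BR_i^{\alpha_i}(\overline{x}^{\alpha}_{-i})<A_i^{\alpha_i}=\overline{x}_i^{\alpha_i}$ are immediate: $\underline{x}^{\alpha}>0$ because $B_k^{\alpha_k}\in(0,1)$ gives $A_k^{\alpha_k}(1-B_k^{\alpha_k})>0$, and $1-w_i>0$ (assuming $n\geq 2$), so we subtract a strictly positive quantity from $A_i^{\alpha_i}$.

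For the lower bound at the lower corner, $BR_i^{\alpha_i}(\underline{x}^{\alpha}_{-i})=A_i^{\alpha_i}-B_i^{\alpha_i}\underline{x}^{\alpha}(1-w_i)>\underline{x}^{\alpha}$ is equivalent to $A_i^{\alpha_i}>\underline{x}^{\alpha}(1+B_i^{\alpha_i}(1-w_i))$. Substituting the defining bound $\underline{x}^{\alpha}\leq A_i^{\alpha_i}(1-B_i^{\alpha_i})$, it suffices to show $(1-B_i^{\alpha_i})(1+B_i^{\alpha_i}(1-w_i))<1$, which expands to $B_i^{\alpha_i}w_i+(B_i^{\alpha_i})^2(1-w_i)>0$, true by $B_i^{\alpha_i}\in(0,1)$ and $w_i>0$.

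The remaining step, the lower bound $A_i^{\alpha_i}-B_i^{\alpha_i}\sum_{j\neq i}w_j A_j^{\alpha_j}>\underline{x}^{\alpha}$ at the upper corner, is the main obstacle: it couples the $A_j^{\alpha_j}$'s across players and does not follow from $B_i^{\alpha_i}\in(0,1)$ alone, since an $A_i^{\alpha_i}$ that is small relative to the other $A_j^{\alpha_j}$'s can in principle make the left-hand side arbitrarily small. I plan to close it using the ambient model constraints in the setting where the lemma is invoked, namely the price competition of Section \ref{subsec:Price-Competition}: the cross-elasticity bound $\sum_{j\neq i}|c_j|/b_j<1/w_i$ combined with the explicit forms $A_i^{\alpha_i}=a_i/((1+\alpha_i)(b_i-c_iw_i))$ and $B_i^{\alpha_i}=|c_i|/((1+\alpha_i)(b_i-c_iw_i))$ should yield the inequality after routine algebraic manipulation.
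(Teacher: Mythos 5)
Your handling of three of the four required inequalities is correct and complete: the two upper bounds are immediate, and your lower bound at the lower corner (reducing to $(1-B_i^{\alpha_i})(1+B_i^{\alpha_i}(1-w_i))<1$) is a valid and in fact more explicit version of the corresponding step. But as a proof of the lemma the proposal is incomplete at exactly the point you call ``the main obstacle'': the inequality $A_i^{\alpha_i}-B_i^{\alpha_i}\sum_{j\neq i}w_jA_j^{\alpha_j}>\underline{x}^{\alpha}$ is only announced as a plan, and the plan cannot be carried out as described. The cross-elasticity condition $\sum_{j\neq i}\left|c_j\right|/b_j<1/w_i$ constrains the slopes $B_j^{\alpha_j}$ but says nothing about the relative sizes of the intercepts $A_j^{\alpha_j}=a_j/\left((1+\alpha_j)(b_j+w_j\left|c_j\right|)\right)$, and it is precisely intercept heterogeneity that breaks the inequality: with $n=2$, $w_1=w_2=1/2$, $B_1^{\alpha_1}=B_2^{\alpha_2}=1/2$, $A_1^{\alpha_1}=1$, $A_2^{\alpha_2}=100$ (all admissible under the lemma's hypotheses, and reproducible in the price game with $a_1$ small and $a_2$ large), one gets $BR_1^{\alpha_1}\left(\overline{x}_{2}^{\alpha}\right)=1-\tfrac14\cdot 100=-24<\tfrac12=\underline{x}^{\alpha}$. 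So no ``routine algebraic manipulation'' from the stated ingredients can finish the argument; closing it requires an additional hypothesis controlling the spread of the $A_j^{\alpha_j}$'s, e.g.\ $\sum_{j\neq i}w_jA_j^{\alpha_j}\leq A_i^{\alpha_i}$ (automatic in the symmetric case because $\sum_{j\neq i}w_j<1$).

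For comparison, the paper's own proof dispatches this corner in one line via $BR_i^{\alpha_i}\left(\overline{x}_{-i}\right)\geq A_i^{\alpha_i}-B_i^{\alpha_i}A_i^{\alpha_i}=\left(1-B_i^{\alpha_i}\right)A_i^{\alpha_i}\geq\underline{x}_i^{\alpha_i}$, whose first inequality tacitly uses $\sum_{j\neq i}w_jA_j^{\alpha_j}\leq A_i^{\alpha_i}$ --- exactly the cross-player coupling you identified, and not a consequence of $A_i^{\alpha_i}>0$, $B_i^{\alpha_i}\in(0,1)$ alone (contrast with Lemma \ref{lem-bounded-BR-pos-B_i}, where the analogous step genuinely goes through). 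So your diagnosis of where the difficulty sits is sound, and the step you declined to assert is not one for which the paper supplies a valid general argument; nevertheless, a proposal that ends with ``should yield the inequality'' leaves the lemma unproven, and on these hypotheses the promised route cannot deliver it.
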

\begin{proof}
It is immediate that $BR_{i}^{\alpha_{i}}\left(\underline{x}_{-i}\right),BR_{i}^{\alpha_{i}}\left(\overline{x}_{-i}\right)<A_{i}^{\alpha_{i}}=\overline{x}_{i}^{\alpha_{i}}.$
Next observe that 
\[
BR_{i}^{\alpha_{i}}\left(\underline{x}_{-i}\right),BR_{i}^{\alpha_{i}}\left(\overline{x}_{-i}\right)\geq A_{i}^{\alpha_{i}}-B_{i}^{\alpha_{i}}A_{i}^{\alpha_{i}}=\left(1-B_{i}^{\alpha_{i}}\right)A_{i}^{\alpha_{i}}\geq\min_{i\in N}\left(A_{i}^{\alpha_{i}}\left(1-B_{i}^{\alpha_{i}}\right)\right)=\underline{x}_{i}^{\alpha_{i}}.
\]
\end{proof}
We begin by showing that $\Gamma_{p}$ satisfies Assumptions \ref{assu:Strong-concavity}--\ref{assu:consistent-second-adaptation}.
\begin{enumerate}
\item Assumption \ref{assu:Monotone-externalities}: $\frac{d\pi_{i}}{dx_{j}}=\frac{d\pi_{i}}{dq_{i}}\frac{dq_{i}}{dx_{j}}=x_{i}\frac{dq_{i}}{dx_{j}}=x_{i}w_{j}c_{i}\,\Rightarrow$
$\textrm{sgn}\left(\frac{d\pi_{i}}{dx_{j}}\right)=\textrm{sgn}\left(c_{i}\right).$
\item Assumption \ref{assu:Monotone-partial-deriv}: (1) $\frac{\partial\pi_{i}}{\partial x_{i}}=q_{i}>0$,
and (2) $\frac{\partial\pi_{i}}{\partial q_{i}}\frac{\partial q_{i}}{\partial x_{i}}=-x_{i}\cdot\left(b_{i}-w_{i}c_{i}\right)<0$.
\item Assumption \ref{assu:Strong-concavity-prime}:  (which implies Assumption
\ref{assu:Strong-concavity} of robust concavity): (1) $\frac{d}{dx_{i}}\left(\frac{\partial\pi_{i}}{\partial x_{i}}\right)=\frac{dq_{i}}{dx_{i}}=-\left(b_{i}-w_{i}c_{i}\right)<0$,
and (2) $\frac{d}{dx_{i}}\left(\frac{\partial\pi_{i}}{\partial q_{i}}\cdot\frac{\partial q_{i}}{\partial x_{i}}\right)=\frac{d}{dx_{i}}\left(-x_{i}\left(b_{i}-w_{i}c_{i}\right)\right)=-\left(b_{i}-w_{i}c_{i}\right)<0$.
\item Assumption \ref{assu:Strong-com-subs-prime}:  (1) $\frac{d}{dx_{j}}\left(\frac{\partial\pi_{i}}{\partial x_{i}}\right)=\frac{d}{dx_{j}}\left(q_{i}\right)=w_{j}c_{i}$,
(2) $\frac{d}{dx_{j}}\left(\frac{\partial\pi_{i}}{\partial q_{i}}\frac{\partial q_{i}}{\partial x_{i}}\right)=\frac{d}{dx_{j}}\left(x_{i}\cdot\left(w_{i}c_{i}-b\right)\right)=0$.
\item Assumption \ref{assu:bounded-percieved-BR} (bounded perceived best
replies): For each player $i\in N$ and $\alpha_{i}>0$:
\[
0=\frac{d\pi_{i}^{\alpha}\left(x\right)}{dx_{i}}\Leftrightarrow0=\frac{\partial\pi_{i}}{\partial x_{i}}+\alpha_{i}\frac{\partial\pi_{i}}{\partial q_{i}}\cdot\frac{\partial q_{i}}{\partial x_{i}}=q_{i}-\alpha_{i}x_{i}\left(b_{i}-w_{i}c_{i}\right)=a_{i}-b_{i}x_{i}+c_{i}\bar{x}-\alpha_{i}x_{i}\left(b_{i}-w_{i}c_{i}\right)
\]
\begin{equation}
\Leftrightarrow\left(1+\alpha_{i}\right)\left(b_{i}-w_{i}c_{i}\right)x_{i}=a_{i}+c_{i}\sum_{j\neq i}w_{j}x_{j}\Leftrightarrow BR_{i}^{\alpha_{i}}\left(x_{-i}\right)=\frac{a_{i}+c_{i}\sum_{j\neq i}w_{j}x_{j}}{\left(1+\alpha_{i}\right)\left(b_{i}-w_{i}c_{i}\right)}.\label{eq:biased-BR-oligopoly}
\end{equation}
If $c_{i}>0$, then let $\epsilon>0$,
\[
\underline{x}_{i}^{\alpha}=\frac{a_{i}}{\left(1+\alpha_{i}\right)\left(b_{i}-w_{i}c_{i}\right)},\,\,\,\overline{x}_{i}^{\alpha}=\frac{\max_{i\in N}\left(\frac{a_{i}}{\left(1+\alpha_{i}\right)\left(b_{i}-w_{i}c_{i}\right)}\right)+\epsilon}{1-\max_{i\in N}\left(\frac{c_{i}}{\left(1+\alpha_{i}\right)\left(b_{i}-w_{i}c_{i}\right)}\right)},
\]
and Lemma \ref{lem-bounded-BR-pos-B_i} implies that Assumption \ref{assu:bounded-percieved-BR}
is satisfied. If $c_{i}<0$, then let 
\[
\underline{x}_{i}^{\alpha}=\min_{i\in N}\left(\frac{a_{i}}{\left(1+\alpha_{i}\right)\left(b_{i}-w_{i}c_{i}\right)}\left(1-\frac{\left|c_{i}\right|}{\left(1+\alpha_{i}\right)\left(b_{i}-w_{i}c_{i}\right)}\right)\right),\,\,\,\overline{x}_{i}^{\alpha}=\frac{a_{i}}{\left(1+\alpha_{i}\right)\left(b_{i}-w_{i}c_{i}\right)},
\]
and Lemma \ref{lem-bounded-BR-neg-B_i} implies that Assumption \ref{assu:bounded-percieved-BR}
is satisfied.
\item Assumption \ref{assu:consistent-second-adaptation} (consistent secondary
adaptation): The assumption is trivial if $c_{i}>0$. Assume that
$c_{i}<0$. Let $\alpha\in A^{n}$. $i\in N$ and $\hat{x}_{i}\in X_{i}$.
Let $x$ be an $\alpha$-equilibrium. The fact that $x$ is an $\alpha$-equilibrium
implies that for each $j\in N$: 
\[
BR_{j}^{\alpha_{j}}\left(\hat{x}_{i},x_{-i}\right)=\frac{a_{j}+c_{j}\sum_{k\neq j}w_{k}x_{k}+c_{j}w_{i}\left(\hat{x}_{i}-x_{i}\right)}{\left(1+\alpha_{j}\right)\left(b_{j}-w_{j}c_{j}\right)}=x_{j}+\frac{c_{j}w_{i}\left(\hat{x}_{i}-x_{i}\right)}{\left(1+\alpha_{j}\right)\left(b_{j}-w_{j}c_{j}\right)}.
\]
The secondary adaptation is in the same direction as the original
adaptation iff 
\[
sgn\left(\sum_{j\neq i}BR_{j}^{\alpha_{j}}\left(\hat{x}_{i},x_{-i}\right)+\hat{x}_{i}-\sum_{j}x_{j}\right)=sgn\left(\hat{x}_{i}-x_{i}\right)\Leftrightarrow
\]
\[
\left|\sum_{j\neq i}\left(BR_{j}^{\alpha_{j}}\left(\hat{x}_{i},x_{-i}\right)-x_{j}\right)\right|<\left|\left(\hat{x}_{i}-x_{i}\right)\right|\Leftrightarrow\left|w_{i}\sum_{j\neq i}\frac{c_{j}\left(\hat{x}_{i}-x_{i}\right)}{\left(1+\alpha_{j}\right)\left(b_{j}-w_{j}c_{j}\right)}\right|<\left|\left(\hat{x}_{i}-x_{i}\right)\right|
\]
\[
\Leftrightarrow w_{i}\sum_{j\neq i}\frac{\left|c_{j}\right|}{\left(1+\alpha_{j}\right)\left(b_{j}-w_{j}c_{j}\right)}<\frac{1}{w_{i}},
\]
which always hold due to the assumption that $\sum_{j\neq i}\frac{\left|c_{j}\right|}{b_{j}+w_{j}\left|c_{j}\right|}<\frac{1}{w_{i}}.$
\end{enumerate}
Proposition \ref{pro:alpha} implies that $\alpha_{i}^{*}<1$ (Part
(1)). Parts (2) and (3) immediately implied from Proposition \ref{prop:Pareto-dominate}
if $c_{i}>0$. We are left with proving parts (2) and (3) when $c_{i}<0$.
Eq. (\ref{eq:x-mean-equation-alpha}) implies that $\bar{x}$ is the
unique solution to 
\[
0=f\left(\bar{x},\alpha\right)\equiv\sum_{i\in N}\frac{w_{i}\left(a_{i}+c_{i}\bar{x}\right)}{b_{i}+\alpha_{i}\left(b_{i}-c_{i}w_{i}\right)}-\bar{x}.
\]
Observe that $f\left(\bar{x},\alpha\right)$ is decreasing in both
parameters, which implies that increasing $\alpha_{i}$ decreases
the unique $\bar{x}$ satisfying $f\left(\bar{x},\alpha\right)=0$.
Thus, $\bar{x}\left(\alpha\right)$ is decreasing in each $\alpha_{i}$,
which implies (due to Eq. (\ref{eq:x_i_NE-1})) that each $x_{j}$
is decreasing in each $\alpha_{i}$. The fact that $\alpha_{i}^{*}<1$
for each $i$ implies Part (2): $x_{i}^{*}>x_{i}^{NE}$. Finally Part
(3) is implied by 
\[
\pi_{i}\left(x^{*}\right)<\pi_{i}\left(x_{i}^{*},x_{-i}^{NE}\right)<\pi_{i}\left(x^{NE}\right),
\]
 where the first inequality is due to the positive externalities and
$x_{i}^{*}>x_{i}^{NE}$ and the second inequality is implied by $x_{i}^{NE}$
being the unique best reply to $x_{-i}^{NE}$.

\subsection{\label{subsec:proof-of-symmetric-oligopoly} Proof of Proposition
\ref{prop:symmetric-oligopoly-prices} (Symmetric Oligopoly)}

An analogous argument to Lemma \ref{lem:unqiue-alpha-eq-Price-competition-game}
implies that exists a unique $\left(x_{i},\alpha_{-i}\right)$-equilibrium
for each $x_{i}\in X_{i}$and $\alpha_{-i}\in A^{n-1}$. Let $x$
be an $\left(x_{i},\alpha_{-i}\right)$-equilibrium in which all firms
have the same level of biasedness (i.e., $\alpha_{j}-\alpha_{k}$
for each $j,k\neq i$). Eq. (\ref{eq:biased-BR-oligopoly}) implies
that for each $j\neq i$: 

\[
x_{j}=BR_{j}^{\alpha_{j}}\left(x_{-j}\right)=\frac{a_{j}+c_{j}\sum_{k\neq j}\frac{1}{n}x_{k}}{\left(1+\alpha_{j}\right)\left(b_{j}-\frac{1}{n}c_{j}\right)}=\frac{a+\tilde{c}\sum_{k\neq j}x_{k}}{\left(1+\alpha_{j}\right)\tilde{b}},
\]
where we simplified the notation by having $\tilde{c}=\frac{c_{j}}{n}$
and $\tilde{b}=b_{j}-\frac{c_{j}}{n}$. By symmetry, $x_{j}=x_{k}$
for each $j,k\neq i$. This implies that 
\[
x_{j}=\frac{a+\tilde{c}\left(\left(n-2\right)x_{j}+x_{i}\right)}{\left(1+\alpha_{j}\right)\tilde{b}}\,\Leftrightarrow\,x_{j}=\frac{a+\tilde{c}x_{i}}{\left(1+\alpha_{j}\right)\tilde{b}-\tilde{c}\left(n-2\right)}.
\]

This implies that
\[
x_{j}\left(x_{i},\alpha_{-i}^{*}\right)=\frac{a+\tilde{c}x_{i}}{\left(1+\alpha_{j}^{*}\right)\tilde{b}-\tilde{c}\left(n-2\right)}\,\Rightarrow\,\frac{dx_{j}\left(x_{i},\alpha_{-i}^{*}\right)}{dx_{i}}=\frac{1}{\left(1+\alpha_{j}^{*}\right)\frac{\tilde{b}}{\tilde{c}}-\left(n-2\right)},
\]
which, in turn, implies due to Claim \ref{pro:stackelberg-leader}
that $x_{j}^{*}$ must satisfy 
\begin{equation}
\alpha_{j}^{*}-1=\sum_{j\neq i}\frac{\text{d\ensuremath{x_{j}\left(x_{i},\alpha_{-i}^{*}\right)}}}{dx_{i}}\cdot\frac{\frac{\partial q_{i}}{\partial x_{j}}}{\frac{\partial q_{i}}{\partial x_{i}}}=\sum_{j\neq i}\frac{1}{\left(1+\alpha_{j}^{*}\right)\frac{\tilde{b}}{\tilde{c}}-\left(n-2\right)}\cdot\frac{\tilde{c}}{-\tilde{b}}\label{eq:alpha-condition-1}
\end{equation}
\[
=-\frac{\left(n-1\right)\frac{\tilde{c}}{\tilde{b}}}{\left(1+\alpha_{j}^{*}\right)\frac{\tilde{b}}{\tilde{c}}-\left(n-2\right)}\Rightarrow\left(1-\left(\alpha_{j}^{*}\right)^{2}\right)\frac{\tilde{b}}{\tilde{c}}-\left(1-\alpha_{j}^{*}\right)\left(n-2\right)=\left(n-1\right)\frac{\tilde{c}}{\tilde{b}}
\]
\[
\Rightarrow\frac{\tilde{b}}{\tilde{c}}\left(\alpha_{j}^{*}\right)^{2}-\alpha_{j}^{*}\left(n-2\right)+\left(n-1\right)\frac{\tilde{c}}{\tilde{b}}+n-2-\frac{\tilde{b}}{\tilde{c}}=0
\]
\[
\Rightarrow\alpha_{j}^{*}=\frac{n-2\pm\sqrt{\left(n-2\right)^{2}-4\left(n-1\right)-4\frac{\tilde{b}}{\tilde{c}}\left(n-2-\frac{\tilde{b}}{\tilde{c}}\right)}}{2\frac{\tilde{b}}{\tilde{c}}}
\]
\[
\Rightarrow\alpha_{j}^{*}\left(\left|\frac{\tilde{b}}{\tilde{c}}\right|,n\right)=\frac{n-2+\sqrt{\left(n-2\right)^{2}-4\left(n-1\right)-4\left|\frac{\tilde{b}}{\tilde{c}}\right|\left(n-2-\left|\frac{\tilde{b}}{\tilde{c}}\right|\right)}}{2\left|\frac{\tilde{b}}{\tilde{c}}\right|}.
\]
.It is straightforward to verify that $\alpha_{j}^{*}\left(\left|\frac{\tilde{b}}{\tilde{c}}\right|,n\right)$
is increasing in both parameters and that $\lim_{n\rightarrow\infty}\alpha_{i}^{*}\left(\left|\frac{\tilde{b}}{\tilde{c}}\right|,n\right)=\lim_{\left|\frac{b}{c}\right|\rightarrow\infty}\alpha_{i}^{*}\left(\left|\frac{\tilde{b}}{\tilde{c}}\right|,n\right)=1$.
The fact that $\left|\frac{\tilde{b}}{\tilde{c}}\right|=\left|\frac{b-\frac{c}{n}}{\frac{c}{n}}\right|=n\left|\frac{b}{c}\right|-1$
is an increasing function of $n$ and $\left|\frac{b}{c}\right|$
implies that $\alpha_{i}^{*}\left(\left|\frac{b}{c}\right|,n\right)$
is increasing in both parameters and that $\lim_{n\rightarrow\infty}\alpha_{i}^{*}\left(\left|\frac{b}{c}\right|,n\right)=\lim_{\left|\frac{b}{c}\right|\rightarrow\infty}\alpha_{i}^{*}\left(\left|\frac{b}{c}\right|,n\right)=1$
  Further observe that (\ref{eq:biased-BR-oligopoly}) implies
that the symmetric NAE and NE prices are 
\begin{equation}
x_{j}^{*}=\frac{a}{\left(1+\alpha\right)b-c\left(1+\frac{\alpha}{n}\right)},\,\,x_{j}^{NE}=\frac{a}{2b-\left(1+\frac{1}{n}\right)c}.\label{eq:symmetric-NAE-prices}
\end{equation}

\subsection{\label{subsec:Proof-of-advertising-result}Proof of Prop. \ref{prop:advertising-result}
(Advertising Competition)}
\begin{lem}
\label{lem:unqiue-alpha-eq-Price-competition-game-1}Game $\Gamma_{a}$
admits a unique $\alpha$-equilibrium for any $\alpha\in\mathbb{R}_{++}$.
\end{lem}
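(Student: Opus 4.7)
The plan is to exhibit the biased first-order conditions in closed form and reduce them to a linear system in the square roots of the two strategies. Since $\frac{\partial q_{i}}{\partial x_{i}}=\frac{b_{i}+c_{i}\sqrt{x_{-i}}}{2\sqrt{x_{i}}}$, $\frac{\partial\pi_{i}}{\partial q_{i}}=p_{i}$, and $\frac{\partial\pi_{i}}{\partial x_{i}}=-1$, the biased FOC in \eqref{eq:biased-FOC-SOC} rewrites as
\[
-1+\frac{p_{i}\alpha_{i}\left(b_{i}+c_{i}\sqrt{x_{-i}}\right)}{2\sqrt{x_{i}}}=0,\qquad\text{equivalently}\qquad \sqrt{x_{i}}=\frac{p_{i}\alpha_{i}b_{i}}{2}+\frac{p_{i}\alpha_{i}c_{i}}{2}\sqrt{x_{-i}}.
\]
Setting $u_{i}:=\sqrt{x_{i}}$, $A_{i}:=\frac{p_{i}\alpha_{i}b_{i}}{2}$, and $B_{i}:=\frac{p_{i}\alpha_{i}c_{i}}{2}$, the pair of FOCs becomes the affine $2\times2$ system $u_{i}=A_{i}+B_{i}u_{-i}$. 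Eliminating $u_{-i}$ yields $u_{1}\left(1-B_{1}B_{2}\right)=A_{1}+B_{1}A_{2}$, which has at most one solution.

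Next, I would observe that the SOC is automatic at any interior FOC solution. Differentiating the FOC expression once more gives
\[
\frac{d^{2}\pi_{i}^{\alpha_{i}}}{dx_{i}^{2}}=-\frac{p_{i}\alpha_{i}\left(b_{i}+c_{i}\sqrt{x_{-i}}\right)}{4\,x_{i}^{3/2}},
\]
which is strictly negative exactly when $b_{i}+c_{i}\sqrt{x_{-i}}>0$; but this is the very positivity condition required by the FOC to produce $\sqrt{x_{i}}>0$. Hence every interior critical point with positive coordinates is automatically an $\alpha$-equilibrium, and the algebraic solution of the linear system above is the \emph{only} candidate.

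The remaining task, which I expect to be the main technical obstacle, is to verify that the candidate is feasible and that an $\alpha$-equilibrium in fact exists. For $c_{i}>0$ the strategy space is unbounded ($X_{i}=\mathbb{R}_{+}$) and all coefficients $A_{i},B_{i}$ are positive, so the closed-form solution is visibly a strictly positive fixed point in the interior. For $c_{i}<0$ one has $B_{i}<0$ and the budget cap $X_{i}=[0,b_{i}/|c_{i}|]$ becomes relevant; here I would invoke the standing bounds $|c_{i}|<1/p_{i}$ and $|c_{i}|<b_{i}/(b_{-i}p_{-i})$ to check (i) that the determinant $1-B_{1}B_{2}$ does not vanish, (ii) that the solution satisfies $u_{i}>0$, and (iii) that $u_{i}^{2}\le b_{i}/|c_{i}|$, so the candidate lies strictly in the interior of $X_{1}\times X_{2}$. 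Existence of the $\alpha$-equilibrium can then be read off from the explicit formula, or alternatively from Claim~1 of the main text once Assumptions \ref{assu:Strong-concavity} and \ref{assu:bounded-percieved-BR} are verified for $\Gamma_{a}$ (which is part of Proposition \ref{prop:advertising-result}). The uniqueness assertion itself is essentially a one-line consequence of the affine structure; the bookkeeping for the substitutes case is where almost all of the calculation will live.
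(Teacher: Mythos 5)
Your proposal follows essentially the same route as the paper's proof: the biased FOC reduces to an affine system in $\sqrt{x_{i}}$, elimination gives the unique closed-form solution, and concavity guarantees that the FOC fully characterizes $\alpha$-equilibria (the paper cites the robust-concavity verification carried out within the proof of Proposition \ref{prop:advertising-result}, while you verify the SOC by the same direct computation). The feasibility bookkeeping you defer (positivity of the solution and non-vanishing of $1-B_{1}B_{2}$) is likewise not spelled out in the paper, which simply records the closed-form solution, so your argument matches the paper's in both substance and level of detail.
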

\begin{proof}
The robust concavity of the payoff function (proved below) implies
that any $\alpha$-equilibrium is fully characterized by the FOC

\begin{equation}
0=\frac{d\pi_{i}^{\alpha_{i}}}{dx_{i}}=p_{i}\alpha_{i}\frac{b_{i}+c_{i}\sqrt{x_{-i}}}{2\sqrt{x_{i}}}-1\Rightarrow\sqrt{x_{i}}=\frac{p_{i}\alpha_{i}\left(b_{i}+c_{i}\sqrt{x_{-i}}\right)}{2},\label{eq:x_i_NE-1-1}
\end{equation}
Substituting $\sqrt{x_{-i}}=\frac{p_{-i}\alpha_{-i}\left(b_{-i}+c_{-i}\sqrt{x_{i}}\right)}{2}$
yields the following unique solution: 
\begin{equation}
2\sqrt{x_{i}}=p_{i}\alpha_{i}\left(b_{i}+\frac{c_{i}p_{-i}\alpha_{-i}\left(b_{-i}+c_{-i}\sqrt{x_{i}}\right)}{2}\right)\Rightarrow x_{i}\left(\alpha\right)=\left(\frac{p_{i}\alpha_{i}\left(2b_{i}+c_{i}p_{-i}\alpha_{-i}b_{-i}\right)}{4-p_{i}p_{-i}\alpha_{i}\alpha_{-i}c_{i}c_{-i}}\right)^{2}.\label{eq:x_i_alpha-advertsiing}
\end{equation}
 When substituting $\alpha=\overrightarrow{1}$, this implies, the
existence of a unique Nash equilibrium: $x_{i}^{NE}=\left(\frac{p_{i}\left(2b_{i}+c_{i}p_{-i}b_{-i}\right)}{4-p_{i}p_{-i}c_{i}c_{-i}}\right)^{2}.$
\end{proof}

We begin by showing that $\Gamma_{a}$ satisfies Assumptions \ref{assu:Monotone-externalities}--\ref{assu:consistent-second-adaptation}:
\begin{enumerate}
\item Assumption \ref{assu:Monotone-externalities}: $\frac{d\pi_{i}}{dx_{-i}}=\frac{d\pi_{i}}{dq_{i}}\frac{dq_{i}}{dx_{-i}}=p_{i}\frac{dq_{i}}{dx_{-i}}=p_{i}\frac{c_{i}\sqrt{x_{i}}}{2\sqrt{x_{-i}}}\,\Rightarrow$
$\textrm{sgn}\left(\frac{d\pi_{i}}{dx_{-i}}\right)=\textrm{sgn}\left(c_{i}\right).$
\item Assumption \ref{assu:Monotone-partial-deriv}: (1) $\frac{\partial\pi_{i}}{\partial x_{i}}=-1<0$,
and (2) $\frac{\partial\pi_{i}}{\partial q_{i}}\frac{\partial q_{i}}{\partial x_{i}}=p_{i}\left(\frac{b_{i}+c_{i}\sqrt{x_{-i}}}{2\sqrt{x_{i}}}\right)>0$,
where the inequality is implied by the assumption that either $c_{i}>0$
or $x_{-i}\leq M_{-i}=\frac{b_{i}}{\left|c_{i}\right|}$.
\item Assumption \ref{assu:Strong-concavity-prime} (which implies Assumption
\ref{assu:Strong-concavity} of robust concavity): (1) $\frac{d}{dx_{i}}\left(\frac{\partial\pi_{i}}{\partial x_{i}}\right)=0$,
and (2) $\frac{d}{dx_{i}}\left(\frac{\partial\pi_{i}}{\partial q_{i}}\cdot\frac{\partial q_{i}}{\partial x_{i}}\right)=\frac{d}{dx_{i}}\left(p_{i}\left(\frac{b_{i}+c_{i}\sqrt{x_{-i}}}{2\sqrt{x_{i}}}\right)\right)<0$.
\item Assumption \ref{assu:Strong-com-subs-prime}: (1) $\frac{d}{dx_{-i}}\left(\frac{\partial\pi_{i}}{\partial x_{i}}\right)=\frac{d}{dx_{-i}}\left(-1\right)=0$,
(2) $\frac{d}{dx_{-i}}\left(\frac{b_{i}+c_{i}\sqrt{x_{-i}}}{2\sqrt{x_{i}}}\right)=\frac{c_{i}}{4\sqrt{x_{i}}\sqrt{x_{-i}}}$.
\item Assumption \ref{assu:bounded-percieved-BR}: For each player $i\in N$
and $\alpha_{i}>0$:
\[
0=\frac{d\pi_{i}^{\alpha_{i}}}{dx_{i}}=p_{i}\alpha_{i}\frac{b_{i}+c_{i}\sqrt{x_{-i}}}{2\sqrt{x_{i}}}-1\Rightarrow\sqrt{BR_{i}^{\alpha_{i}}\left(x_{-i}\right)}=\frac{p_{i}\alpha_{i}\left(b_{i}+c_{i}\sqrt{x_{-i}}\right)}{2}
\]
If $c_{i}>0$, then an analogous argument to Lemma \ref{lem-bounded-BR-pos-B_i}
(where $\sqrt{BR_{i}^{\alpha_{i}}\left(x_{-i}\right)}$ and $\sqrt{x_{-i}}$
replaces $BR_{i}^{\alpha_{i}}\left(x_{-i}\right)$ and $x_{-i}$)
implies that Assumption \ref{assu:bounded-percieved-BR} is satisfied
with respect to any $\epsilon>0$ and
\[
\underline{x}_{i}^{\alpha}=\left(\frac{p_{i}\alpha_{i}b_{i}}{2}\right)^{2},\,\,\,\overline{x}_{i}^{\alpha}=\left(\frac{\max_{i\in N}\left(\frac{p_{i}\alpha_{i}b_{i}}{2}\right)+\epsilon}{1-\max_{i\in N}\left(\frac{p_{i}\alpha_{i}c_{i}}{2}\right)}\right)^{2},
\]
If $c_{i}<0$, then Lemma \ref{lem-bounded-BR-pos-B_i} implies that
Assumption \ref{assu:bounded-percieved-BR} is satisfied with respect
to. 
\[
\underline{x}_{i}^{\alpha}=\left(\min_{i\in N}\left(\frac{p_{i}\alpha_{i}b_{i}}{2}\left(1-\frac{p_{i}\alpha_{i}c_{i}}{2}\right)\right)\right)^{2},\,\,\,\overline{x}_{i}^{\alpha}=\left(\frac{p_{i}\alpha_{i}b_{i}}{2}\right)^{2}.
\]
\item Assumption \ref{assu:consistent-second-adaptation} (consistent secondary
adaptation): The assumption is trivial due to having two players.
\end{enumerate}
Next we prove part (1) (which implies parts (2-3) due to Proposition
\ref{prop:Pareto-dominate}). Taking the derivative of Eq. (\ref{eq:x_i_NE-1-1})
implies that $\frac{dx_{-i}\left(x_{i},\alpha_{-i}^{*}\right)}{dx_{i}}=\frac{\sqrt{x_{-i}}}{\sqrt{x_{i}}}\alpha_{-i}^{*}\frac{p_{-i}}{2}c_{-i}$.
It is immediate that $\frac{\partial q_{i}}{\partial x_{-i}}=\frac{c_{i}\sqrt{x_{i}}}{2\sqrt{x_{-i}}}$
and $\frac{\partial q_{i}}{\partial x_{-i}}=\frac{1}{2\alpha_{i}^{*}\frac{p_{i}}{2}}$.
Thus, Claim \ref{claim:Stackelberg-derivative} implies that $x_{i}^{*}$
must satisfy 
\begin{equation}
\alpha_{i}^{*}-1=\frac{\sqrt{x_{-i}}}{\sqrt{x_{i}}}\alpha_{-i}^{*}\frac{p_{-i}}{2}c_{-i}\frac{c_{i}\sqrt{x_{i}}2\alpha_{i}^{*}\frac{p_{i}}{2}}{2\sqrt{x_{-i}}}=\alpha_{1}^{*}\alpha_{2}^{*}\frac{p_{1}p_{2}c_{1}c_{2}}{4}.\label{eq:alpha-condition-2}
\end{equation}
 Observe that the RHS of (\ref{eq:alpha-condition-2}) remains the
same when swapping $i$ and $j$. This implies that $\alpha_{j}^{*}$
and $\alpha_{i}^{*}$ must be equal. The resulting one-variable quadratic
equation has two solutions: $\alpha_{1}^{*}=\alpha_{2}^{*}=\frac{2}{1+\sqrt{1-c_{1}c_{2}p_{1}p_{2}}}\in\left(1,2\right)$
and $\hat{\alpha}_{1}=\hat{\alpha}_{2}=\frac{2}{1-\sqrt{1-c_{1}c_{2}p_{1}p_{2}}}>2$,
and it is easy to verify that only the first solution $\alpha_{1}^{*}=\alpha_{2}^{*}$
satisfies the SOC. 

\subsection{\label{subsec:Proof-of-Proposition-team-production}Proof of Proposition
(Team Production)}
\begin{enumerate}
\item Assumption \ref{assu:Monotone-externalities} (monotone externalities):
$\frac{d\pi_{i}}{dx_{-i}}=\frac{d\pi_{i}}{dq_{i}}\frac{dq_{i}}{dx_{-i}}=\frac{dq_{i}}{dx_{-i}}>0$.
\item Assumption \ref{assu:Monotone-partial-deriv}: (1) $\frac{\partial\pi_{i}}{\partial x_{i}}=-1<0$,
and (2) $\frac{\partial\pi_{i}}{\partial q_{i}}\frac{\partial q_{i}}{\partial x_{i}}=\frac{\partial q_{i}}{\partial x_{i}}>0$.
\item Assumption 3' (which implies Assumption \ref{assu:Strong-concavity}
of robust concavity): (1) $\frac{d}{dx_{i}}\left(\frac{\partial\pi_{i}}{\partial x_{i}}\right)=0$,
and (2) $\frac{d}{dx_{i}}\left(\frac{\partial\pi_{i}}{\partial q_{i}}\cdot\frac{\partial q_{i}}{\partial x_{i}}\right)=\frac{d}{dx_{i}}\left(\frac{\partial q_{i}}{\partial x_{i}}\right)=\frac{d^{2}q_{i}}{dx_{i}^{2}}<0$.
\item Assumption 4': (1) $\frac{d}{dx_{j}}\left(\frac{\partial\pi_{i}}{\partial x_{i}}\right)=\frac{d}{dx_{j}}\left(-1\right)=0$,
(2) $\frac{d}{dx_{j}}\left(\frac{\partial\pi_{i}}{\partial q_{i}}\frac{\partial q_{i}}{\partial x_{i}}\right)=\frac{d}{dx_{j}}\left(\frac{\partial q_{i}}{\partial x_{i}}\right)=\frac{d^{2}q_{i}}{dx_{i}dx_{j}}>0$.
\item Assumption \ref{assu:bounded-percieved-BR} (bounded perceived best
replies): Fix bias profile $\alpha\in A^{n}$. For each player $i\in N$,
let $z_{i}$ be the true demand sensitivity that is perceived as equal
to one by the biased player $i$, i.e., $\frac{\partial q_{i}^{\alpha_{i}}}{\partial x_{i}}\left(z_{i}\right)=1$.
Due to Assumption \ref{assu:marginal-contirbution-small-large-efforts},
there exist symmetric profiles $\underline{x}_{\alpha}<\bar{x}_{\alpha}$
such that for each player $i\in N$: 
\[
\frac{dq_{i}\left(\overline{x}_{\alpha}\right)}{dx_{i}}<z_{i}<\frac{dq_{i}\left(\underline{x}_{\alpha}\right)}{dx_{i}}\,\,\Rightarrow\,\,\frac{dq_{i}^{\alpha_{i}}\left(\overline{x}_{\alpha}\right)}{dx_{i}}<1<\frac{dq_{i}^{\alpha_{i}}\left(\underline{x}_{\alpha}\right)}{dx_{i}}
\]
\[
\Rightarrow\,\,\frac{d\pi_{i}^{\alpha_{i}}\left(\overline{x}_{\alpha}\right)}{dx_{i}}<0<\frac{d\pi_{i}^{\alpha_{i}}\left(\underline{x}_{\alpha}\right)}{dx_{i}}\,\,\Rightarrow\,\,BR_{i}^{\alpha_{i}}\left(\underline{x}_{\alpha}\right),BR_{i}^{\alpha_{i}}\left(\overline{x}_{\alpha}\right)\in\left(\underline{x}_{\alpha},\overline{x}_{\alpha}\right).
\]
\item Assumption \ref{assu:consistent-second-adaptation} is trivial due
to having strategic complements. 
\end{enumerate}

\subsection{\label{subsec:Proof-of-Proposition-merger}Proof of Proposition \ref{prop:merger}
(Market Structure Analysis)}

Appendix \ref{subsec:proof-of-symmetric-oligopoly} shows that in
an NAE the equilibrium biases pre-merger are 
\[
\alpha_{i}^{pre}\left(\left|\frac{\tilde{b}}{\tilde{c}}\right|,n\right)=\frac{n-2+\sqrt{\left(n-2\right)^{2}-4\left(n-1\right)-4\left|\frac{\tilde{b}}{\tilde{c}}\right|\left(n-2-\left|\frac{\tilde{b}}{\tilde{c}}\right|\right)}}{2\left|\frac{\tilde{b}}{\tilde{c}}\right|}=\frac{c+\sqrt{c^{2}-36bc+36b^{2}}}{6b-2c}
\]
 for $n=3$, $\tilde{b}=b-\frac{c}{n}$ and $\tilde{c}=\frac{c}{n}$,
resulting in prices $x_{i}^{pre}=\frac{a}{\left(1+\alpha^{*}\right)b-c\left(1+\frac{\alpha^{*}}{n}\right)}=\frac{3a}{3b(1+\alpha^{*})-c(3+\alpha^{*})}=\frac{6a}{\sqrt{36b^{2}-36bc+c^{2}}+6b-5c}$.
Standard analysis shows that when the firm payoffs are $\pi_{i}^{mc}=(x_{i}^{mc}-mc_{i})q_{i}$
then $BR_{i}(x_{-i}^{mc})=\frac{a+\tilde{b\cdot}mc_{i}+\tilde{c}\sum_{j\neq i}x_{j}^{mc}}{2\tilde{b}}$.
In a symmetric Nash equilibrium, and assuming all firms have the same
marginal costs, we can sum up the conditions $x_{i}^{mc}=BR_{i}(x_{-i}^{mc})$
to receive 
\[
\bar{x}^{mc}=\frac{a+\tilde{b\cdot}mc_{i}+\tilde{c}(n-1)\bar{x}^{mc}}{2\tilde{b}}\Rightarrow\bar{x}^{mc}=\frac{a+\tilde{b}\cdot mc}{2\tilde{b}-(n-1)\tilde{c}}.
\]
By solving $\bar{x}^{pre}=\bar{x}^{mc}$, the economist estimates
the marginal cost as 
\[
mc_{i}=\frac{3a\left(6b-3c-\sqrt{36b^{2}-36bc+c^{2}}\right)}{(3b-c)\left(\sqrt{36b^{2}-36bc+c^{2}}+6b-5c\right)}
\]
 which is always positive when $b>|c|$ and $a>0.$

When firms $2$ and $3$ merge, because the demand for goods $2$
and $3$ is symmetric, we can assume that they will set the same price
$x_{23}^{post}$ for both goods. The joint firm's true payoff will
be $\pi_{23}^{post}=x_{23}^{post}(q_{2}+q_{3})$, while firm $1$
remains with the same payoff, yielding weights $w_{1}=1/3$ and $w_{23}=2/3$
in \eqref{eq:linear-bertrand-demand}. Proposition \eqref{prop:price-competetion-unqiue-NAE}
shows that for a duopoly, the long run unique NAE yields biases $\alpha_{1}^{post}=\alpha_{23}^{post}=\sqrt{1-\frac{2c^{2}}{(3b-c)(3b-2c)}}$.
Comparing to $\alpha^{pre}$, we find that $\alpha^{pre}>\alpha^{post}\iff c>0$.
Applying the implicit function theorem to \eqref{eq:x-mean-equation-alpha},
let $g=\sum_{i\in N}\frac{w_{i}\left(a_{i}+c_{i}\bar{x}\right)}{b_{i}+\alpha\left(b_{i}-c_{i}w_{i}\right)}-\bar{x}$,
then $\frac{d\bar{x}}{d\alpha}=-\frac{\frac{\partial g}{\partial\alpha}}{\frac{\partial g}{d\bar{x}}}$.
Because $b_{i}>w_{i}c_{i}$, $a_{i}>0$, and $\bar{x}\ge0$, then
$-\frac{\partial g}{d\alpha}=\sum_{i\in N}\frac{\left(b_{i}-c_{i}w_{i}\right)w_{i}\left(a_{i}+c_{i}\bar{x}\right)}{\left(b_{i}+\alpha\left(b_{i}-c_{i}w_{i}\right)\right)^{2}}>0$,
while 
\[
\frac{\partial g}{d\bar{x}}=-\frac{(1+\alpha)^{2}\left(b_{1}w_{23}(b_{23}-c_{23})+b_{23}w_{1}(b_{1}-c_{1})\right)+\alpha(2+\alpha)w_{1}w_{23}c_{1}c_{23}}{(b_{1}+(b_{1}-c_{1}w_{1})\alpha)(b_{23}+(b_{23}-c_{23}w_{23})\alpha)}<0.
\]
 Hence, $\alpha^{pre}>\alpha^{post}$ implies $\bar{x}^{post}(\alpha^{pre})<\bar{x}^{post}(\alpha^{post})$
and by Equation \eqref{eq:x_i_NE-1}, $x_{i}^{post}(\alpha^{post})>x_{i}^{post}(\alpha^{pre})$.
When the economist predicts the post-merger prices by assuming the
merged firm's payoff is $\pi_{23}^{mc}=(x_{23}^{mc}-mc_{23})(q_{2}+q_{3})$
while firm $1$ has the same payoff function as before the merger,
the resulting equilibrium prices are 
\[
x_{1}^{mc,post}=\frac{a\left(c^{2}\left(\sqrt{36b^{2}-36bc+c^{2}}-5c\right)-144b^{2}c+108b^{3}+54bc^{2}\right)}{(3b-c)\left(6b^{2}-6bc+c^{2}\right)\left(\sqrt{36b^{2}-36bc+c^{2}}+6b-5c\right)},
\]
\[
x_{23}^{mc,post}=\frac{a\left(c\left(\sqrt{36b^{2}-36bc+c^{2}}+7c\right)+36b^{2}-36bc\right)}{\left(6b^{2}-6bc+c^{2}\right)\left(\sqrt{36b^{2}-36bc+c^{2}}+6b-5c\right)}.
\]
The difference $x_{1}^{post}(\alpha^{pre})-x_{1}^{mc,post}=${\scriptsize{}
\[
\frac{a(3b-2c)\left(594b^{2}c^{2}+\left(126b^{2}c-108b^{3}-48bc^{2}+7c^{3}\right)\sqrt{36b^{2}-36bc+c^{2}}-1080b^{3}c+648b^{4}-138bc^{3}+13c^{4}\right)}{(3b-c)\left(6b^{2}-6bc+c^{2}\right)\left(18b^{2}-18bc+5c^{2}\right)\left(\sqrt{36b^{2}-36bc+c^{2}}+6b-5c\right)},
\]
} which is positive when $b>c>0$ as confirmed with Mathematica software
(code in supplementary Appendix). . Similarly, the difference, $x_{23}^{post}(\alpha^{pre})-x_{23}^{mc,post}=${\scriptsize{}
\[
\frac{2ac\left(\left(17c^{3}-144b^{2}c+108b^{3}+15bc^{2}\right)-\left(18b^{2}-15bc+c^{2}\right)\sqrt{36b^{2}-36bc+c^{2}}\right)}{\left(6b^{2}-6bc+c^{2}\right)\left(\sqrt{36b^{2}-36bc+c^{2}}+6b-5c\right)\left((7c-12b)\sqrt{36b^{2}-36bc+c^{2}}-72b^{2}+90bc-23c^{2}\right)}
\]
} is also positive as confirmed with Mathematica software (code in
supplementary Appendix).
\end{document}